\def\bR{\mathbb{R}}
\def\bN{\mathbb{N}}
\def\bZ{\mathbb{Z}}
\def\cV{\mathcal{V}}
\def\cF{\mathcal{F}}
\def\cG{\mathcal{G}}
\def\cL{\mathcal{L}}
\def\cN{\mathcal{N}}
\def\cE{\mathcal{E}}
\def\cK{\mathcal{K}}
\def\ph{\varphi}
\def\wt{\widetilde}
\def\indic{\hbox{\raise-2pt \hbox{\indbf 1}}}
\def\*{{\hfill\break\null\hfill\break}}
\def\tende#1{\,\vtop{\ialign{##\crcr\rightarrowfill\crcr
             \noalign{\kern-1pt\nointerlineskip}
             \hskip3.pt${\scriptstyle #1}$\hskip3.pt\crcr}}\,}
\def\otto{\,{\kern-1.truept\leftarrow\kern-5.truept\to\kern-1.truept}\,}
\def\tr{{\rm tr}}
\newtheorem{theorem}{Theorem}[section]  
\newtheorem{prop}[theorem]{Proposition}
\newtheorem{lemma}[theorem]{Lemma}
\numberwithin{equation}{section}
\begin{document}

\title{Complete Bose-Einstein condensation in the Gross-Pitaevskii regime}

\author{Chiara Boccato$^1$, Christian Brennecke$^1$, Serena Cenatiempo$^2$, Benjamin Schlein$^1$ \\
\\
Institute of Mathematics, University of Zurich, \\
Winterthurerstrasse 190, 8057 Zurich, Switzerland$^1$ \\
\\
Gran Sasso Science Institute, \\ Viale Francesco Crispi 7 
67100 L'Aquila, Italy$^2$}

\date{September 11, 2017}

\maketitle

\begin{abstract}
We consider a gas of $N$ bosons in a box with volume one interacting through a two-body potential with scattering length of order $N^{-1}$ (Gross-Pitaevskii limit). Assuming the (unscaled) potential to be sufficiently weak, we prove complete Bose-Einstein condensation for the ground state and for many-body states with finite excitation energy in the limit of large $N$ with a uniform ($N$-independent) bound on the number of excitations.
\end{abstract}

\section{Introduction and main results}

We consider systems of $N$ bosons in the three dimensional box $\Lambda = [-1/2;1/2]^{\times 3}$, with periodic boundary conditions. In particular, we are interested in the Gross-Pitaevskii regime; the Hamilton operator has the form 
\begin{equation}\label{eq:ham0} 
H_N = \sum_{j=1}^N -\Delta_{x_j} + \kappa \sum_{i<j}^N N^2 V(N(x_i - x_j)) \end{equation}
and acts on the Hilbert space $L^2_s (\Lambda^N)$, the subspace of $L^2 (\Lambda^{N})$ consisting of functions that are symmetric with respect to permutations of the $N$ particles. We will assume $V \in L^3 (\bR^3)$ to be non-negative, spherically symmetric and compactly supported. In (\ref{eq:ham0}), we also introduced a coupling constant $\kappa > 0$, which we will later assume to be small enough. The scattering length $a_0$ of the potential $\kappa V$ is defined through the zero-energy scattering equation 
\begin{equation}\label{eq:0en} 
\left[ -\Delta + \frac{\kappa}{2} V \right] f = 0 
\end{equation}
with the boundary condition $f (x)  \to 1$ as $|x| \to \infty$ (note that (\ref{eq:0en}) is an equation on $\bR^3$, despite the fact that we consider particles moving on the torus $\Lambda$). Outside the support of $V$, $f$ has the form 
\begin{equation}\label{eq:0enf} f(x) = 1- \frac{a_0}{|x|} \end{equation}
The constant $a_0$ is known as the scattering length of $\kappa V$. By scaling, the scattering length of the interaction $\kappa N^2 V(Nx)$ appearing in (\ref{eq:ham0}) is given by $a_N = a_0/N$. 

It follows from \cite{LY,LSY,NRS} that the ground state energy $E_N$ of (\ref{eq:ham0}) is such that 
\begin{equation}\label{eq:LSY} \lim_{N \to \infty} \frac{E_N}{N} = 4 \pi a_0 \end{equation}
Moreover, it has been shown in \cite{LS,NRS} that the ground state of (\ref{eq:ham0}) exhibits Bose-Einstein condensation in the one-particle orbital $\ph_0 (x) \equiv 1$ on $\Lambda$. In other words, if $\psi_N$ is a normalized ground state vector for  (\ref{eq:ham0}), and if $\gamma_N^{(1)} = \tr_{2,\dots , N} |\psi_N \rangle \langle \psi_N|$ denotes its one-particle reduced density, it was proven in \cite{LS} that
\begin{equation}\label{eq:cond0} \gamma^{(1)}_N \to |\ph_0 \rangle \langle \ph_0| \end{equation}
as $N \to \infty$ (for example, in the trace-norm topology). Actually, results in \cite{LSY,LS} were more general and also applied to non-translation invariant bosonic systems in the Gross-Pitaevskii regime, where particles are trapped in a volume of order one by an  external confining potential. For rotating gases similar results have been obtained in \cite{LS2}. In fact, following the arguments of \cite{LS}, 
it is also possible to give a bound on the rate of the convergence (\ref{eq:cond0}), which is, however, very far from optimal. 

The main result of our paper is a proof of Bose-Einstein condensation (\ref{eq:cond0}), valid for sufficiently small values of the coupling constant $\kappa \geq 0$, with a presumably optimal bound on the rate of the convergence. This is the content of the next theorem.  
\begin{theorem}\label{thm:main}
Let $V \in L^3 (\bR^3)$ be non-negative, spherically symmetric and compactly supported and assume the coupling constant $\kappa \geq 0$ to be small enough. Let $\psi_N \in L^2_s (\Lambda^N)$ be a sequence with $\| \psi_N \| = 1$ and such that \begin{equation}\label{eq:excit-en} \langle \psi_N, H_N \psi_N \rangle \leq 4 \pi a_0 N + K \end{equation} for some $K > 0$. Let $\gamma_N^{(1)} = \tr_{2,\dots , N} \, |\psi_N \rangle \langle \psi_N|$ be the one-particle reduced density associated with $\psi_N$. Then there exists a constant $C > 0$, depending on $V$ and on $\kappa$ but independent of $K$ such that 
\begin{equation}\label{eq:conv-thm} 1- \langle \ph_0 , \gamma^{(1)}_N \ph_0 \rangle \leq \frac{C (K+1)}{N} \end{equation}
where $\ph_0 (x) = 1$ for all $x \in \Lambda$. 

Furthermore, the ground state energy $E_N$ of (\ref{eq:ham0}) is such that 
\begin{equation}\label{eq:gs-bd} \left| E_N - 4 \pi a_0 N \right| \leq D \end{equation}
for a $D >0$ independent of $N$ (depending only on $V$ and $\kappa$). Hence, the one-particle reduced density associated with the ground state of (\ref{eq:ham0}) satisfies (\ref{eq:conv-thm}), with $K$ replaced by the constant $D$.
\end{theorem}

{\it Remarks:} 
\begin{itemize}
\item[1)] The inequality (\ref{eq:conv-thm}) bounds the number of particles orthogonal to the condensate wave function $\ph_0$. It 
states that the number of orthogonal excitations in the state $\psi_N$ is bounded by $C(K+1)$. In particular, by (\ref{eq:gs-bd}), the number of orthogonal excitations in the ground state of (\ref{eq:ham0}) remains bounded, uniformly in $N$. 
\item[2)] The bound (\ref{eq:gs-bd}) improves the result (\ref{eq:LSY}) obtained in \cite{LY,LSY,NRS} by showing that $4 \pi a_0 N$ is the ground state energy of (\ref{eq:ham0}) up to an error of order one, uniform in $N$. This is a new result, of independent interest. 
\item[3)] The inequality (\ref{eq:conv-thm}) immediately implies convergence of the reduced density $\gamma^{(1)}_N$ towards the orthogonal projection $|\ph_0 \rangle \langle \ph_0|$ in the trace-class topology, since  
\[ \tr \, \left| \gamma^{(1)}_N - |\ph_0 \rangle \langle \ph_0| \right| \leq 2 \, \| \gamma_N^{(1)} - |\ph_0 \rangle \langle \ph_0| \|_\text{HS} \leq 2^{3/2} \left( 1 - \langle \ph_0 , \gamma^{(1)}_N \ph_0 \rangle \right)^{1/2} \leq \frac{C}{\sqrt{N}} \] 
\item[4)] We believe that the methods that we use to show Theorem \ref{thm:main} can be extended to prove an analogous result for non-translation-invariant bosonic systems trapped by confining external fields. The details will appear elsewhere.
\item[5)] We think that the smallness assumption on $\kappa > 0$ is technical; we expect the results of Theorem \ref{thm:main} to remain true, independently of the strength of the interaction (of course, assuming the interaction to scale as in (\ref{eq:ham0})). 
\item[6)] The threshold $K > 0$ can also be chosen depending on $N$. Of course, if we allow for a large excitation energy $K \simeq N^\alpha$ for some $\alpha < 1$, the bound (\ref{eq:conv-thm}) deteriorates and only shows that the number of orthogonal excitations is at most of the order $N^\alpha$. The statement (\ref{eq:conv-thm}) remains non-trivial for all $K \ll N$. 
\end{itemize}

Bounds similar to (\ref{eq:conv-thm}) have been obtained in \cite{Sei,GS,DN,Piz} for $N$-boson systems in the mean field limit, described by the Hamilton operator \begin{equation}\label{eq:mf} H_N^\text{mf} = \sum_{j=1}^N -\Delta_{x_j} + \frac{1}{N} \sum_{i<j}^N V (x_i -x_j) \end{equation}
acting again on $L^2_s (\Lambda^{3N})$. In \cite{Sei,GS,LNSS,DN} establishing an estimate on the number of particles orthogonal to the condensate was an important ingredient to show the validity of Bogoliubov theory for the mean-field Hamiltonian (\ref{eq:mf}). In this sense, (\ref{eq:conv-thm}) can be thought of as a first step towards a better mathematical understanding of the excitation spectrum of Bose gases in the Gross-Pitaevskii regime corresponding to (\ref{eq:ham0}).  

To prove Theorem \ref{thm:main} we  combine techniques from \cite{LNSS} with ideas developed in \cite{BDS} and recently in \cite{BS} to study the time-evolution in the Gross-Pitaevskii regime. First of all, following \cite{LNSS}, we observe that every normalized $\psi_N \in L^2_s (\Lambda^{N})$ can be represented uniquely as 
\begin{equation}\label{eq:UNdef} \psi_N = \sum_{n=0}^N \psi_N^{(n)} \otimes_s \ph_0^{\otimes (N-n)} 
\end{equation}
for a sequence $\psi_N^{(n)} \in L^2_{\perp} (\Lambda)^{\otimes_s n}$. Here $L^2_\perp (\Lambda)^{\otimes_s n}$ denotes the symmetric tensor product of $n$ copies of the orthogonal complement $L^2_{\perp} (\Lambda)$ of $\ph_0$ in $L^2 (\Lambda)$. This remark allows us to define a unitary map \begin{equation}\label{eq:Uph-def} U_N: L^2_s (\Lambda^{N}) \to \cF_{+}^{\leq N} \quad \text{ through } \quad  U_N \psi_N = \{ \psi^{(0)}_N, \psi^{(1)}_N, \dots , \psi^{(N)}_N  \}.\end{equation} 
Here $\cF^{\leq N}_+ = \bigoplus_{n = 0}^N L^2_\perp (\Lambda)^{\otimes n}$ denotes the bosonic Fock space constructed over $L^2_\perp (\Lambda)$, truncated to sectors 
with at most $N$ particles. The unitary map $U_N$ factors out the Bose-Einstein condensate described by $\ph_0$ and it let us focus on its orthogonal excitations, described on $\cF_+^{\leq N}$.

With $U_N$, we can define a first excitation Hamiltonian $\cL_N = U_N H_N U_N^* : \cF^{\leq N}_+ \to \cF_+^{\leq N}$. To compute $\cL_N$, it is convenient to rewrite the original Hamiltonian (\ref{eq:ham0}) in second quantized form as 
\begin{equation}\label{eq:HN-Fock} H_N = \sum_{p \in \Lambda^*} p^2 a_p^* a_p + \frac{\kappa}{N} \sum_{p,q, r \in \Lambda^*} \widehat{V} (r/N) a_{p+r}^* a_q^* a_p a_{q+r} \end{equation}
where $\Lambda^* = 2\pi \bZ^3$ is momentum space and where, for every $p \in \Lambda^*$, $a_p^* , a_p$ are the usual Fock space operators, creating and annihilating a particle with momentum $p$ (precise definitions will be given in Section \ref{sec:fock}). Roughly speaking, $\cL_N$ can be obtained from (\ref{eq:HN-Fock}) by replacing creation and annihilation operators $a_0^*, a_0$ in the zero-momentum mode by factors of $(N-\cN_+)^{1/2}$, where $\cN_+ = \sum_{p \in \Lambda^* \backslash \{ 0 \}} a_p^* a_p$ is the number of particles operator on the excitation space $\cF_+^{\leq N}$. This procedure can be thought of as a rigorous version of the Bogoliubov approximation, proposed already in \cite{Bog}. Conjugating $H_N$ with $U_N$ we effectively extract, from the original interaction term in (\ref{eq:HN-Fock}) (quartic in creation and annihilation operators), contributions that are constant (commuting numbers), quadratic and cubic in creation and annihilation operators (the precise form of $\cL_N$ is given in (\ref{eq:cLN}) and (\ref{eq:cLNj})). 

In the mean field regime described by the Hamilton operator (\ref{eq:mf}), assuming that $V$ is positive definite it turns out that, up to errors of order one, 
\begin{itemize}
\item[i)] the constant term in $\cL_N^\text{mf} = U_N H^\text{mf}_N U_N^*$ is given by $N \widehat{V} (0)/2$, which is (again up to errors of order one) the ground state energy of (\ref{eq:mf}), 
\item[ii)] the sum of all other contributions in $\cL_N^\text{mf}$ can be bounded below on $\cF_+^{\leq N}$ by the number of particles operator $\cN_+$.
\end{itemize}
We conclude that \begin{equation}\label{eq:low-mf} \cL_N^\text{mf} - N \widehat{V} (0)/2 \geq c \, \cN_+ - C \end{equation}
for appropriate constants $C,c > 0$. This bound shows that states with small excitation energy can be written as $\psi_N = U_N^* \xi_N$ for an excitation vector $\xi_N \in \cF_+^{\leq N}$ with $\langle \xi_N , \cN_+ \xi_N \rangle \leq C$, uniformly in $N$. It is easy to check that this estimate implies (\ref{eq:conv-thm}). 

In the Gross-Pitaevskii regime, on the other hand, conjugating with $U_N$ is not enough. The difference between the constant term in $\cL_N$ and the ground state energy of (\ref{eq:ham0}) is still of order $N$ and, moreover, the sum of the other contributions to $\cL_N$ cannot be bounded below by the number of particles operator. The problem, in the Gross-Pitaevskii regime, is the fact that the completely factorized wave function $U_N^* \Omega = \ph_0^{\otimes N}$ (with $\Omega = \{ 1, 0, \dots , 0 \}$ the vacuum vector in $\cF^{\leq N}_{+}$) is not a good approximation for the ground state vector of (\ref{eq:ham0}) or, more generally, for low-energy states. Instead, states with small energies in the Gross-Pitaevskii limit are characterized by a short scale correlation structure, which already played a crucial role in \cite{LSY,LS} and also in the analysis of the time-evolution; see \cite{ESY1,ESY2,ESY3,ESY4,P,EMS,BDS,CH,BS}. To take into account correlations we proceed as in \cite{BS}, conjugating $\cL_N = U_N H_N U_N^*$ with a generalized Bogoliubov transformation $T$. This idea stems from \cite{BDS}, where Bogoliubov transformations of the form 
\begin{equation}\label{eq:wtT} \wt{T} = \exp \left\{ \frac{1}{2} \sum_{q \in \Lambda^*_+}  \eta_q \left[ a_q^* a_{-q}^* - a_q a_{-q} \right] \right\} \end{equation}
with coefficients $\eta_q \in \bR$ related to the solution of the zero energy scattering equation (\ref{eq:0en}), have been used to model correlations (in fact, since \cite{BDS} studied the time-evolution in non-translation-invariant systems, a slightly more general version of (\ref{eq:wtT}) was used there). A nice property of the unitary map (\ref{eq:wtT}) is the fact that its action on creation and annihilation operators can be computed explicitly, i.e. 
\[ \wt{T}^* a_p \wt{T} = \cosh (\eta_p) \, a_p  + \sinh (\eta_p) \, a^*_{-p} \]
for all $p\in \Lambda^*_+$. Unfortunately, however, the Bogoliubov transformation $\wt{T}$ does not map $\cF^{\leq N}$ into itself (it does not preserve the constraint on the number of particles). To circumvent this obstacle, we follow \cite{BS} and introduce generalized Bogoliubov transformations, having the form
\begin{equation}\label{eq:T} T = \exp \left\{ \frac{1}{2} \sum_{p \in \Lambda^*_+} \eta_p \left[ b_p^* b_{-p}^* - b_p b_{-p} \right] \right\} \end{equation}
with the modified creation and annihilation operators 
\[ b_p = \sqrt{\frac{N-\cN_+}{N}} \, a_p  \qquad \text{and } \quad b^*_p =  a_p^* \, \sqrt{\frac{N-\cN_+}{N}} \]
We will choose $\eta_p = - N^{-2} \widehat{w}_\ell (p/N)$, where $\widehat{w}_\ell$ are the Fourier coefficients of $w_\ell = 1- f_\ell$ and $f_\ell$ is a modification of the solution 
$f$ of the zero-energy scattering equation (\ref{eq:0en}) (more precisely, $f_\ell$ is going to be the Neumann ground state on the ball of radius $N \ell$, for an $\ell$ of order one). We will show in Lemma \ref{3.0.sceqlemma} that, with this definition, $\eta_p \simeq C \kappa / |p|^2$ for $|p| \ll N$, with fast decay for $|p| \gtrsim N$ guaranteeing that $\sum_p p^2 \eta_p^2 \simeq C N$ (the large $p$ behavior of $\eta_p$ corresponds to the $|x|^{-1}$ singularity of (\ref{eq:0enf}), regularized on a length scale of order $N^{-1}$). 

Let us point out that the idea of using unitary operators of the form (\ref{eq:T}) already appeared in \cite{Sei}, in the analysis of the excitation spectrum of mean-field Hamiltonians. In \cite{Sei}, however, these generalized Bogoliubov transformations were used to diagonalize the quadratic part of the excitation Hamiltonian $\cL^\text{mf}_N$, and not, as we do here, to extract additional contributions from cubic and quartic terms in $\cL_N$; as a consequence, in \cite{Sei} the choice of the coefficients $\eta_p$ was very different than in (\ref{eq:T}). 

Since $T$ maps $\cF^{\leq N}_+$ back into itself, we can use it to define a new, modified, excitation Hamiltonian $\cG_N = T^* U_N H_N U_N^* T : \cF_+^{\leq N} \to \cF_+^{\leq N}$. While conjugation with $T$ only creates a finite number of excitations (because $\eta_p$ is square summable; see Lemma \ref{lm:Ngrow}), it extracts an additional energy of order $N$ (because $\sum_p p^2 \eta_p^2 \simeq CN$). Choosing $\eta_p$ as indicated above makes sure that the constant term in $\cG_N$ is exactly $4 \pi a_0 N$ and that all other contributions can be bounded below by the number of particles operator, up to errors of order one. In Proposition \ref{prop:gene} we will conclude that, similarly to (\ref{eq:low-mf}), 
\begin{equation}\label{eq:GN-bd} \cG_N - 4 \pi a_0 N \geq c \cN_+ - C  \end{equation}
for appropriate constants $C,c > 0$ (the proof of Proposition \ref{prop:gene} is given in Section \ref{sec:prop} and represents 
the longest part of the paper). Conjugating (\ref{eq:GN-bd}) 
with $T$ and $U$ (and using the fact that, as discussed in Lemma \ref{lm:Ngrow}, $T$ only changes the number of particles by a multiplicative constant), we arrive at the estimate
\begin{equation}\label{eq:dGamma} H_N - 4 \pi a_0 N \geq c \sum_{j=1}^N  (1-|\ph_0 \rangle \langle \ph_0|)_j - C \end{equation}
between operators acting on the $N$-particle Hilbert space $L^2_s (\Lambda^N)$. For $j=1,\dots , N$, $(1-|\ph_0 \rangle \langle \ph_0|)_j$ indicates the projection $1-|\ph_0 \rangle \langle \ph_0|$ onto the orthogonal complement of the condensate wave function $\ph_0$ acting on the $j$-th particle. In other words, the operator on the r.h.s. of (\ref{eq:dGamma}) measures the number of orthogonal excitations of the condensate. It is then easy to see that (\ref{eq:dGamma}) implies complete Bose-Einstein condensation in the precise sense of (\ref{eq:conv-thm}). 

Technically, the main challenge that we have to face is the fact that the action of the generalized Bogoliubov transformations (\ref{eq:T}) on creation and annihilation operators is not explicit, as it was for (\ref{eq:wtT}). Instead, we will have to expand operators of the form $T^* a_p T$ in absolutely convergent infinite series and we will need to bound several contributions. The main tool we use to control these expansions is Lemma \ref{lm:indu} below, which we take from \cite{BS}.

\section{Fock space}
\label{sec:fock}

Let 
\[ \cF = \bigoplus_{n \geq 0} L^2_s (\Lambda^{n}) = \bigoplus_{n \geq 0} L^2 (\Lambda)^{\otimes_s n} \]
denote the bosonic Fock space over the one-particle space $L^2 (\Lambda)$. Here $L^2_s (\Lambda^{n}) \simeq L^2 (\Lambda)^{\otimes_s n}$ is the subspace of $L^2 (\Lambda^{n})$ consisting of all functions that are symmetric w.r.t. permutations. We use the notation $\Omega = \{ 1, 0, \dots \} \in \cF$ for the vacuum vector. 

For $g \in L^2 (\Lambda)$, we define on $\cF$ the creation operator $a^* (g)$ and the annihilation operator $a(g)$ by 
\[ \begin{split} 
(a^* (g) \Psi)^{(n)} (x_1, \dots , x_n) &= \frac{1}{\sqrt{n}} \sum_{j=1}^n g (x_j) \Psi^{(n-1)} (x_1, \dots , x_{j-1}, x_{j+1} , \dots , x_n) 
\\
(a (g) \Psi)^{(n)} (x_1, \dots , x_n) &= \sqrt{n+1} \int_\Lambda  \overline{g (x)} \Psi^{(n+1)} (x,x_1, \dots , x_n) \, dx \end{split} \]
Creation and annihilation operators satisfy canonical commutation relations
\begin{equation}\label{eq:ccr} [a (g), a^* (h) ] = \langle g,h \rangle , \quad [ a(g), a(h)] = [a^* (g), a^* (h) ] = 0 \end{equation}
for all $g,h \in L^2 (\Lambda)$ (here $\langle g,h \rangle$ denotes the usual inner product on $L^2 (\Lambda)$). 

Since we consider a translation invariant system, it will be useful to work in momentum space. Let $\Lambda^* = 2\pi \bZ^3$. For $p \in \Lambda^*$, we define the normalized wave function $\ph_p  (x) = e^{-ip\cdot x}$ in $L^2 (\Lambda)$ and we set
\begin{equation}\label{eq:ap} a^*_p = a^* (\ph_p), \quad \text{and } \quad  a_p = a (\ph_p) \end{equation} 
In other words, $a^*_p$ and $a_p$ create, respectively, annihilate a particle with momentum $p$. 

In some occasions, it will be also convenient to work in position space (it is easier to make use of the condition that the interaction potential $V(x)$ is pointwise positive when working in position space). To this end, we introduce operator valued distributions $\check{a}_x, \check{a}_x^*$ defined so that
\begin{equation}\label{eq:axf} a(f) = \int \bar{f} (x) \,  \check{a}_x \, dx , \quad a^* (f) = \int f(x) \, \check{a}_x^* \, dx  \end{equation}

On $\cF$, we also introduce the number of particles operator, defined by $(\cN\Psi)^{(n)} = n \Psi^{(n)}$. Notice that 
\[ \cN = \sum_{p \in \Lambda^*} a_p^* a_p  = \int \check{a}^*_x \check{a}_x \, dx \, . \]
It is useful to observe that creation and annihilation operators are bounded by the square root of the number of particles operator, i.e.   
\begin{equation}\label{eq:abd} \| a (f) \Psi \| \leq \| f \| \| \cN^{1/2} \Psi \|, \quad \| a^* (f) \Psi \| \leq \| f \| \| (\cN+1)^{1/2} \Psi \| 
\end{equation}
for all $f \in L^2 (\Lambda)$. 

We will often have to deal with quadratic translation invariant operators on $\cF$ (quadratic in creation and annihilation operators). For $f \in \ell^2 (\Lambda^*)$, we define
\begin{equation}\label{eq:AAdef} A_{\sharp_1, \sharp_2} (f) = \sum_{p \in \Lambda^*} f_p \, a_{\alpha_1 p}^{\sharp_1} a_{\alpha_2 p}^{\sharp_2} \end{equation}
where $\sharp_1, \sharp_2 \in \{ \cdot, * \}$, and we use the notation $a^\sharp = a$, if $\sharp = \cdot$, and $a^\sharp = a^*$ if $\sharp = *$. Also, $\alpha_j \in \{ \pm 1 \}$ is chosen so that $\alpha_1 = 1$, if $\sharp_1 = *$, $\alpha_1 = -1$ if $\sharp_1 = \cdot$, $\alpha_2 = 1$ if $\sharp_2 = \cdot$ and $\alpha_2 = -1$ if $\sharp_2 = *$. Notice that, in position space  
\[ A_{\sharp_1, \sharp_2} (j) = \int dx dy \, \check{f} (x-y) \, \check{a}_x^{\sharp_1} \, \check{a}_y^{\sharp_2} \]
with the inverse Fourier transform \[ \check{f} (x) = \sum_{p \in \Lambda^*} f_p \, e^{ip\cdot x} \, . \]  
\begin{lemma} \label{lm:Abds} Let $f \in \ell^2 (\Lambda^*)$ and, if $\sharp_1 = \cdot$ and $\sharp_2 = *$ assume additionally that $f \in \ell^1 (\Lambda^*)$. Then we have, for any $\Psi \in \cF$, 
\[ \begin{split} 
\| A_{\sharp_1, \sharp_2} (f) \Psi \| &\leq \sqrt{2} \, \| (\cN+1) \Psi \| \left\{ \begin{array}{ll}  \| f \|_2 + \| f \|_1 \, \qquad &\text{if } \sharp_1 = \cdot, \sharp_2 = * \\  \| f \|_2  \qquad &\text{otherwise} \end{array} \right. \end{split} \]
\end{lemma}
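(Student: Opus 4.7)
The plan is to handle the four choices of $(\sharp_1,\sharp_2)\in\{\cdot,*\}^2$ separately, using in each case whichever of the following three tools is best suited: the position-space representation $A_{\sharp_1,\sharp_2}(f)=\int dx\,dy\,\check f(x-y)\,\check a_x^{\sharp_1}\check a_y^{\sharp_2}$, direct diagonality on the $n$-particle sector, and duality $A_{*,*}(f)^{*}=A_{\cdot,\cdot}(\overline f)$. The key elementary facts behind the bounds are Plancherel in the form $\int_{\Lambda\times\Lambda}|\check f(x-y)|^{2}dx\,dy=\|f\|_{2}^{2}$ (using translation invariance on the torus of volume one) and the numerical inequalities $n(n-1)\le(n+1)^{2}$ and $(n+1)(n+2)\le2(n+1)^{2}$, which will produce the constant~$\sqrt{2}$.

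First I treat the case $\sharp_{1}=\sharp_{2}=\cdot$. Using $(\check a_{x}\check a_{y}\Psi)^{(n-2)}(x_{1},\dots,x_{n-2})=\sqrt{n(n-1)}\,\Psi^{(n)}(x,y,x_{1},\dots,x_{n-2})$ and applying Cauchy--Schwarz in $(x,y)$ to $\int\check f(x-y)\Psi^{(n)}(x,y,\vec x)\,dx\,dy$, I obtain $\|(A_{\cdot,\cdot}(f)\Psi)^{(n-2)}\|^{2}\le n(n-1)\|f\|_{2}^{2}\|\Psi^{(n)}\|^{2}$. Summing over $n$ gives $\|A_{\cdot,\cdot}(f)\Psi\|\le\|f\|_{2}\|(\cN(\cN-1))^{1/2}\Psi\|\le\|f\|_{2}\|(\cN+1)\Psi\|$. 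The case $\sharp_{1}=\sharp_{2}=*$ is then recovered by duality: for $\Psi\in L^{2}_{s}(\Lambda^{n})$,
\[
\|A_{*,*}(f)\Psi\|=\sup_{\|\Phi\|=1,\,\Phi\in L^{2}_{s}(\Lambda^{n+2})}\bigl|\langle A_{\cdot,\cdot}(\overline f)\Phi,\Psi\rangle\bigr|\le\|f\|_{2}\sqrt{(n+2)(n+1)}\,\|\Psi\|,
\]
and summing $\|A_{*,*}(f)\Psi\|^{2}\le\|f\|_{2}^{2}\sum_{n}(n+2)(n+1)\|\Psi^{(n)}\|^{2}\le2\|f\|_{2}^{2}\|(\cN+1)\Psi\|^{2}$ delivers the $\sqrt{2}$.

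For $\sharp_{1}=*,\sharp_{2}=\cdot$ the operator $A_{*,\cdot}(f)=\sum_{p}f_{p}\,a_{p}^{*}a_{p}$ preserves each $n$-particle sector and is diagonal in the plane-wave basis with eigenvalues bounded by $n\|f\|_{\infty}\le n\|f\|_{2}$; this gives $\|A_{*,\cdot}(f)\Psi\|\le\|f\|_{2}\|\cN\Psi\|\le\|f\|_{2}\|(\cN+1)\Psi\|$. Finally, for $\sharp_{1}=\cdot,\sharp_{2}=*$ I use the canonical commutation relation \eqref{eq:ccr} to extract the scalar piece:
\[
A_{\cdot,*}(f)=\sum_{p}f_{p}\bigl(a_{-p}^{*}a_{-p}+1\bigr)=\Bigl(\sum_{p}f_{p}\Bigr)\cdot\openone+\sum_{p}f_{p}\,a_{-p}^{*}a_{-p}.
\]
The scalar contributes $|\sum_{p}f_{p}|\,\|\Psi\|\le\|f\|_{1}\|\Psi\|$, which is precisely where the $\ell^{1}$ assumption enters, and the remainder is estimated as in the previous case by $\|f\|_{2}\|\cN\Psi\|$, giving $\|A_{\cdot,*}(f)\Psi\|\le(\|f\|_{1}+\|f\|_{2})\|(\cN+1)\Psi\|$.

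There is no deep obstacle here; the statement is a catalogue of standard second-quantization bounds and the proof is bookkeeping. The only place requiring a small amount of care is the $(*,*)$ case, where one must pass from a sector-by-sector duality estimate back to a uniform bound in $\|(\cN+1)\Psi\|$, and the precise constant $\sqrt{2}$ emerges only after invoking $(n+1)(n+2)\le2(n+1)^{2}$; had one bounded $n+2$ crudely by $2(n+1)$ one would lose a factor of $\sqrt{2}$. Likewise, the separation of the c-number from the normal-ordered part in the $(\cdot,*)$ case is essential: without it the naive estimate $\|a_{-p}a_{-p}^{*}\Psi\|\le\|(\cN+1)^{1/2}(\cN+2)^{1/2}\Psi\|$ would not sum against $\|f\|_{2}$ alone, explaining why the extra $\|f\|_{1}$ term is unavoidable.
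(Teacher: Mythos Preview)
Your proof is correct. The paper states Lemma~\ref{lm:Abds} without proof, treating it as a standard second-quantization estimate, so there is no argument in the paper to compare against; your case-by-case treatment via position space for $(\cdot,\cdot)$, duality for $(*,*)$, diagonality for $(*,\cdot)$, and normal ordering for $(\cdot,*)$ is exactly the sort of bookkeeping one expects, and the constant $\sqrt{2}$ is pinned down correctly by $(n+1)(n+2)\le 2(n+1)^2$ in the $(*,*)$ case.
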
 

We will need to work on certain subspaces of $\cF$. 
Recall that $\ph_0 \in L^2 (\Lambda)$ is the constant wave function $\ph_0 (x) = 1$ for all $x \in \Lambda$. We denote by $L^2_{\perp} (\Lambda)$ the orthogonal complement of the one dimensional space spanned by $\ph_0$ in $L^2 (\Lambda)$. We define then 
\[ \cF_{+} = \bigoplus_{n \geq 0} L^2_{\perp} (\Lambda)^{\otimes_s n} \]
as the Fock space constructed over $L^2_{\perp} (\Lambda)$. A vector $\Psi = \{ \psi^{(0)}, \psi^{(1)}, \dots \} \in \cF$ lies in $\cF_{+}$, if $\psi^{(n)}$ is orthogonal to $\ph_0$, in each of its coordinate, for all $n \geq 1$, i.e. if
\[ \int_\Lambda \psi^{(n)} (x,y_1, \dots , y_{n-1})\, dx = 0 \]
for all $n \geq 1$. In momentum space, it is very easy to characterize the orthogonal complement of $\ph_0$; it consists of all functions in $\ell^2 (\Lambda^*)$ vanishing at $p=0$. Hence, $\cF_{+}$ is the Fock space generated by creation and annihilation operators $a_p^*, a_p$, for $p \in \Lambda^*_+ := 2\pi \bZ^3 \backslash \{ 0 \}$. On $\cF_+$, we denote the number of particles operator by 
\[ \cN_+ = \sum_{p \in \Lambda_+^*} a_p^* a_p \, . \]

We will also need a truncated version of the Fock space $\cF_+$. For $N \in \bN$, we define 
\[ \cF_{+}^{\leq N} = \bigoplus_{n=0}^N L^2_{\perp} (\Lambda)^{\otimes_s n} \]
as the Fock spaces constructed over $L^2_\perp (\Lambda)$, describing states with at most $N$ particles.  

On $\cF_+^{\leq N}$, we will consider modified creation and annihilation operators. For $f \in L^2_\perp (\Lambda)$, we define 
\[ b (f) = \sqrt{\frac{N- \cN_+}{N}} \, a (f), \qquad \text{and } \quad  b^* (f) = a^* (f) \, \sqrt{\frac{N-\cN_+}{N}} \]
We have $b(f), b^* (f) : \cF_+^{\leq N} \to \cF_+^{\leq N}$. As we will discuss in the next section, the importance of these fields arises from the application of the map $U_N$, defined in (\ref{eq:Uph-def}), since, for example,  
\begin{equation}\label{eq:UaU} U_N a^* (f) a(\ph_0) U_N^* = a^* (f) \sqrt{N-\cN_+}  = \sqrt{N} \, b^* (f) 
\end{equation}
Eq. (\ref{eq:UaU}) clarifies the action of the modified creation and annihilation operators; $b^* (f)$  excites a particle from the condensate into its orthogonal complement while $b(f)$ annihilates an excitation back into the condensate. Compared with the standard fields $a^*,a$, the modified creation and annihilation operators $b^*,b$ have an important  advantage. They create or annihilate an excitation of the condensate but, at the same time, they preserve the total number of particles  (this is why they map $\cF^{\leq N}_+$ into itself). 

It is also convenient to introduce modified creation and annihilation operators in momentum space, setting
\[ b_p = \sqrt{\frac{N-\cN_+}{N}} \, a_p, \qquad \text{and } \quad  b^*_p = a^*_p \, \sqrt{\frac{N-\cN_+}{N}} \]
for all $p \in \Lambda^*_+$ and operator valued distributions 
in position space 
\[ \check{b}_x = \sqrt{\frac{N-\cN_+}{N}} \, \check{a}_x, \qquad \text{and } \quad  \check{b}^*_x = \check{a}^*_x \, \sqrt{\frac{N-\cN_+}{N}} \]
for all $x \in \Lambda$. 

Modified creation and annihilation operators satisfy the commutation relations 
\begin{equation}\label{eq:comm-bp} \begin{split} [ b_p, b_q^* ] &= \left( 1 - \frac{\cN_+}{N} \right) \delta_{p,q} - \frac{1}{N} a_q^* a_p 
\\ [ b_p, b_q ] &= [b_p^* , b_q^*] = 0 
\end{split} \end{equation}
and, in position space, 
\begin{equation}\label{eq:comm-b}
\begin{split}  [ \check{b}_x, \check{b}_y^* ] &= \left( 1 - \frac{\cN_+}{N} \right) \delta (x-y) - \frac{1}{N} \check{a}_y^* \check{a}_x \\ 
[ \check{b}_x, \check{b}_y ] &= [ \check{b}_x^* , \check{b}_y^*] 
= 0 
\end{split} \end{equation}
Furthermore, we find 
\begin{equation}\label{eq:comm-b2}
\begin{split}
[\check{b}_x, \check{a}_y^* \check{a}_z] &=\delta (x-y)\check{b}_z, \qquad 
[\check{b}_x^*, \check{a}_y^* \check{a}_z] = -\delta (x-z) \check{b}_y^*
\end{split} \end{equation}
These expressions easily lead us to $[ \check{b}_x, \cN_+ ] = \check{b}_x$, $[ \check{b}_x^* , \cN_+ ] = - \check{b}_x^*$ and, in momentum space, to $[b_p , \cN_+] = b_p$, $[b_p^*, \cN_+] = - b_p^*$. {F}rom (\ref{eq:abd}), we immediately find that
\begin{equation}\label{lm:bbds}
\begin{split} 
\| b(f) \xi \| &\leq \| f \| \left\| \cN_+^{1/2} \left( \frac{N+1-\cN_+}{N} \right)^{1/2} \xi \right\| \\ 
\| b^* (f) \xi \| &\leq \| f \| \left\| (\cN_+ +1)^{1/2} \left( \frac{N-\cN_+}{N} \right)^{1/2} \xi \right\|
\end{split} \end{equation}
for all $f \in L^2_\perp (\Lambda)$ and $\xi \in \cF_+^{\leq N}$. Since $\cN_+ \leq N$ on $\cF_+^{\leq N}$, it follows that $b(f), b^* (f) : \cF_+^{\leq N} \to \cF_+^{\leq N}$ are bounded operators with $\| b(f) \|, \| b^* (f) \| \leq (N+1)^{1/2} \| f \|$. 

We will also consider quadratic expressions in the $b$-fields. Also in this case, we restrict our attention to translation invariant operators. For $f \in \ell^2 (\Lambda^*_+)$, we define, similarly to (\ref{eq:AAdef}),
\[ B_{\sharp_1, \sharp_2} (f) = \sum_{p \in \Lambda^*} f_p \, b_{\alpha_1 p}^{\sharp_1} \, b_{\alpha_2 p}^{\sharp_2} \]
with $\alpha_1 = 1$ if $\sharp_1 = *$, $\alpha_1 = -1$ if $\sharp_1 = \cdot$, $\alpha_2 = 1$ if $\sharp_2 = \cdot$ and $\alpha_2 = -1$ if $\sharp_2 = *$. By construction, $B_{\sharp_1, \sharp_2} (f) : \cF_+^{\leq N} \to \cF_+^{\leq N}$. In position space, we find 
\[ \begin{split} 
B_{\sharp_1, \sharp_2} (f) &= \int \check{f} (x-y) \, \check{b}^{\sharp_1}_x \check{b}^{\sharp_2}_y \, dx dy  
\end{split}  \]
From Lemma \ref{lm:Abds}, we obtain the following bounds.
\begin{lemma}\label{lm:Bbds}
Let $f \in \ell^2 (\Lambda^*_+)$. If $\sharp_1 = \cdot$ and $\sharp_2 = *$, we assume additionally that $f \in \ell^1 (\Lambda^*)$. Then
\[ \begin{split} \frac{\| B_{\sharp_1,\sharp_2} (f) \xi \|}{\left\| (\cN_+ +1) \left(\frac{N-\cN_+ +2}{N} \right) \xi \right\|} &
\leq \sqrt{2}  \left\{ \begin{array}{ll}  \| f \|_2 + \| f \|_1 \qquad &\text{if } \sharp_1 = \cdot, \sharp_2 = * \\  \| f \|_2  \qquad &\text{otherwise} \end{array} \right. 
\end{split} \]
for all $\xi \in \cF^{\leq N}_+$. Since $\cN_+  \leq N$ on $\cF^{\leq N}_+$, the operator $B_{\sharp_1 , \sharp_2} (f)$ is bounded, with 
\[ \begin{split} \| B_{\sharp_1, \sharp_2} (f) \| &\leq \sqrt{2} N \left\{ \begin{array}{ll} \| f \|_2 + \| f \|_1 \quad &\text{if } \sharp_1 = \cdot, \sharp_2 = * \\  \| f \|_2  \qquad &\text{otherwise} \end{array} \right. \end{split} \]
\end{lemma}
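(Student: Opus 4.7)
The strategy is to reduce each bound to the corresponding $a$-field estimate of Lemma~\ref{lm:Abds} by writing every quadratic monomial in the $b$-fields as the associated $a$-monomial multiplied by a scalar function of $\cN_+$. Using $b_p = \sqrt{(N-\cN_+)/N}\,a_p$ and $b_p^* = a_p^*\sqrt{(N-\cN_+)/N}$ together with the commutation rule $h(\cN_+)\,a_p^{\sharp} = a_p^{\sharp}\,h(\cN_+ \mp 1)$, one commutes both square-root factors to the right of the $a$-monomial. A short case-by-case computation on $n$-particle vectors then produces an operator identity
\[
B_{\sharp_1,\sharp_2}(f) \,=\, A_{\sharp_1,\sharp_2}(f)\,g_{\sharp_1,\sharp_2}(\cN_+) \qquad \text{on } \cF_+^{\leq N},
\]
with $g_{*,*}(\cN_+)=\sqrt{(N-\cN_+)(N-\cN_+-1)}/N$, $g_{*,\cdot}(\cN_+)=(N-\cN_++1)/N$, $g_{\cdot,*}(\cN_+)=(N-\cN_+)/N$, and $g_{\cdot,\cdot}(\cN_+)=\sqrt{(N-\cN_++1)(N-\cN_++2)}/N$, all of which are dominated pointwise by $(N-\cN_++2)/N$ on $\cF_+^{\leq N}$.

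Given this identity, Lemma~\ref{lm:Abds} applied to $\Psi = g_{\sharp_1,\sharp_2}(\cN_+)\,\xi$ immediately yields
\[
\|B_{\sharp_1,\sharp_2}(f)\,\xi\| \,\le\, \sqrt{2}\,\bigl(\|f\|_2 + \mathbf{1}_{\sharp_1=\cdot,\,\sharp_2=*}\|f\|_1\bigr)\,\|(\cN+1)\,g_{\sharp_1,\sharp_2}(\cN_+)\,\xi\|.
\]
Since $\cN=\cN_+$ on $\cF_+$ and $g_{\sharp_1,\sharp_2}(\cN_+)$ commutes with $\cN_+$, the domination $g_{\sharp_1,\sharp_2}(\cN_+)\le(N-\cN_++2)/N$ gives the first bound of the lemma. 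The operator norm bound then follows from the first estimate, since on $\cF_+^{\leq N}$ one has $\cN_+\le N$, so by AM--GM the weight $(\cN_++1)(N-\cN_++2)/N$ has operator norm at most $(N+3)^2/(4N)$, which is bounded by $N$ for $N$ large enough.

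The only genuinely delicate point to check is the top of the particle-number spectrum in the case $\sharp_1=\sharp_2=*$, where $g_{*,*}(\cN_+)$ formally involves $\sqrt{N-\cN_+-1}$, which is imaginary on the $N$-particle sector. However, both sides of the identity vanish there automatically: the first factor $b^*$ in $B_{*,*}(f)$ annihilates any $N$-particle vector, and on the right the factor $\sqrt{N-\cN_+}=0$ kills everything. Thus the identity is valid on all of $\cF_+^{\leq N}$ once one interprets $\sqrt{\,\cdot\,}$ of a non-positive argument as zero. Apart from this bookkeeping issue and the derivation of the four explicit $g_{\sharp_1,\sharp_2}$, all the substance of the proof is already contained in Lemma~\ref{lm:Abds}.
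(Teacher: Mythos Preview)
Your proof is correct and follows exactly the route the paper indicates: the lemma is stated immediately after the definition of $B_{\sharp_1,\sharp_2}(f)$ with the one-line justification ``From Lemma~\ref{lm:Abds}, we obtain the following bounds,'' and your factorization $B_{\sharp_1,\sharp_2}(f)=A_{\sharp_1,\sharp_2}(f)\,g_{\sharp_1,\sharp_2}(\cN_+)$ together with the pointwise domination $g_{\sharp_1,\sharp_2}(\cN_+)\le (N-\cN_++2)/N$ is precisely how one unpacks that sentence. The only minor caveat is that your operator-norm estimate $(N+3)^2/(4N)\le N$ requires $N\ge 3$, but since the whole paper works in the regime $N\to\infty$ this is harmless.
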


We will need to consider products of several creation and annihilation operators. In particular, two types of monomials in creation and annihilation operators will play an important role in our analysis. For $f_1, \dots , f_n \in \ell_2 (\Lambda^*_+)$, $\sharp = (\sharp_1, \dots , \sharp_n), \flat = (\flat_0, \dots , \flat_{n-1}) \in \{ \cdot, * \}^n$, we set 
\begin{equation}\label{eq:Pi2}
\begin{split}  
\Pi^{(2)}_{\sharp, \flat} &(f_1, \dots , f_n) \\ &= \sum_{p_1, \dots , p_n \in \Lambda^*}  b^{\flat_0}_{\alpha_0 p_1} a_{\beta_1 p_1}^{\sharp_1} a_{\alpha_1 p_2}^{\flat_1} a_{\beta_2 p_2}^{\sharp_2} a_{\alpha_2 p_3}^{\flat_2} \dots  a_{\beta_{n-1} p_{n-1}}^{\sharp_{n-1}} a_{\alpha_{n-1} p_n}^{\flat_{n-1}} b^{\sharp_n}_{\beta_n p_n} \, \prod_{\ell=1}^n f_\ell (p_\ell)  \end{split} \end{equation}
where, for every $\ell=0,1, \dots , n$, we set $\alpha_\ell = 1$ if $\flat_\ell = *$, $\alpha_\ell =    -1$ if $\flat_\ell = \cdot$, $\beta_\ell = 1$ if $\sharp_\ell = \cdot$ and $\beta_\ell = -1$ if $\sharp_\ell = *$. In (\ref{eq:Pi2}), we impose the condition that for every $j=1,\dots, n-1$, we have either $\sharp_j = \cdot$ and $\flat_j = *$ or $\sharp_j = *$ and $\flat_j = \cdot$ (so that the product $a_{\beta_\ell p_\ell}^{\sharp_\ell} a_{\alpha_\ell p_{\ell+1}}^{\flat_\ell}$ always preserves the number of particles, for all $\ell =1, \dots , n-1$). With this assumption, we find that the operator $\Pi^{(2)}_{\sharp,\flat} (f_1, \dots , f_n)$ maps $\cF_+^{\leq N}$ into itself. If, for some $\ell=1, \dots , n$, $\flat_{\ell-1} = \cdot$ and $\sharp_\ell = *$ (i.e. if the product $a_{\alpha_{\ell-1} p_\ell}^{\flat_{\ell-1}} a_{\beta_\ell p_\ell}^{\sharp_\ell}$ for $\ell=2,\dots , n$, or the product $b_{\alpha_0 p_1}^{\flat_0} a_{\beta_1 p_1}^{\sharp_1}$ for $\ell=1$, is not normally ordered) we require additionally that $f_\ell  \in \ell^1 (\Lambda^*_+)$. In position space, the same operator can be written as 
\begin{equation}\label{eq:Pi2-pos} \Pi^{(2)}_{\sharp, \flat} (j_1, \dots , j_n) = \int   \check{b}^{\flat_0}_{x_1} \check{a}_{y_1}^{\sharp_1} \check{a}_{x_2}^{\flat_1} \check{a}_{y_2}^{\sharp_2} \check{a}_{x_3}^{\flat_2} \dots  \check{a}_{y_{n-1}}^{\sharp_{n-1}} \check{a}_{x_n}^{\flat_{n-1}} \check{b}^{\sharp_n}_{y_n} \, \prod_{\ell=1}^n \check{f}_\ell (x_\ell - y_\ell) \, dx_\ell dy_\ell \end{equation}
An operator of the form (\ref{eq:Pi2}), (\ref{eq:Pi2-pos}) with all the properties listed above, will be called a $\Pi^{(2)}$-operator of order $n$.

For $g, f_1, \dots , f_n \in \ell_2 (\Lambda^*_+)$, $\sharp = (\sharp_1, \dots , \sharp_n)\in \{ \cdot, * \}^n$, $\flat = (\flat_0, \dots , \flat_{n}) \in \{ \cdot, * \}^{n+1}$, we also define the operator 
\begin{equation}\label{eq:Pi1}
\begin{split} \Pi^{(1)}_{\sharp,\flat} &(f_1, \dots , f_n;g) \\ &= \sum_{p_1, \dots , p_n \in \Lambda^*}  b^{\flat_0}_{\alpha_0, p_1} a_{\beta_1 p_1}^{\sharp_1} a_{\alpha_1 p_2}^{\flat_1} a_{\beta_2 p_2}^{\sharp_2} a_{\alpha_2 p_3}^{\flat_2} \dots a_{\beta_{n-1} p_{n-1}}^{\sharp_{n-1}} a_{\alpha_{n-1} p_n}^{\flat_{n-1}} a^{\sharp_n}_{\beta_n p_n} a^{\flat n} (g) \, \prod_{\ell=1}^n f_\ell (p_\ell) \end{split} \end{equation}
where $\alpha_\ell$ and $\beta_\ell$ are defined as above. Also here, we impose the condition that, for all $\ell = 1, \dots , n$, either $\sharp_\ell = \cdot$ and $\flat_\ell = *$ or $\sharp_\ell = *$ and $\flat_\ell = \cdot$. This implies that $\Pi^{(1)}_{\sharp,\flat} (f_1, \dots , f_n;g)$ maps $\cF_+^{\leq N}$ back into $\cF_+^{\leq N}$. Additionally, we assume that $f_\ell \in \ell^1 (\Lambda^*)$, if $\flat_{\ell-1} = \cdot$ and $\sharp_\ell = *$ for some $\ell = 1,\dots , n$ (i.e. if the pair $a_{\alpha_{\ell-1} p_\ell}^{\flat_{\ell-1}} a^{\sharp_\ell}_{\beta_\ell p_\ell}$ is not normally ordered). In position space, the same operator can be written as
\begin{equation}\label{eq:Pi1-pos} \Pi^{(1)}_{\sharp,\flat} (f_1, \dots ,f_n;g) = \int \check{b}^{\flat_0}_{x_1} \check{a}_{y_1}^{\sharp_1} \check{a}_{x_2}^{\flat_1} \check{a}_{y_2}^{\sharp_2} \check{a}_{x_3}^{\flat_2} \dots  \check{a}_{y_{n-1}}^{\sharp_{n-1}} \check{a}_{x_n}^{\flat_{n-1}} \check{a}^{\sharp_n}_{y_n} \check{a}^{\flat n} (g) \, \prod_{\ell=1}^n \check{f}_\ell (x_\ell - y_\ell) \, dx_\ell dy_\ell \end{equation}
An operator of the form (\ref{eq:Pi1}), (\ref{eq:Pi1-pos}) will be called a $\Pi^{(1)}$-operator of order $n$. Operators of the form $b(\check{f})$, $b^* (\check{f})$, for a $f \in \ell^2 (\Lambda^*_+)$, will be called $\Pi^{(1)}$-operators of order zero. 

In the next lemma we show how to bound $\Pi^{(2)}$- and $\Pi^{(1)}$-operators. The simple proof, based on Lemma \ref{lm:Abds}, can be found in \cite{BS}. 
\begin{lemma}\label{lm:Pi-bds}
Let $n \in \bN$, $g, f_1, \dots , f_n \in \ell^2 (\Lambda^*_+)$, $\xi \in \cF_+^{\leq N}$. Let $\Pi^{(2)}_{\sharp,\flat} (f_1,\dots , f_n)$ and $\Pi^{(1)}_{\sharp,\flat} (f_1,\dots, f_n ; g)$ be defined as in (\ref{eq:Pi2}), (\ref{eq:Pi1}). Then  
\begin{equation}\label{eq:Pi-bds} \begin{split} 
\left\| \Pi^{(2)}_{\sharp,\flat} (f_1,\dots ,f_n) \xi \right\| &\leq 6^n \prod_{\ell=1}^n K_\ell^{\flat_{\ell-1}, \sharp_\ell} \left\| (\cN_+ +1)^n \left(1- \frac{\cN_+ -2}{N} \right) \xi \right\| \\
 \left\| \Pi^{(1)}_{\sharp,\flat} (f_1,\dots , f_n;g) \xi \right\| 
 &\leq 6^n \| g \| \prod_{\ell=1}^n K_\ell^{\flat_{\ell-1}, \sharp_\ell} \left\| (\cN_+ +1)^{n+1/2} \left(1- \frac{\cN_+ -2}{N} \right)^{1/2} \xi \right\| 
\end{split} \end{equation} 
where
\[ K_\ell^{\flat_{\ell-1}, \sharp_\ell} = \left\{ \begin{array}{ll} \|f_\ell \|_2 + \| f_\ell \|_1  \quad &\text{if } \flat_{\ell-1} = \cdot \text{ and } \sharp_\ell = * \\
\| f_\ell \|_2 \quad &\text{otherwise} \end{array} \right. \]
Since $\cN_+  \leq N$ on $\cF_+^{\leq N}$, it follows that  
\[ \begin{split} 
\left\|  \Pi^{(2)}_{\sharp,\flat} (f_1,\dots, f_n) \right\| &\leq (12 N)^n \prod_{\ell=1}^n K_\ell^{\flat_{\ell-1}, \sharp_\ell} \\ 
\left\|  \Pi^{(1)}_{\sharp,\flat} (f_1,\dots ,f_n;g) \right\| &\leq (12 N)^n \sqrt{N} \| g \| \prod_{\ell=1}^n K_\ell^{\flat_{\ell-1}, \sharp_\ell} 
\end{split} \]
\end{lemma}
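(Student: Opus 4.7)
The plan is to prove both bounds in (\ref{eq:Pi-bds}) by induction on the order $n$, viewing the $\Pi^{(2)}$- and $\Pi^{(1)}$-operators as concatenations of $n$ ``pairs'' of creation/annihilation operators sharing a common momentum. More precisely, in the momentum-space expression (\ref{eq:Pi2}), the $\ell$th pair is $b^{\flat_0}_{\alpha_0 p_1} a^{\sharp_1}_{\beta_1 p_1}$ for $\ell=1$, the standard $A$-pair $a^{\flat_{\ell-1}}_{\alpha_{\ell-1} p_\ell} a^{\sharp_\ell}_{\beta_\ell p_\ell}$ for $2 \leq \ell \leq n-1$, and $a^{\flat_{n-1}}_{\alpha_{n-1} p_n} b^{\sharp_n}_{\beta_n p_n}$ for $\ell=n$. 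After summing in $p_\ell$ and inserting $f_\ell(p_\ell)$, each inner pair becomes precisely an operator of the form $A_{\flat_{\ell-1},\sharp_\ell}(f_\ell)$ to which Lemma \ref{lm:Abds} applies, while the two outer pairs are of $B$-type handled by Lemma \ref{lm:Bbds}.

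For the base case, I take $n=1$ in the $\Pi^{(2)}$-bound: the operator coincides (up to relabeling) with $B_{\flat_0,\sharp_1}(f_1)$, and Lemma \ref{lm:Bbds} yields the desired bound with a constant $\sqrt{2} \leq 6 = 6^1$ and the correct $\cN_+$-weight $(\cN_++1)(N-\cN_++2)/N$, which is exactly $(\cN_++1)(1-(\cN_+-2)/N)$. For $\Pi^{(1)}$ of order $0$, i.e.\ $b^\sharp(\check g)$, the bound (\ref{lm:bbds}) gives the inequality directly with prefactor $1 = 6^0$. For the inductive step I peel off the leftmost pair (the one containing $b^{\flat_0}$) and write
\[
\Pi^{(2)}_{\sharp,\flat}(f_1,\dots,f_n)\,\xi
\;=\; \sum_{p_1} f_1(p_1)\, b^{\flat_0}_{\alpha_0 p_1}\, a^{\sharp_1}_{\beta_1 p_1}\, \Xi(p_1)
\]
where $\Xi(p_1)$ is the remaining $\Pi^{(1)}$-like operator of order $n-1$ (with a free $a$-operator indexed by $p_1$ at its left end) acting on $\xi$. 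Bounding this sum by Lemma \ref{lm:Abds} applied to a suitable $A$-type operator in $p_1$ gives a factor $\sqrt 2\, K_1^{\flat_0,\sharp_1}$ and a norm of $(\cN_++1)$ applied to the vector produced by the inner operator; the induction hypothesis then controls that vector by $6^{n-1}\prod_{\ell\geq 2} K_\ell^{\flat_{\ell-1},\sharp_\ell}$ times $\|(\cN_++1)^n (1-(\cN_+-2)/N)\xi\|$. The $\Pi^{(1)}$-case is treated in parallel, carrying the trailing $a^{\flat_n}(g)$-factor through the argument and absorbing its contribution via $\|a^\sharp(g)\eta\|\leq \|g\|\|(\cN_++1)^{1/2}\eta\|$, which explains the extra $\sqrt{N}\,\|g\|$ and the half-integer exponent $n+1/2$ on the number operator.

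The main technical obstacle is bookkeeping for the number operator: in order to apply Lemma \ref{lm:Abds} at the inductive step one must pull a factor of $(\cN_++1)$ to the left past the pair $b^{\flat_0}_{\alpha_0 p_1} a^{\sharp_1}_{\beta_1 p_1}$ (or right past the pair at the other end), using $[a_p,\cN_+] = a_p$, $[a_p^*,\cN_+]= -a_p^*$ and the analogous identities for $b,b^*$. Because $\cN_+$ can shift by $\pm 1$ after each operator, these commutations generate a few additional lower-order terms whose cumulative effect is bounded by a multiplicative constant; together with the $\sqrt 2$ from Lemma \ref{lm:Abds} this is absorbed into the factor $6$ per pair, yielding the stated $6^n$ prefactor. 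Finally, since $\cN_+\leq N$ on $\cF^{\leq N}_+$, the quantity $(\cN_++1)^n(1-(\cN_+-2)/N)$ is bounded by $2(N+1)^n$, and $(\cN_++1)^{n+1/2}(1-(\cN_+-2)/N)^{1/2}$ by a constant times $N^{n+1/2}$, which combined with $6^n$ gives (after absorbing constants) the stated $(12N)^n$ operator-norm bounds.
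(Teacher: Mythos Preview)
Your approach is the one the paper intends: it gives no proof here but refers to \cite{BS}, saying only that the argument is simple and based on Lemma~\ref{lm:Abds}, i.e.\ precisely your iteration of the quadratic-pair bounds with $\cN_+$-bookkeeping.

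One cleanup is needed: the induction as you state it is not closed. After peeling off the leftmost pair the remaining operator begins with $a^{\flat_1}$ rather than $b^{\flat_1}$, so it is not itself a $\Pi^{(2)}$- or $\Pi^{(1)}$-operator in the sense of (\ref{eq:Pi2})--(\ref{eq:Pi1}), and your induction hypothesis does not apply to it verbatim. (Also, your $\Xi(p_1)$ does not actually depend on $p_1$: both occurrences of $p_1$ sit in the first pair, so the sum over $p_1$ factors out entirely as a single quadratic operator acting on the left.) The tidy fix is to write $b_p=\sqrt{(N-\cN_+)/N}\,a_p$ and $b_p^*=a_p^*\sqrt{(N-\cN_+)/N}$ once at each end, so that $\Pi^{(2)}$ becomes a bounded function of $\cN_+$ sandwiching a product of $n$ genuine $A_{\flat_{\ell-1},\sharp_\ell}(f_\ell)$ operators, and then apply Lemma~\ref{lm:Abds} $n$ times directly, pulling the accumulating powers of $(\cN_++1)$ to the right via $a_p\cN_+=(\cN_++1)a_p$, $a_p^*\cN_+=(\cN_+-1)a_p^*$. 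The two $\sqrt{(N-\cN_+)/N}$-type factors then produce the weight $(1-(\cN_+-2)/N)$, and the shifts absorbed at each step account for the constant $6$ per pair.
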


To conclude this section, we introduce generalized Bogoliubov transformations, and we discuss their main properties. For $\eta \in \ell^2 (\Lambda^*_+)$ with $\eta_{-p} = \eta_{p}$ for all $p \in \Lambda^*_+$, we define 
\begin{equation}\label{eq:defB} 
B(\eta) = \frac{1}{2} \sum_{q\in \Lambda^*_+}  \left( \eta_q b_q^* b_{-q}^* - \overline{\eta}_q b_q b_{-q} \right)  \end{equation}
and the unitary operator 
\begin{equation}\label{eq:eBeta} 
e^{B(\eta)} = \exp \left\{ \frac{1}{2} \sum_{q \in \Lambda^*_+}   \left( \eta_q b_q^* b_{-q}^* - \overline{\eta}_q  b_q b_{-q} \right) \right\}
\end{equation}
Notice that $B(\eta), e^{B(\eta)} : \cF_{+}^{\leq N} \to \cF_{+}^{\leq N}$. We will call unitary operators of the form (\ref{eq:eBeta}) generalized Bogoliubov transformations. The name arises from the observation that, on states with $\cN_+ \ll N$, we can expect that $b_q \simeq a_q$, $b^*_q \simeq a_q^*$ and therefore that 
\[ B(\eta) \simeq \wt{B} (\eta) = \frac{1}{2} \sum_{q \in \Lambda^*_+} \left( \eta_q a_q^* a_{-q}^* - \overline{\eta}_q a_q a_{-q} \right)  \]
Since $\wt{B} (\eta)$ is quadratic in creation and annihilation operators, the unitary operator $\exp (\wt{B} (\eta))$ is a standard Bogoliubov transformation, whose action on creation and annihilation operators is explicitly given by 
\begin{equation}\label{eq:act-Bog} e^{-\wt{B} (\eta)} a_p e^{\wt{B} (\eta)} = \cosh (\eta_p) a_p + \sinh (\eta_p) a_{-p}^*  
\end{equation}
As explained in the introduction, since the Bogoliubov transformation in (\ref{eq:act-Bog}) does not map $\cF_+^{\leq N}$ in itself, in the following it will be convenient for us to work with generalized Bogoliubov transformations of the form (\ref{eq:eBeta}). The price we have to pay is the fact that there is no explicit expression like (\ref{eq:act-Bog}) for the action of (\ref{eq:eBeta}). Hence, we  need other tools to control the action of generalized Bogoliubov transformations. A first result, whose proof can be found in \cite{BS} and which will play an important role in the sequel, is the fact that conjugating with (\ref{eq:eBeta}) does not change the momenta of the number of particles operator substantially, if $\eta \in \ell^2 (\Lambda^*_+)$ (the same result was previously established in \cite{Sei}).
\begin{lemma}\label{lm:Ngrow}
Let $\eta \in \ell^2 (\Lambda^*_+)$ and $B(\eta)$ as in (\ref{eq:defB}). Then, for every $n_1, n_2 \in \bZ$, there exists a constant $C > 0$ (depending also on $\| \eta\|$)  such that 
\[ e^{-B(\eta)} (\cN_+ +1)^{n_1} (N+1-\cN_+ )^{n_2} e^{B(\eta)} \leq C (\cN_+ +1)^{n_1} (N+1- \cN_+ )^{n_2} \]
on $\cF^{\leq N}_+$.
\end{lemma}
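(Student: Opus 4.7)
The natural route is a Grönwall-type argument on the scalar function
\[ \varphi_\xi(s) := \langle \xi,\, e^{-sB(\eta)}\,F(\cN_+)\, e^{sB(\eta)}\, \xi \rangle, \qquad s\in[0,1], \]
where $F(\cN_+) := (\cN_+ +1)^{n_1}(N+1-\cN_+)^{n_2}$ and $\xi\in\cF^{\leq N}_+$ is arbitrary. The goal is the differential inequality $|\varphi_\xi'(s)| \leq C\,\varphi_\xi(s)$ with $C$ depending only on $\|\eta\|_2$, $n_1$, $n_2$ (and \emph{not} on $N$); Grönwall's lemma then yields $\varphi_\xi(1) \leq e^C\,\varphi_\xi(0)$, which is precisely the claimed operator bound since $\xi$ was arbitrary.

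To compute $\varphi_\xi'(s) = \langle \xi_s, [B(\eta), F(\cN_+)]\,\xi_s\rangle$ with $\xi_s := e^{sB(\eta)}\xi$, I would use that $[b_q^*,\cN_+]=-b_q^*$ (a consequence of the commutation relations following (\ref{eq:comm-b2})) to see that each $b_q^* b_{-q}^*$ raises $\cN_+$ by $2$. Hence $g(\cN_+)\, b_q^* b_{-q}^* = b_q^* b_{-q}^*\, g(\cN_+ + 2)$ for any $g$, and so
\[ [B(\eta), F(\cN_+)] \;=\; \tfrac12 \sum_{q\in\Lambda^*_+} \eta_q\, b_q^* b_{-q}^*\bigl[F(\cN_+) - F(\cN_+ + 2)\bigr] \;-\; \text{h.c.} \]
On the range of $b_q^* b_{-q}^*$ (which vanishes unless $\cN_+ \leq N-2$) the polynomial weight satisfies $c_{n_1,n_2}\,F(\cN_+) \leq F(\cN_+ + 2) \leq C_{n_1,n_2}\,F(\cN_+)$, with constants depending only on $n_1,n_2$, because both ratios $(m+3)/(m+1)$ and $(N-1-m)/(N+1-m)$ stay in the fixed interval $[1,3]$ independently of $N$ for $m\leq N-2$. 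Splitting $F^{1/2}$ symmetrically between bra and ket via the pull-through $b_q^* b_{-q}^*\, F(\cN_+)^{1/2} = F(\cN_+ - 2)^{1/2}\, b_q^* b_{-q}^*$ and applying Cauchy-Schwarz together with the quadratic bound $\sum_q |\eta_q|^2 \|b_{-q} b_q \zeta\|^2 \leq \|\eta\|_2^2 \|(\cN_++1)\zeta\|^2$ (which follows from Lemma \ref{lm:Bbds} applied to $B_{\cdot,\cdot}(\eta)$), I would arrive at $|\varphi_\xi'(s)| \leq C\,\varphi_\xi(s)$.

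The fundamental difficulty is that, unlike the standard Bogoliubov transformation $\widetilde T$ in (\ref{eq:wtT}), the generalized transformation $e^{B(\eta)}$ has no explicit action on individual modes, so a closed-form computation of $e^{-B(\eta)}\cN_+ e^{B(\eta)}$ is unavailable; the Grönwall strategy circumvents this by extracting only the \emph{ratio} $F(\cN_+)/F(\cN_+\pm 2)$, and the delicate point is making this ratio uniform in $N$ when $n_2<0$. This uniformity is precisely guaranteed by restricting to $\cF^{\leq N}_+$, on which $b_q^* b_{-q}^*$ annihilates the top two sectors so that $N+1-\cN_+ \geq 3$ on its range, keeping $(N+1-\cN_+)/(N-1-\cN_+)$ bounded by $3$ uniformly in $N$. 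Once this observation is in place, the remaining manipulations are routine.
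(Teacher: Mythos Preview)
Your Gr\"onwall strategy is exactly the one used in the references the paper cites (the proof is deferred to \cite{BS}), and your commutator formula and the two-sided ratio bound $c\,F(\cN_+)\le F(\cN_++2)\le C\,F(\cN_+)$ on the relevant sectors are correct. The gap is in the Cauchy--Schwarz step: after pulling $F^{1/2}$ to both sides you obtain an expression of the form $\sum_q\eta_q\langle\zeta,b_q^*b_{-q}^*\,h(\cN_+)\,\zeta\rangle$ with $\zeta=F(\cN_+)^{1/2}\xi_s$ and $h$ \emph{bounded}. Estimating this via Lemma~\ref{lm:Bbds} (or via $\sum_q|\eta_q|^2\|b_{-q}b_q\zeta\|^2$) produces a factor $\|(\cN_++1)\zeta\|$ rather than $\|\zeta\|$, so you end up with
\[
|\varphi_\xi'(s)|\;\le\;C\,\|\eta\|_2\,\|(\cN_++1)F^{1/2}\xi_s\|\,\|F^{1/2}\xi_s\|,
\]
which is \emph{not} bounded by $C\varphi_\xi(s)$. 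The ratio bound $|G|\le C F$ is simply too coarse to close the loop.

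What is actually needed is the sharper ``derivative'' estimate
\[
\bigl|F(m)-F(m+2)\bigr|\;\le\;C_{n_1,n_2}\,F(m)\Bigl(\frac{1}{m+1}+\frac{1}{N+1-m}\Bigr),\qquad 0\le m\le N-2,
\]
which follows from writing $F(m)-F(m+2)=-\int_0^2\frac{d}{dt}\bigl[(m+1+t)^{n_1}(N+1-m-t)^{n_2}\bigr]\,dt$ and bounding the logarithmic derivative. One then splits $b_q^*$ and $b_{-q}^*$ between bra and ket (not both on one side), and uses $\sum_q\|b_q\psi\|^2=\langle\psi,\cN_+\tfrac{N+1-\cN_+}{N}\psi\rangle$ together with $\|b_{-q}^*\psi\|^2\le\langle\psi,(\cN_++1)\tfrac{N-\cN_+}{N}\psi\rangle$. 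The extra $\cN_+$ from the $q$-sum cancels the $(\cN_++1)^{-1}$ in the first piece of the bound on $|G|$, while the factor $(N-\cN_+)/N$ built into the $b$-operators cancels the $(N+1-\cN_+)^{-1}$ in the second piece. With this refinement the differential inequality $|\varphi_\xi'(s)|\le C\varphi_\xi(s)$ does close, uniformly in $N$.
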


Controlling the change of the number of particles operator is not enough for our purposes. Instead, we will often need to express the action of generalized Bogoliubov transformations by means of convergent series of nested commutators. We start by noticing that, for any $p \in \Lambda^*_+$, 
\[\begin{split} e^{-B(\eta)} b_p e^{B(\eta)} &= b_p + \int_0^1 ds \, \frac{d}{ds}  e^{-sB(\eta)} b_p e^{sB(\eta)} \\ &= b_p - \int_0^1 ds \, e^{-sB(\eta)} [B(\eta), b_p] e^{s B(\eta)} \\ &= b_p - [B(\eta),b_p] + \int_0^1 ds_1 \int_0^{s_1} ds_2 \, e^{-s_2 B(\eta)} [B(\eta), [B(\eta),b_p ]] e^{s_2 B(\eta)} 
\end{split} \]
Iterating $m$ times, we obtain
\begin{equation}\label{eq:BCH} \begin{split} 
e^{-B(\eta)} b_p e^{B(\eta)} = &\sum_{n=1}^{m-1} (-1)^n \frac{\text{ad}^{(n)}_{B(\eta)} (b_p)}{n!} \\ &+ \int_0^{1} ds_1 \int_0^{s_1} ds_2 \dots \int_0^{s_{m-1}} ds_m \, e^{-s_m B(\eta)} \text{ad}^{(m)}_{B(\eta)} (b_p) e^{s_m B(\eta)} \end{split} \end{equation}
where we introduced the notation $\text{ad}_{B(\eta)}^{(n)} (A)$  defined recursively by
\[ \text{ad}_{B(\eta)}^{(0)} (A) = A \quad \text{and } \quad \text{ad}^{(n)}_{B(\eta)} (A) = [B(\eta), \text{ad}^{(n-1)}_{B(\eta)} (A) ]  \]
We will show later that, under suitable assumptions on $\eta$, the error term on the r.h.s. of (\ref{eq:BCH}) is negligible in the limit $m \to \infty$. This means that the action of the generalized Bogoliubov transformation $e^{B(\eta)}$ on $b_p$ and similarly on $b^*_p$ can be described in terms of the nested commutators $\text{ad}^{(n)}_{B(\eta)} (b_p)$ and $\text{ad}^{(n)}_{B(\eta)} (b^*_p)$. In the next lemma, we give a detailed analysis of these operators. 
\begin{lemma}\label{lm:indu}
Let $\eta \in \ell^2 (\Lambda^*_+)$ be such that $\eta_p = \eta_{-p}$ for all $p \in \ell^2 (\Lambda^*)$. To simplify the notation, assume also $\eta$ to be real-valued (as it will be in applications). Let $B(\eta)$ be defined as in (\ref{eq:defB}), $n \in \bN$ and $p \in \Lambda^*_+$. Then the nested commutator $\text{ad}^{(n)}_{B(\eta)} (b_p)$ can be written as the sum of exactly $2^n n!$ terms, with the following properties. 
\begin{itemize}
\item[i)] Possibly up to a sign, each term has the form
\begin{equation}\label{eq:Lambdas} \Lambda_1 \Lambda_2 \dots \Lambda_i \, N^{-k} \Pi^{(1)}_{\sharp,\flat} (\eta^{j_1}, \dots , \eta^{j_k} ; \eta^{s}_p \ph_{\alpha p}) 
\end{equation}
for some $i,k,s \in \bN$, $j_1, \dots ,j_k \in \bN \backslash \{ 0 \}$, $\sharp \in \{ \cdot, * \}^k$, $ \flat \in \{ \cdot, * \}^{k+1}$ and $\alpha \in \{ \pm 1 \}$ chosen so that $\alpha = 1$ if $\flat_k = \cdot$ and $\alpha = -1$ if $\flat_k = *$ (recall here that $\ph_p (x) = e^{-ip \cdot x}$). In (\ref{eq:Lambdas}), each operator $\Lambda_w : \cF^{\leq N}_+ \to \cF^{\leq N}_+$, $w=1, \dots , i$, is either a factor $(N-\cN_+ )/N$, a factor $(N+1-\cN_+ )/N$ or an operator of the form
\begin{equation}\label{eq:Pi2-ind} N^{-h} \Pi^{(2)}_{\sharp',\flat'} (\eta^{z_1}, \eta^{z_2},\dots , \eta^{z_h}) \end{equation}
for some $h, z_1, \dots , z_h \in \bN \backslash \{ 0 \}$, $\sharp,\flat  \in \{ \cdot , *\}^h$. 
\item[ii)] If a term of the form (\ref{eq:Lambdas}) contains $m \in \bN$ factors $(N-\cN_+ )/N$ or $(N+1-\cN_+ )/N$ and $j \in \bN$ factors of the form (\ref{eq:Pi2-ind}) with $\Pi^{(2)}$-operators of order $h_1, \dots , h_j \in \bN \backslash \{ 0 \}$, then 
we have
\begin{equation}\label{eq:totalb} m + (h_1 + 1)+ \dots + (h_j+1) + (k+1) = n+1 \end{equation}
\item[iii)] If a term of the form (\ref{eq:Lambdas}) contains (considering all $\Lambda$- and $\Pi^{(1)}$-operators) the arguments $\eta^{i_1}, \dots , \eta^{i_m}$ and the factor $\eta^{s}_p$ for some $m, s \in \bN$ and some $i_1, \dots , i_m \in \bN \backslash \{ 0 \}$, then \[ i_1 + \dots + i_m + s = n .\]
\item[iv)] There is exactly one term having the form  
\begin{equation}\label{eq:iv1} \left(\frac{N-\cN_+ }{N} \right)^{n/2} \left(\frac{N+1-\cN_+ }{N} \right)^{n/2} \eta^{n}_p b_p 
\end{equation}
if $n$ is even, and 
\begin{equation}\label{eq:iv2} - \left(\frac{N-\cN_+ }{N} \right)^{(n+1)/2} \left(\frac{N+1-\cN_+ }{N} \right)^{(n-1)/2} \eta^{n}_p b^*_{-p}  \end{equation}
if $n$ is odd.
\item[v)] If the $\Pi^{(1)}$-operator in (\ref{eq:Lambdas}) is of order $k \in \bN \backslash \{ 0 \}$, it has either the form  
\[ \sum_{p_1, \dots , p_k}  b^{\flat_0}_{\alpha_0 p_1} \prod_{i=1}^{k-1} a^{\sharp_i}_{\beta_i p_{i}} a^{\flat_i}_{\alpha_i p_{i+1}}  a^*_{-p_k} \eta^{2r}_p  a_p \prod_{i=1}^k \eta^{j_i}_{p_i}  \]
or the form 
\[\sum_{p_1, \dots , p_k} b^{\flat_0}_{\alpha_0 p_1} \prod_{i=1}^{k-1} a^{\sharp_i}_{\beta_i p_{i}} a^{\flat_i}_{\alpha_i p_{i+1}}  a_{p_k} \eta^{2r+1}_p a^*_{-p} \prod_{i=1}^k \eta^{j_i}_{p_i}  \]
for some $r \in \bN$, $j_1, \dots , j_k \in \bN \backslash \{ 0 \}$. If it is of order $k=0$, then it is either given by $\eta^{2r}_p b_p$ or by $\eta^{2r+1}_p b_{-p}^*$, for some $r \in \bN$. 
\item[vi)] For every non-normally ordered term of the form 
\[ \begin{split} &\sum_{q \in \Lambda^*} \eta^{i}_q a_q a_q^* , \quad \sum_{q \in \Lambda^*} \, \eta^{i}_q b_q a_q^* \\  &\sum_{q \in \Lambda^*} \, \eta^{i}_q a_q b_q^*, \quad \text{or } \quad \sum_{q \in \Lambda^*} \, \eta^{i}_q b_q b_q^*  \end{split} \]
appearing either in the $\Lambda$-operators or in the $\Pi^{(1)}$-operator in (\ref{eq:Lambdas}), we have $i \geq 2$.
\end{itemize}
\end{lemma}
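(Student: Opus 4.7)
The proof proceeds by induction on $n \geq 0$. The base case $n=0$ is immediate: $\text{ad}^{(0)}_{B(\eta)}(b_p) = b_p$ is the unique $2^0 \cdot 0! = 1$ term; choosing $i=0$, $k=0$ and viewing $b_p$ itself as a $\Pi^{(1)}$-operator of order zero of the form $\eta_p^0 b_p$ verifies (i)--(vi) trivially (and matches (\ref{eq:iv1}) at $n=0$ with all scalar prefactors equal to $1$).

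For the inductive step, assume the statement holds at level $n-1$. Writing
\[
\text{ad}^{(n)}_{B(\eta)}(b_p) = [B(\eta), \text{ad}^{(n-1)}_{B(\eta)}(b_p)],
\]
one distributes $[B(\eta),\cdot]$ via Leibniz over every factor $\Lambda_w$ and over the terminal $\Pi^{(1)}$-operator in each of the $2^{n-1}(n-1)!$ summands at level $n-1$. The elementary computations needed are the commutators of $B(\eta)$ with each type of building block: (a) with the scalars $(N-\cN_+)/N$ and $(N+1-\cN_+)/N$, using $[\cN_+, b_q^* b_{-q}^*] = 2\, b_q^* b_{-q}^*$, which produce $\pm N^{-1}$ times a $\Pi^{(2)}$-operator of order $1$ weighted by a single power of $\eta$; (b) with the outermost $b$-fields of a $\Pi^{(2)}$-operator inside a $\Lambda$-factor or of the terminal $\Pi^{(1)}$-operator, using (\ref{eq:comm-bp})--(\ref{eq:comm-b2}), which produce both a main term with a new $(N\pm\cN_+)/N$ prefactor extending the $\eta$-chain and a $-1/N$ correction that absorbs into the adjacent $a$-string, thereby enlarging the chain order. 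Each such elementary commutator decomposes into exactly two summands, one from the $b^* b^*$ half of $B(\eta)$ and one from the $bb$ half; a direct inspection of the total number of commutation targets shows that the cumulative count is $2n \cdot 2^{n-1}(n-1)! = 2^n n!$, as required, and that each output term preserves the template (\ref{eq:Lambdas}) of property (i).

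The conservation laws (ii) and (iii) follow by tracking: every elementary commutator of type (a) or (b) increments the structural count $m + \sum_{r}(h_r+1) + (k+1)$ by one (giving (ii)) and adds exactly one power of $\eta$, attached either to a $\Pi^{(2)}$-factor, to the $\Pi^{(1)}$-chain, or to the free $\eta_p^{s}$ weight (giving (iii)). The unique principal term (iv) is extracted by iteratively selecting, at each induction level, the contraction of $B(\eta)$ with the outermost $b$-operator of the $\Pi^{(1)}$-block; this choice maintains the minimal chain (of order zero), alternates the terminal field between $b_p$ and $b_{-p}^*$, and accumulates one factor of $(N-\cN_+)/N$ or $(N+1-\cN_+)/N$ per step according to whether a $b$ or a $b^*$ was contracted, yielding precisely (\ref{eq:iv1})--(\ref{eq:iv2}). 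Uniqueness is forced by the observation that the maximal weight $\eta_p^n$ can arise only if every commutator contracts the free momentum $p$ with itself.

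The structural constraints (v) and (vi) are preserved by the induction but require the most care. Property (v) holds because the type-(b) commutators only extend the $\Pi^{(1)}$-chain at its outer end or spawn independent $\Pi^{(2)}$-blocks inside the $\Lambda$-part, never disrupting the alternating $\sharp$-$\flat$ pattern of the chain interior, and the shape of the outermost factor ($a^*_{-p_k}\eta_p^{2r} a_p$ vs.\ $a_{p_k}\eta_p^{2r+1} a_{-p}^*$) is exactly the parity-alternating pattern generated by successive contractions. Property (vi) follows by tracing the provenance of non-normally-ordered pairs: a summand of the form $\sum_q \eta_q^i a_q a_q^*$ (or its $b$-variants) can only appear when a previous commutator has introduced such a pair carrying one power $\eta_q$, and a later commutator has closed it by supplying a second $\eta_q$, forcing $i \geq 2$. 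The main obstacle is precisely this combinatorial bookkeeping: one must enumerate all outputs of $[B(\eta),\cdot]$ on every type of building block and verify case-by-case that the template (\ref{eq:Lambdas}) and all constraints (ii)--(vi) are jointly preserved, with (vi) being the most delicate since it requires tracking the history of every $\eta_q$-weight attached to a contracted pair across the entire nested-commutator expansion.
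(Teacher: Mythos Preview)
Your approach---induction on $n$, Leibniz rule, and case analysis of $[B(\eta),\cdot]$ on each building block---is exactly the paper's. The base case and the overall counting $2n\cdot 2^{n-1}(n-1)!=2^n n!$ are right. However, two concrete gaps keep this from being a complete proof.

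\textbf{Missing commutator cases.} Your type (b) only treats the \emph{outermost} $b$-fields of $\Pi^{(2)}$- and $\Pi^{(1)}$-blocks. You never analyze $[B(\eta), a^{\sharp}_{\beta p_r} a^{\flat}_{\alpha p_{r+1}}]$ for the \emph{interior} number-conserving pairs, nor the commutator with the terminal pair $a^{\sharp_k}_{\beta_k p_k} a^{\flat_k}_{\alpha p}$ in the $\Pi^{(1)}$-operator. By (\ref{2.2.Betacommutators}) these commutators do not vanish: they split a $\Pi^{(2)}$- or $\Pi^{(1)}$-block into two shorter blocks (cf.\ the paper's (\ref{eq:replaa})--(\ref{eq:replaa2}) and (\ref{eq:restaa-Pi1l})--(\ref{eq:restaa-Pi1ll})). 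Without these cases the term count is wrong (there are not $n$ targets per term under your enumeration), the structure (\ref{eq:Lambdas}) is not shown to be preserved, and---crucially---the commutator with the terminal pair is exactly what drives the parity alternation in (v).

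\textbf{Wrong mechanism for (vi).} Your explanation (``a previous commutator introduced one $\eta_q$, a later one closes it with a second'') is not how (vi) works. The actual mechanism, visible once the missing cases are written out, is \emph{local}: whenever $[B(\eta),\cdot]$ flips a field type---e.g.\ $b^{\flat_0}\to b^{\bar\flat_0}$ in (\ref{eq:first-L}), or $a^{\sharp_r}\to a^{\bar\sharp_r}$ in (\ref{eq:replaa})---that flip is accompanied by a simultaneous increment of the adjacent $\eta$-power from $z_r\geq 1$ to $z_r+1\geq 2$. A non-normally-ordered pair can only arise via such a flip, so its kernel automatically carries $\eta^{\geq 2}$. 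No ``history tracking'' across multiple commutator levels is needed; (vi) is preserved step by step.
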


\begin{proof}
The proof is a translation in momentum space of the proof of Lemma 3.2 in \cite{BS}. For completeness, we repeat here the main steps. We proceed by induction. For $n=0$ the claims are clear. For the induction from $n$ to 
$n+1$ we will repeatedly use the relations
\begin{equation}\label{2.2.Betacommutators}
    \begin{split}
    [B(\eta), b_p ] &= - \frac{(N-\cN_+)}{N} \eta_p b^*_{-p}  +  \frac{1}{N}\sum_{q\in\Lambda_+^*} b^*_q a^*_{-q} a_p  \eta_q \\
    &=-\eta_p b^*_{-p}\frac{(N-\cN_+ +1)}{N}  + \frac{1}{N}\sum_{q\in\Lambda_+^*}  a_p a^*_{-q} b^*_q \eta_q ,\\
    [B(\eta), b^*_p ] &= -\eta_p b_{-p}\frac{(N-\cN_+)}{N}  +  \frac{1}{N} \sum_{q\in\Lambda_+^*}a^*_p  a_{-q} b_q  \eta_q\\
    &=- \frac{(N-\cN_+ +1)}{N} \eta_p b_{-p}  +  \frac{1}{N}
    \sum_{q\in\Lambda_+^*}b_q a_{-q}   a^*_p   \eta_q ,\\
    [B(\eta), a^*_p a_q] & = [B(\eta), a_q a^*_p]= - b^*_p b_{-q}^*\eta_q - \eta_p b_{-p} b_q,\\
    [B(\eta), N-\cN_+ ]&= \sum_{q\in\Lambda_+^*}\eta_q(b^*_qb^*_{-q} +b_qb_{-q} ).
    \end{split}
    \end{equation}
Since $ \operatorname {ad}_{B(\eta)}^{(n+1)}(b_p)=[B(\eta),\operatorname{ad}_{B(\eta)}^{(n)}(b_p)]$, by linearity it is enough to analyze
    \begin{equation} \label{eq:comm-step}
    \left[ B(\eta), \Lambda_1 \Lambda_2\dots \Lambda_i N^{-k} \Pi^{(1)}_{\sharp,\flat} (\eta^{j_1}, \dots , \eta^{j_k} ; \eta^{s}_p \ph_{\alpha p}) \right] 
\end{equation}
with $ \Lambda_1 \Lambda_2\dots \Lambda_i N^{-k} \Pi^{(1)}_{\sharp,\flat} (\eta^{j_1}, \dots , \eta^{j_k} ; \eta^{s}_p \ph_{\alpha p})$ satisfying properties (i) to (vi). By Leibniz, the commutator (\ref{eq:comm-step}) is a sum of terms, where $B(\eta)$ is either commuted with a $\Lambda$-operator, or with the $\Pi^{(1)}$-operator. 

First, consider the case that $B(\eta)$ is commuted with a $\Lambda$-operator. If  $\Lambda$ is either equal $(N-\cN_+)/N$ or to $(N+1-\cN_+)/N$, the last identity in \eqref{2.2.Betacommutators} implies that, after commutation with $B(\eta)$, $\Lambda$ should be replaced by \begin{equation}\label{eq:Pi2-repl} N^{-1} \Pi^{(2)}_{*,*} (\eta) + N^{-1}\Pi^{(2)}_{\cdot, \cdot} ({\eta})
\end{equation}
This generates two terms contributing to $\text{ad}^{(n+1)}_{B(\eta)} (b_p)$. Let us check that these new terms satisfy (i)-(vi), with $n$ replaced by $(n+1)$. (i) is obviously true. Also (ii) remains true because, when replacing $(N-\cN_+)/N$ or $(N+1-\cN_+)/N$ by one of the two summands in (\ref{eq:Pi2-repl}), the index $m$ decreases by one but, at the same time, we have one more $\Pi^{(2)}$-operator of order one (which means that $j$ is replaced by $j+1$, and that there is an additional factor $h_{j+1} + 1 = 2$ in the sum (\ref{eq:totalb})). Since exactly one additional factor $\eta$ is inserted, also (iii) remains true. The $\Pi^{(1)}$-operator is not affected by the replacement, so also (v) continues to hold true. Since both terms in (\ref{eq:Pi2-repl}) are normally ordered, (vi) remains valid as well, by the induction assumption. Finally, the two terms generated in (\ref{eq:Pi2-repl}) are not of the form appearing in (iv).
    
Next, we consider the commutator of $B(\eta)$ with an operator of the form (\ref{eq:Pi2-ind}) for some $h\in \bN$, with $h\leq n$ by (ii). By definition
    \begin{equation}\label{eq:L2}
    \Lambda =  N^{-h} \sum_{p_1, \dots , p_h \in \Lambda^*}  b^{\flat'_0}_{\alpha_0 p_1} a_{\beta_1 p_1}^{\sharp'_1} a_{\alpha_1 p_2}^{\flat'_1} a_{\beta_2 p_2}^{\sharp'_2} a_{\alpha_2 p_3}^{\flat'_2} \dots  a_{\beta_{h-1} p_{h-1}}^{\sharp'_{h-1}} a_{\alpha_{h-1} p_h}^{\flat'_{h-1}} b^{\sharp'_h}_{\beta_h p_h} \, \prod_{\ell=1}^h \eta^{z_l}_{p_l}  
    \end{equation}
When $ [B(\eta), \cdot]$ hits $ b^{\flat'_0}_{\alpha_0p_1}$, the first two equations in \eqref{2.2.Betacommutators} imply that $\Lambda$ is replaced by 
the sum of two operators. The first operator is either 
\begin{equation}\label{eq:first-L} \begin{split} -\frac{N-\cN_+}{N} N^{-h} \Pi^{(2)}_{\sharp^{'},\wt{\flat^{'}}} &(\eta^{z_1+1}, \eta^{z_2},\dots , \eta^{z_h}) \quad \text{or } \\  &\quad -\frac{N-\cN_++1}{N} N^{-h} \Pi^{(2)}_{\sharp^{'},\wt{\flat^{'}}} (\eta^{z_1+1}, \eta^{z_2},\dots , \eta^{z_h}) \end{split} \end{equation}
depending on whether $\flat^{'}_0 = \cdot$ or $\flat^{'}_0 = *$ (here $\wt{\flat^{'}} = (\bar{\flat'_0}, \flat^{'}_1, \dots , \flat^{'}_{h-1})$ with $\bar{\flat^{'}_0} = \cdot$ if $\flat^{'}_0 = *$ and $\bar{\flat^{'}_0} = *$ if $\flat^{'}_0 = \cdot$). The second operator is a $\Pi^{(2)}$-operator of order $(h+1)$, given by
\begin{equation}\label{eq:second-L} N^{-(h+1)} \Pi^{(2)}_{\wt{\sharp'},\wt{\flat'}} (\eta, \eta^{z_1}, \eta^{z_2},\dots , \eta^{z_h}) \end{equation}
where $\wt{\sharp} = (\bar{\flat^{'}_0}, \sharp^{'}_1, \dots , \sharp^{'}_h)$, $\wt{\flat} = (\bar{\flat^{'}_0}, \flat^{'}_0, \dots , \flat^{'}_{h-1})$. 

In both cases (i) is clearly correct and (ii) remains true as well (when we replace (\ref{eq:L2}) with (\ref{eq:first-L}), the number of $(N-\cN_+)/N$ or $(N-\cN_++1)/N$-operators increases by one, while everything else remains unchanged; similarly, when we replace (\ref{eq:L2}) with (\ref{eq:second-L}), the order of the $\Pi^{(2)}$-operator increases by one, while the rest remains unchanged). (iii) also remains true, since in (\ref{eq:first-L}) the power $z_1 +1$ of the first $\eta$-kernel is increased by one unit and in (\ref{eq:second-L}) there is one additional factor $\eta$, compared with (\ref{eq:L2}). (v) remains valid, since the $\Pi^{(1)}$-operator on the right is not affected by this commutator. (vi) remains true in (\ref{eq:first-L}), because $z_1+1 \geq 2$ . It remains true also in (\ref{eq:second-L}). In fact, according to (\ref{2.2.Betacommutators}), when switching from (\ref{eq:L2}) to (\ref{eq:second-L}), we are effectively replacing $b \to b^* a^* a$ or $b^* \to b a a^*$. Hence, the first pair of operators in (\ref{eq:second-L}) is always normally ordered. As for the second pair of creation and annihilation operators (the one associated with the function $\eta^{z_1}$ in (\ref{eq:second-L})), the first field is of the same type as the original $b$-field appearing in (\ref{eq:L2}); non-normally ordered pairs cannot be created. Finally, we remark that the terms we generated here are certainly not of the form in (iv).

The same arguments can be applied if $B(\eta)$ hits the factor $b^{\sharp'_h}_{\beta_h p_h}$ on the right of (\ref{eq:L2}) (in this case, we use the identities for the first two commutators in \eqref{2.2.Betacommutators} having the $b$-field to the left of the factors $(N+1-\cN_+)/N$ and $(N-\cN_+)/N$ and to the right of the $a_{p} a_{-q}^*$ and $a_p^* a_{-q}$ operators). 

If instead $B(\eta)$ hits a term $a^*_{p_{r}}a_{p_{r+1}}$ or $a_{p_{r}} a^*_{p_{r+1}}$ in (\ref{eq:L2}), for an $r=1, \dots, h-1$, then, by \eqref{2.2.Betacommutators}, $\Lambda$ is replaced by the sum of 
the two terms, given by
\begin{equation}\label{eq:replaa} - \left[ N^{-r} \Pi^{(2)}_{\sharp^{''},\flat^{''}} (\eta^{z_1}, \eta^{z_2},\dots , \eta^{z_r+1}) \right] \left[ N^{-(h-r)}  \Pi^{(2)}_{\sharp^{'''},\flat^{'''} } (\eta^{z_{r+1}}, \eta^{z_2},\dots , \eta^{z_h}) \right] \end{equation}
and by
\begin{equation}\label{eq:replaa2} - \left[ N^{-r} \Pi^{(2)}_{\sharp^{''''}, \flat^{''}} (\eta^{z_1}, \eta^{z_2},\dots , \eta^{z_r}) \right] \left[ N^{-(h-r)} \Pi^{(2)}_{\sharp^{'''},\flat^{''''}} (\eta^{z_{r+1}+1}, \eta^{z_2},\dots , \eta^{z_h}) \right] \end{equation}
with $\flat^{''} = (\flat^{'}_0, \dots , \flat^{'}_{r-1})$, $\flat^{'''} = (\flat^{'}_r, \dots , \flat^{'}_{h-1})$, $\flat^{''''} = (\bar{\flat^{'}}_r, \flat^{'}_{r+1}, \dots , \flat^{'}_{h-1})$ and with $\sharp^{''} = (\sharp^{'}_1, \dots , \sharp^{'}_{r-1},\bar{\sharp^{'}}_r)$, $\sharp^{'''} = (\sharp^{'}_{r+1}, \dots , \sharp^{'}_h)$, $\sharp^{''''} = (\sharp^{'}_1, \dots , \sharp^{'}_r)$ (here, we denote $\bar{\sharp^{'}}_r = *$ if $\sharp^{'}_r = \cdot$ and $\bar{\sharp^{'}}_r = \cdot$ if $\sharp^{'}_r = *$, and similarly for $\bar{\flat^{'}}_{r-1}$). 
Obviously, the new terms containing (\ref{eq:replaa}) and (\ref{eq:replaa2}) satisfy (i). (ii) remains valid since the contribution of the original $\Lambda$ to the sum in (\ref{eq:totalb}), which was given by $(h+1)$ is now given by $(r+1)+(h-r+1) = h+2$. Also (iii) continues to be true, because for both terms (\ref{eq:replaa}) and (\ref{eq:replaa2}), there is one new additional factor $\eta$. Moreover, the terms we generated do not have the form (iv). Since the $\Pi^{(1)}$-operator is unaffected, (v) remains true. As for (vi), we observe that non-normally ordered pairs can only be created where $\sharp^{'}_r$ is changed to $\bar{\sharp^{'}}_{r}$ (in the term where $\sharp^{''}$ appears) or where $\flat^{'}_r$ is changed to $\bar{\flat^{'}}_r$ (in the term where $\flat'''$ appears). In both cases, however, the change $\sharp^{'}_r \to \bar{\sharp^{'}}_r$ and $\flat^{'}_r \to \bar{\flat^{'}}_r$ comes together with an increase in the power of $\eta$ (i.e. $\eta^{z_r}$ is changed to $\eta^{z_r+1}$ in the first case, while $\eta^{z_{r+1} }$ is changed to $\eta^{z_{r+1} + 1}$ in the second case). Since $z_r + 1, z_{r+1} + 1 \geq 2$, (vi) is still satisfied.

Next, let us consider the terms arising from commuting $B(\eta)$ with the operator 
\begin{equation} \label{eq:Pi1-term}
\begin{split} 
& N^{-k} \Pi^{(1)}_{\sharp,\flat} (\eta^{j_1}, \dots , \eta^{j_k} ; \eta^{s}_p \ph_{\alpha p}) \\ &= \sum_{p_1, \dots , p_k \in \Lambda^*}  b^{\flat_0}_{\alpha_0 p_1} a_{\beta_1 p_1}^{\sharp_1} a_{\alpha_1 p_2}^{\flat_1} a_{\beta_2 p_2}^{\sharp_2} a_{\alpha_2 p_3}^{\flat_2} \dots a_{\beta_{k-1} p_{k-1}}^{\sharp_{k-1}} a_{\alpha_{k-1} p_k}^{\flat_{k-1}} a^{\sharp_k}_{\beta_k p_k} a^{\flat_k}_{\alpha p}  \eta^{s}_p \, \prod_{\ell=1}^n \eta^{j_l}_{p_l} \end{split}\end{equation}
The arguments are very similar to the case when $B(\eta)$ is commuted with a $\Pi^{(2)}$-operator of the form (\ref{eq:L2}). In particular, if $B(\eta)$ hits $b^{\flat_0}_{\alpha_0 p_1} $, (\ref{eq:Pi1-term}) is replaced by the sum of two terms, the first one being 
\[\begin{split}  - \frac{N-\cN_+}{N} N^{-k} \Pi^{(1)}_{\sharp,\wt{\flat}} (\eta^{j_1+1},& \dots , \eta^{j_k} ; \eta^{s}_p \ph_{\alpha p}) \qquad \text{or } \\ &- \frac{N-\cN_++1}{N} N^{-k} \Pi^{(1)}_{\sharp,\wt{\flat}} (\eta^{j_1+1}, \dots , \eta^{j_k} ; \eta^{s}_p \ph_{\alpha p}) \end{split} \]
depending on whether $\flat_0 = \cdot$ or $\flat_0 = *$ (with $\wt{\flat} = (\bar{\flat}_0, \flat_1, \dots,  \flat_{k-1})$) and the second one being 
\[ N^{-(k+1)} \Pi^{(1)}_{\wt{\sharp}, \wt{\flat}} (\eta, \eta^{j_1}, \dots , \eta^{j_k} ; \eta^{s}_p \ph_{\alpha p})\]
with $\wt{\sharp} = (\bar{\flat}_0, \sharp_1, \dots , \sharp_k)$ and $\wt{\flat} = (\bar{\flat}_0, \flat_1, \dots , \flat_{k})$. As for (\ref{eq:first-L}) and (\ref{eq:second-L}) above, one can show that (i), (ii), (iii), (v), (vi) remain valid. Property (iv) will be discussed below.  

If $B(\eta)$ is commuted with one of the factors $a_{p_r}^{\sharp_r} a_{p_{r+1}}^{\flat_r}$ for an $r=1, \dots , k-1$, the resulting two terms will be given by 
\begin{equation}\label{eq:replaa-Pi1} - \left[ N^{-r} \Pi^{(2)}_{\sharp^{''}, \flat^{''} } (\eta^{j_1}, \dots , \eta^{j_r+1} ; \eta^{s}_p \ph_{\alpha p}) \right] \left[ N^{-(k-r)} \Pi^{(1)}_{\sharp^{'''},\flat^{'''}} (\eta^{j_{r+1}}, \dots , \eta^{j_k} ; \eta^{s}_p \ph_{\alpha p}) \right] 
\end{equation}
and by
\begin{equation}\label{eq:replaa2-Pi1} - \left[ N^{-r} \Pi^{(2)}_{\sharp^{''''}, \flat^{''}} (\eta^{j_1}, \dots , \eta^{j_r} ; \eta^{s}_p \ph_{\alpha p}) \right] \left[ N^{-(k-r)} \Pi^{(1)}_{\sharp^{'''},\flat^{''''}} (\eta^{j_{r+1}+1}, \dots , \eta^{j_k} ; \eta^{s}_p \ph_{\alpha p}) \right] \end{equation}
with $\sharp^{''}, \sharp^{'''},\sharp^{''''}$ and $\flat^{''},\flat^{'''}, \flat^{''''}$ as defined after (\ref{eq:replaa2}). Proceeding analogously as for (\ref{eq:replaa2}), these terms satisfy (i),(ii),(iii),(v),(vi).

Let us next consider the case that (\ref{eq:Pi1-term}) hits the last pair of operators appearing in (\ref{eq:Pi1-term}). From the induction assumption, this pair either equals $\eta^{2r} a_{p_k}^* a_p$ or $\eta^{2r+1} a_{p_k} a^*_{-p}$. In the first case, (\ref{eq:Pi1-term}) is replaced by
\begin{equation}\label{eq:restaa-Pi1l} - \Pi^{(2)}_{\sharp, \flat'} (\eta^{j_1}, \dots , \eta^{j_k}) \, \eta_p^{2r+1}b^*_{-p}   - \Pi^{(2)}_{\sharp',\flat'} (\eta^{j_1}, \dots , \eta^{j_k+1}) \, \eta_p^{2r}  b_p \end{equation}
In the second case, it is replaced by 
\begin{equation}\label{eq:restaa-Pi1ll} - \Pi^{(2)}_{\sharp',\flat'} (\eta^{j_1}, \dots , \eta^{j_k+1}) \, \eta_p^{2r+1}b^*_{-p}  - \Pi^{(2)}_{\sharp, \flat'} (\eta^{j_1}, \dots , \eta^{j_k}) \, \eta_p^{2r+2}  b_p \end{equation}
In (\ref{eq:restaa-Pi1l}), (\ref{eq:restaa-Pi1ll}), we used the notation $\flat' = (\flat_0, \dots , \flat_{k-1})$, $\sharp' = (\sharp_1, \dots , \bar{\sharp}_k)$. {F}rom the expression (\ref{eq:restaa-Pi1l}), (\ref{eq:restaa-Pi1ll}), we infer that also here (i), (ii), (iii), (v), (vi) are satisfied.

As for (iv), from the induction assumption there is exactly one term, in the expansion for $\text{ad}_{B(\eta)}^{(n)}( b_p)$, given by (\ref{eq:iv1}) if $n$ is even and by (\ref{eq:iv2}) if $n$ is odd. As an example, let us consider (\ref{eq:iv1}). If we commute the zero-order $\Pi^{(1)}$-operator $\eta_p^{n} b_p$ in (\ref{eq:iv1}) with $B(\eta)$, we obtain exactly  the term in (\ref{eq:iv2}), with $n$ replaced by $(n+1)$ (together with a second term, containing a $\Pi^{(1)}$-operator of order one). Similarly, if we take (\ref{eq:iv2}) and commute the $\Pi^{(1)}$-operator $\eta_p^{n} b^*_{-p}$ with $B(\eta)$, we get (\ref{eq:iv1}), with $n$ replaced by $(n+1)$. Considering the terms above, it is clear that there can be only exactly one term with this form. This shows that also in the expansion for $\text{ad}_{B(\eta)}^{(n+1)} (b_p)$, there is precisely one term of the form given in (iv). 

We conclude the proof by counting the number of terms in the expansion for the nested commutator $\text{ad}^{(n+1)}_{B(\eta)} (b_p)$. By the inductive assumption, $\operatorname{ad}_{B(\eta)}^{(n)}(b_p)$ can be expanded in a sum of exactly $2^n n!$ terms. $(ii)$ implies that each of these terms is a product of exactly $(n+1)$ operators, each of them being either $(N-\cN_+)$, $(N-(\cN_+-1))$, a field operator $b^{\sharp}_q$ or a quadratic factor $a_u^\sharp a_q^\flat$ commuting with the number of particles operator. By \eqref{2.2.Betacommutators}, the commutator of $B(\eta)$ with each such factor gives a sum of two terms. Therefore, by the product rule, $\operatorname{ad}_{B(\eta)}^{(n+1)}(b_p)$ contains $ 2^n(n!)\times 2(n+1) = 2^{(n+1)}((n+1)!)$ summands. 
\end{proof}

Using Lemma \ref{lm:indu} the remainder terms in the expansion (\ref{eq:BCH}) can be estimated in the same way as in Lemma \cite[Lemma 3.3]{BS}. The outcome is stated in the next lemma, whose proof is a translation into momentum space of the proof of \cite[Lemma 3.3]{BS}.
\begin{lemma}\label{lm:conv-series}
Let $\eta \in \ell^2 (\Lambda^*_+)$ be symmetric, with $\| \eta \|$ sufficiently small. Then we have 
\begin{equation}\label{eq:conv-serie}
\begin{split} e^{-B(\eta)} b_p e^{B (\eta)} &= \sum_{n=0}^\infty \frac{(-1)^n}{n!} \text{ad}_{B(\eta)}^{(n)} (b_p) \\
e^{-B(\eta)} b^*_p e^{B (\eta)} &= \sum_{n=0}^\infty \frac{(-1)^n}{n!} \text{ad}_{B(\eta)}^{(n)} (b^*_p) \end{split} \end{equation}
where the series on the r.h.s. are absolutely convergent. 
\end{lemma}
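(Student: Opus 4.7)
The plan is to start from the finite Taylor expansion (\ref{eq:BCH}) and show two things: (a) the partial sums converge absolutely as $m \to \infty$, and (b) the remainder integral in (\ref{eq:BCH}) converges to zero in norm. Both will follow from a norm bound of the form
\[ \left\| \operatorname{ad}_{B(\eta)}^{(n)} (b_p) \, \xi \right\| \;\leq\; n! \, (C \|\eta\|)^n \, \left\| (\cN_+ + 1)^{1/2} \xi \right\| \]
valid on $\cF_+^{\leq N}$ for some universal constant $C > 0$; once this is established, absolute convergence of the series in (\ref{eq:conv-serie}) is immediate for $\|\eta\|$ small, and the remainder can be disposed of using Lemma \ref{lm:Ngrow}.

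To prove the displayed bound, I would apply Lemma \ref{lm:indu} to expand $\operatorname{ad}_{B(\eta)}^{(n)}(b_p)$ as a sum of at most $2^n n!$ terms, each of the form (\ref{eq:Lambdas}). Each such term is a product of $\Lambda$-factors and a single $\Pi^{(1)}$-operator. The $\Lambda$-operators are either $(N-\cN_+)/N$ or $(N+1-\cN_+)/N$, which are bounded by $1$ on $\cF_+^{\leq N}$, or $N^{-h} \Pi^{(2)}$-operators. For the $\Pi^{(2)}$- and $\Pi^{(1)}$-pieces I would invoke Lemma \ref{lm:Pi-bds}, which produces factors $K_\ell^{\flat_{\ell-1}, \sharp_\ell} = \|\eta^{j_\ell}\|_2$ in the normally ordered case and $\|\eta^{j_\ell}\|_2 + \|\eta^{j_\ell}\|_1$ in the non-normally ordered case. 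The key point is that property (vi) of Lemma \ref{lm:indu} guarantees $j_\ell \geq 2$ precisely when an $\ell^1$-norm is required, and for $j \geq 2$ one has $\|\eta^{j}\|_1 = \sum_p |\eta_p|^j \leq \|\eta\|_\infty^{j-2} \|\eta\|_2^2 \leq \|\eta\|_2^j$, so that in all cases $K_\ell^{\flat_{\ell-1},\sharp_\ell} \leq 2 \|\eta\|_2^{j_\ell}$. Multiplying these factors across all $\Pi$-operators and using property (iii) (total $\eta$-power equals $n$), each of the $2^n n!$ terms is bounded by $C^n \|\eta\|^n$ times $\|(\cN_+ +1)^{1/2} \xi\|$ (the factors of $N$ from $\Pi^{(2)}$-operators of order $h$ are compensated by the prefactors $N^{-h}$, and the boundedness of the $\cN_+$-polynomial factors on $\cF_+^{\leq N}$ absorbs the powers $(\cN_+ + 1)^{n+1/2}$ up to the factor $(\cN_+ +1)^{1/2}$).

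Summing $2^n n!$ such terms and dividing by $n!$ produces the series $\sum_n (2C\|\eta\|)^n$, which converges absolutely for $\|\eta\| < 1/(2C)$. To handle the remainder term
\[ \int_0^1 ds_1 \cdots \int_0^{s_{m-1}} ds_m \, e^{-s_m B(\eta)} \operatorname{ad}_{B(\eta)}^{(m)}(b_p) \, e^{s_m B(\eta)} \xi , \]
I would insert $e^{s_m B(\eta)} e^{-s_m B(\eta)}$ to view it as $e^{-s_m B(\eta)} \operatorname{ad}_{B(\eta)}^{(m)}(b_p) \, \eta_m$ acting on $\eta_m := e^{s_m B(\eta)}\xi$, and then apply the norm bound above with $\xi$ replaced by $\eta_m$. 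Lemma \ref{lm:Ngrow} controls the growth of $\|(\cN_+ +1)^{1/2} e^{s_m B(\eta)} \xi\|$ uniformly in $s_m \in [0,1]$ by a constant times $\|(\cN_+ +1)^{1/2}\xi\|$, while the iterated time integral contributes a factor $1/m!$ that cancels the $m!$ from the combinatorics. Thus the remainder is bounded by $(2C\|\eta\|)^m \|(\cN_+ +1)^{1/2} \xi\|$, which tends to zero as $m \to \infty$. The same argument applies verbatim to $b_p^*$, since Lemma \ref{lm:indu} admits an analogous formulation with $b_p$ replaced by $b_p^*$.

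The main technical nuisance is the careful bookkeeping showing that the $\ell^1$-norms forced by Lemma \ref{lm:Pi-bds} whenever a non-normally ordered pair appears are always associated with powers $\eta^j$ with $j \geq 2$; this is exactly the role of property (vi) of Lemma \ref{lm:indu}, and its verification is what allows us to reduce every norm bound to powers of $\|\eta\|_2$ alone. Everything else is a routine combination of Lemmas \ref{lm:Pi-bds} and \ref{lm:Ngrow} with the structural information supplied by Lemma \ref{lm:indu}.
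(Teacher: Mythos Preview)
Your proposal is correct and follows precisely the strategy the paper has in mind: the paper does not give its own proof of this lemma but refers to \cite[Lemma 3.3]{BS}, and your argument---bound each of the $2^n n!$ terms in $\text{ad}_{B(\eta)}^{(n)}(b_p)$ via Lemma~\ref{lm:Pi-bds}, use property (vi) of Lemma~\ref{lm:indu} to control the $\ell^1$-norms by $\|\eta\|_2$-powers, invoke property (iii) to obtain the factor $\|\eta\|^n$, and kill the remainder with Lemma~\ref{lm:Ngrow}---is exactly that proof. One small clarification worth making explicit: to chain the $\Lambda$-operators you should use their \emph{operator norm} bounds from Lemma~\ref{lm:Pi-bds} (which are uniform in $N$ thanks to the $N^{-h}$ prefactors), reserving the vector bound only for the final $\Pi^{(1)}$-factor; this is what produces the single surviving $(\cN_++1)^{1/2}$.
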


\section{The excitation Hamiltonian} 
\label{sec:ex}

We define the unitary operator $U_N: L^2_s (\Lambda^N) \to \cF_+^{\leq N}$ as in (\ref{eq:Uph-def}). In terms of creation and annihilation operators, the map $U_N$ is given by 
\[ U_N \psi_N = \bigoplus_{n=0}^N  (1-|\ph_0 \rangle \langle \ph_0|)^{\otimes n} \frac{a_0^{N-n}}{\sqrt{(N-n)!}} \psi_N \]
for all $\psi_N \in L^2_s (\Lambda^N)$ (here we identify $\psi_N \in L^2_s (\Lambda^N)$ with the vector $\{ \dots , 0, \psi_N, 0, \dots \} \in \cF$). The map $U_N^* : \cF_{+}^{\leq N} \to L^2_s (\Lambda^N)$ is given by 
\[ U_N^* \{ \psi^{(0)}, \dots , \psi^{(N)} \} = \sum_{n=0}^N \frac{a^* (\ph_0)^{N-n}}{\sqrt{(N-n)!}} \psi^{(n)} \]
It is useful to compute the action of $U_N$ on the product of a creation and an annihilation operators. We find (see \cite{LNSS}):
\begin{equation}\label{eq:U-rules}
\begin{split} 
U_N a^*_0 a_0 U_N^* &= N- \cN_+  \\
U_N a^*_p a_0 U_N^* &= a^*_p \sqrt{N-\cN_+ } \\
U_N a^*_0 a_p U_N^* &= \sqrt{N-\cN_+ } a_p \\
U_N a^*_p a_q U_N^* &= a^*_p a_q 
\end{split} \end{equation}
for all $p,q \in \Lambda^*_+ = \Lambda^* \backslash \{ 0 \}$. Writing the Hamiltonian (\ref{eq:ham0}) in momentum space, we find
\[ H_N = \sum_{p \in \Lambda^*} p^2 a_p^* a_p + \frac{\kappa}{2N} \sum_{p,q,r \in \Lambda^*} \widehat{V} (r/N) a_p^* a_q^* a_{q-r} a_{p+r} \]
With (\ref{eq:U-rules}), we can conjugate $H_N$ with the map $U_N$, defining $\cL_N = U_N H_N U_N^* : \cF_+^{\leq N} \to \cF_+^{\leq N}$. We find
\begin{equation}\label{eq:cLN} \cL_N =  \cL^{(0)}_N + \cL^{(2)}_N + \cL^{(3)}_N + \cL^{(4)}_N \end{equation}
with
\begin{equation}\label{eq:cLNj} \begin{split} 
\cL_N^{(0)} =\;& \frac{(N-1)}{2N} \kappa \widehat{V} (0) (N-\cN_+ ) + \frac{\kappa \widehat{V} (0)}{2N} \cN_+  (N-\cN_+ ) \\
\cL^{(2)}_N =\; &\sum_{p \in \Lambda^*_+} p^2 a_p^* a_p + \sum_{p \in \Lambda_+^*} \kappa  \widehat{V} (p/N) \left[ b_p^* b_p - \frac{1}{N} a_p^* a_p \right] + \frac{\kappa}{2} \sum_{p \in \Lambda^*_+} \widehat{V} (p/N) \left[ b_p^* b_{-p}^* + b_p b_{-p} \right] \\
\cL^{(3)}_N =\; &\frac{\kappa}{\sqrt{N}} \sum_{p,q \in \Lambda_+^* : p+q \not = 0} \widehat{V} (p/N) \left[ b^*_{p+q} a^*_{-p} a_q  + a_q^* a_{-p} b_{p+q} \right] \\
\cL^{(4)}_N =\; & \frac{\kappa}{2N} \sum_{p,q \in \Lambda_+^*, r \in \Lambda^*: r \not = -p,-q} \widehat{V} (r/N) a^*_{p+r} a^*_q a_p a_{q+r} 
\end{split} \end{equation}
The superscript $j=0,2,3,4$ indicates the number of creation and annihilation operators appearing in $\cL_N^{(j)}$. As explained in the introduction, in the mean-field regime the term $\cL_N^{(0)}$ is the ground state energy of the Bose gas and the sum of the quadratic, cubic and quartic contributions can be bounded below by $\cN_+$, up to errors of order one (at least for positive definite interaction). This is not the case in the Gross-Pitaevskii regime we are considering here. To extract the important contributions to the energy that are still hidden in $\cL_N^{(2)}, \cL_N^{(3)}, \cL_N^{(4)}$, we need to conjugate $\cL_N$ with a generalized Bogoliubov transformation, as defined in (\ref{eq:eBeta}). 

To choose the function $\eta \in \ell^2 (\Lambda^*_+)$ entering (\ref{eq:defB}) and (\ref{eq:eBeta}), we consider the solution of the Neumann problem 
\begin{equation}\label{eq:scatl} \left(-\Delta + \frac{\kappa}{2} V \right) f_{\ell} = \lambda_{\ell} f_\ell \end{equation}
on the  ball $|x| \leq N\ell$ (we omit the $N$-dependence in the notation for $f_\ell$ and for $\lambda_\ell$; notice that $\lambda_\ell$ scales as $N^{-3}$), with the normalization $f_\ell (x) = 1$ if $|x| = N \ell$. It is also useful to define $w_\ell = 1-f_\ell$ (so that $w_\ell (x) = 0$ if $|x| > N \ell$). By scaling, we observe that $f_\ell (N.)$ satisfies the equation 
\[ \left( -\Delta + \frac{\kappa N^2}{2} V (Nx) \right) f_\ell (Nx) = N^2 \lambda_\ell f_\ell (Nx) \]
on the ball $|x| \leq \ell$. We choose $0 < \ell < 1/2$, so that the ball of radius $\ell$ is contained in the box $\Lambda$. We extend then $f_\ell (N.)$ to $\Lambda$, by choosing $f_\ell (Nx) = 1$ for all $|x| > \ell$. Then  
\begin{equation}\label{eq:scatlN}
 \left( -\Delta + \frac{\kappa N^2}{2} V (Nx) \right) f_\ell (Nx) = N^2 \lambda_\ell f_\ell (Nx) \chi_\ell (x) 
\end{equation}
where $\chi_\ell$ is the characteristic function of the ball of radius $\ell$. In particular, $x \to w_\ell (Nx)$ is compactly supported and it can be extended to a periodic function on the torus $\Lambda$. The Fourier coefficients of the function $x \to w_\ell (Nx)$ are given by 
\[ \frac{1}{(2\pi)^3} \int_{\Lambda} w_\ell (Nx) e^{-i p \cdot x} dx = \frac{1}{N^3} \widehat{w}_\ell (p/N) \]
where \[ \widehat{w}_\ell (p) = \frac{1}{(2\pi)^3} \int_{\bR^3} w_\ell (x) e^{-ip \cdot x} dx \] is the Fourier transform of the function $w_\ell$. {F}rom (\ref{eq:scatlN}), we find the following relation for the Fourier coefficients of $w_\ell (Nx)$: 
\begin{equation}\label{eq:wellp}
\begin{split}  - p^2 \widehat{w}_\ell (p/N) +  \frac{\kappa N^2}{2} \widehat{V} (p/N) - \frac{\kappa}{2N} &\sum_{q \in \Lambda^*} \widehat{V} ((p-q)/N) \widehat{w}_\ell (q/N) \\ &= N^5 \lambda_\ell \widehat{\chi}_\ell (p) - N^2 \lambda_\ell \sum_{q \in \Lambda^*} \widehat{\chi}_\ell (p-q) \widehat{w}_\ell (q/N) \end{split} \end{equation}

In the next lemma we collect some important properties of $w_\ell, f_\ell$; its proof can be found in \cite[Lemma A.1]{ESY0} and in \cite[Lemma 4.1]{BS} (exchanging $V$ with $\kappa V$ and following the $\kappa$-dependence of the bounds). Notice that this lemma is the reason why we require that $V \in L^3 (\bR^3)$; for the rest of the analysis $V \in L^2 (\bR^3)$ would be enough. 
\begin{lemma} \label{3.0.sceqlemma}
Let $V \in L^3 (\bR^3)$ be non-negative, compactly supported and spherically symmetric. Fix $\ell > 0$ and let $f_\ell$ denote the solution of \eqref{eq:scatl}. 
\begin{enumerate}
\item [i)] We have 
\[ \lambda_\ell = \frac{3a_0}{N^3 \ell^3} \left(1 + \mathcal{O} (a_0 / N\ell) \right) \]
\item[ii)] We have $0\leq f_\ell, w_\ell \leq 1$ and \begin{equation}\label{eq:Vfa0} \left| \kappa \int  V(x) f_\ell (x) dx - 8\pi a_0 \right| \leq \frac{C\kappa}{N} \, . 
\end{equation}    
\item[iii)] There exists a constant $C>0 $ such that 
	\begin{equation}\label{3.0.scbounds1} 
	w_\ell(x)\leq \frac{C\kappa}{|x|+1} \quad\text{ and }\quad |\nabla w_\ell(x)|\leq \frac{C \kappa}{x^2+1}. 
	\end{equation}
for all $|x| \leq N \ell$.
\item[iv)] There exists a constant $C > 0$ such that 
\[ |\widehat{w}_\ell (p)| \leq \frac{C \kappa}{p^2} \]
for all $p \in \Lambda^*_+$.    
\end{enumerate}        
\end{lemma}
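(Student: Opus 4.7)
The plan is to reduce all four claims to a quantitative comparison between the Neumann solution $f_\ell$ and the infinite-volume zero-energy scattering solution $f$ of $(-\Delta + \tfrac{\kappa}{2} V) f = 0$ on $\RRR^3$, which satisfies $f(x) = 1 - a_0/|x|$ outside $\operatorname{supp} V$. The Neumann condition at $|x| = N\ell$ forces $f_\ell$ to differ from $f$ only by a harmonic correction of size $O(a_0/(N\ell))$, so one expects $\|f_\ell - f\|_\infty = O(a_0/(N\ell))$ uniformly on the ball $B_{N\ell} = \{|x| \leq N\ell\}$, and all four bounds should follow by exploiting this.

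For (i) and the bound in (ii), I would integrate the eigenvalue equation over $B_{N\ell}$. The Laplacian contribution vanishes by the Neumann boundary condition, leaving
\[
\frac{\kappa}{2} \int_{B_{N\ell}} V f_\ell \, dx = \lambda_\ell \int_{B_{N\ell}} f_\ell \, dx.
\]
On the left, the comparison $f_\ell \approx f$ together with the defining identity $\tfrac{\kappa}{2} \int V f = 4\pi a_0$ immediately yields (ii) in the form $\kappa \int V f_\ell = 8\pi a_0 + O(\kappa/N)$. On the right, the $1/|x|$-decay of $w_\ell$ proved in (iii) gives $\int_{B_{N\ell}} f_\ell = \tfrac{4\pi}{3}(N\ell)^3 (1 + O(a_0/(N\ell)))$. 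Dividing produces the asymptotics in (i). The pointwise bounds $0 \leq f_\ell \leq 1$ follow because $f_\ell$ is the ground state of a Schr\"odinger operator with Neumann condition (hence positive by Perron--Frobenius), while $w_\ell = 1 - f_\ell$ satisfies $(-\Delta + \tfrac{\kappa}{2} V) w_\ell = \tfrac{\kappa}{2} V - \lambda_\ell f_\ell$ whose right-hand side is non-negative once the smallness of $\lambda_\ell$ from (i) is known; the maximum principle then gives $w_\ell \geq 0$.

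For (iii), outside $\operatorname{supp} V$ the function $w_\ell$ solves $-\Delta w_\ell = \lambda_\ell f_\ell = O(N^{-3})$, so it is a small perturbation of a harmonic function on the annulus separating $\operatorname{supp} V$ from $\partial B_{N\ell}$. Solving the radial ODE and matching to the Neumann condition at $|x| = N\ell$ gives $w_\ell(x) = a_0/|x| - a_0/(N\ell) + O(a_0 |x|^2/(N\ell)^3)$; since $a_0 = O(\kappa)$ for small $\kappa$ (the scattering length tends to zero with the coupling), this yields the claimed pointwise bound, and the gradient estimate follows by differentiation of the same expansion. For (iv), I would Fourier transform the equation $-\Delta w_\ell = -\tfrac{\kappa}{2} V f_\ell + \lambda_\ell f_\ell \mathbf{1}_{B_{N\ell}}$ on $\RRR^3$ to obtain
\[
p^2 \widehat{w}_\ell(p) = -\frac{\kappa}{2} \widehat{V f_\ell}(p) + \lambda_\ell \, \widehat{f_\ell \mathbf{1}_{B_{N\ell}}}(p).
\]
The first term is bounded by $\kappa \|V f_\ell\|_1 \leq C\kappa$ uniformly in $p$ (using $0 \leq f_\ell \leq 1$ and $V \in L^1$); the second by $\lambda_\ell \|f_\ell \mathbf{1}_{B_{N\ell}}\|_1 \lesssim \lambda_\ell (N\ell)^3 = O(a_0) = O(\kappa)$ by (i). Dividing by $p^2$ yields (iv).

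The main technical obstacle is establishing the pointwise comparison $\|f_\ell - f\|_\infty = O(a_0/(N\ell))$ with uniform $\kappa$-dependence: constants must remain bounded as $\kappa \to 0$, and the size of $\lambda_\ell$ needed in the maximum-principle step for (ii) is itself a consequence of (i), so a small bootstrap is required. Once the comparison is in place, all four assertions reduce to the elementary integration-by-parts, maximum-principle, and Fourier estimates sketched above.
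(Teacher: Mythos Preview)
The paper does not actually prove this lemma: it simply cites \cite[Lemma~A.1]{ESY0} and \cite[Lemma~4.1]{BS} and remarks that the $\kappa$-dependence of the constants is easy to track. Your sketch follows what is essentially the standard route found in those references --- integrate the equation over the ball to extract $\lambda_\ell$, compare $f_\ell$ to the zero-energy solution $f$, use the explicit radial ODE outside $\operatorname{supp} V$ for the pointwise decay, and Fourier-transform the equation for~(iv) --- so there is no genuine divergence in approach to discuss.

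Two small points are worth tightening. First, the equation you write for $w_\ell$ in the maximum-principle step of~(ii) is not quite right: from $(-\Delta+\tfrac{\kappa}{2}V)f_\ell=\lambda_\ell f_\ell$ one gets $(-\Delta+\tfrac{\kappa}{2}V-\lambda_\ell)w_\ell=\tfrac{\kappa}{2}V-\lambda_\ell$, and the right-hand side is \emph{negative} outside $\operatorname{supp} V$, so a direct maximum-principle argument on $w_\ell$ as you phrase it does not immediately give $w_\ell\geq 0$. The usual fix is either to compare $f_\ell$ with the zero-energy solution $f$ (which already satisfies $0\leq f\leq 1$) via a maximum principle for the difference, or to work with the explicit radial formula outside the support and check signs directly. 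Second, you correctly flag the bootstrap issue: (iii) feeds into the volume estimate for~(i), which feeds into the sign argument for~(ii); in practice one first gets a crude a~priori bound $\lambda_\ell=O((N\ell)^{-3})$ from a trial-function argument (take $f$ cut off to satisfy the Neumann condition), then establishes the pointwise bounds, and only then returns to sharpen~(i). With these two adjustments your outline is correct.
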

Using the solution $f_\ell$ of (\ref{eq:scatl}) and recalling that $w_\ell = 1 - f_\ell$, we define $\eta : \Lambda^* \to \bR$ through  
\begin{equation}\label{eq:ktdef} \eta_p = - \frac{1}{N^2}  \widehat{w}_\ell (p/N) 
\end{equation}
{F}rom Lemma \ref{3.0.sceqlemma}, it follows that 
\begin{equation}\label{eq:etap} |\eta_p | \leq \frac{C \kappa}{p^2} \end{equation}
and also that
\begin{equation}\label{eq:wteta0}  |\eta_0| \leq N^{-2} \int_{\bR^3} w_\ell (x) dx \leq C \kappa \end{equation}
Hence $\eta \in \ell^2 (\Lambda^*_+)$, uniformly in $N$. Another useful bound which can be proven with Lemma \ref{3.0.sceqlemma} (part iii)) is given by 
\begin{equation}\label{eq:etapN} \sum_{p \in \Lambda^*} p^2 |\eta_p|^2 = \| \nabla \check{\eta} \|_2^2 \leq C N \kappa^2 
\end{equation}
{F}rom (\ref{eq:wellp}), we obtain  
\begin{equation}\label{eq:eta-scat}
\begin{split} 
p^2 \eta_p + \frac{\kappa}{2} \widehat{V} (p/N) & + \frac{\kappa}{2N} \sum_{q \in \Lambda^*} \widehat{V} ((p-q)/N) \eta_q \\ &= N^3 \lambda_\ell \widehat{\chi}_\ell (p) + N^2 \lambda_\ell \sum_{q \in \Lambda^*} \widehat{\chi}_\ell (p-q) \eta_q
\end{split} \end{equation}

Using the coefficients $\eta_p$, for $p \not = 0$, we construct the generalized Bogoliubov transformation $e^{B(\eta)} : \cF_+^{\leq N} \to \cF^{\leq N}_+$ as in (\ref{eq:eBeta}). With it, we define the excitation Hamiltonian $\cG_N : \cF^{\leq N}_+ \to \cF^{\leq N}_+$ by setting (recall the definition (\ref{eq:cLN}) of the operator $\cL_N$)  
\begin{equation}\label{eq:GN} \cG_N = e^{-B(\eta)} \cL_N e^{B(\eta)} = e^{-B(\eta)} U_N H_N U_N^* e^{B(\eta)} 
\end{equation}
In the next proposition, we collect important properties of the self-adjoint operator $\cG_N$. 
\begin{prop}\label{prop:gene}
Let $V \in L^3 (\bR^3)$ be non-negative, compactly supported and  spherically symmetric and assume that the coupling constant $\kappa \geq 0$ is small enough. Then there exists a constant $C > 0$ such that, on $\cF_+^{\leq N}$,   
\begin{equation}\label{eq:bds-prop} 2\pi^2 \cN_+  - C \leq \frac{1}{2} (\cK + \cV_N) - C  \leq \cG_N - 4\pi a_0 N  \leq C (\cK + \cV_N + 1)  \end{equation}
where we used the notation 
\[ \cK = \sum_{p \in \Lambda_+^*} p^2 a_p^* a_p \qquad \text{and } \quad \cV_N = \frac{\kappa}{2N} \sum_{\substack{p,q \in \Lambda^*_+, r \in \Lambda^* \\ r \not = -p,-q}} \widehat{V} (r/N) \, a_{p+r}^* a_q^* a_p a_{q+r} \, . \]   
\end{prop}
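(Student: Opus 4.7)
The plan is to compute each of the four pieces $e^{-B(\eta)}\cL_N^{(j)}e^{B(\eta)}$ separately and then identify the cancellations that reduce the $\mathcal{O}(N)$ constant contribution to $4\pi a_0 N$. First I would handle the simplest piece $\cL_N^{(0)}$: conjugation by $e^{B(\eta)}$ only changes it through the action on $\cN_+$, and Lemma \ref{lm:Ngrow} controls the change. Next, for the quadratic part $\cL_N^{(2)}$ I would insert the expansion of Lemma \ref{lm:conv-series} for $e^{-B(\eta)} b_p e^{B(\eta)}$. The point is that the ``bad'' off-diagonal terms $b_p^* b_{-p}^* + b_p b_{-p}$ together with the term arising from commuting $\cK$ with $B(\eta)$ and with suitable pieces coming from the cubic $\cL_N^{(3)}$ and quartic $\cL_N^{(4)}$ pieces will combine exactly according to the scattering equation \eqref{eq:eta-scat}. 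Concretely, I expect the coefficient of the $b_p^* b_{-p}^*$ terms after conjugation to be proportional to
\[ p^2 \eta_p + \frac{\kappa}{2} \widehat{V}(p/N) + \frac{\kappa}{2N}\sum_q \widehat{V}((p-q)/N)\eta_q \]
which by \eqref{eq:eta-scat} equals a small remainder of order $N^{-1}$ plus a projection through $\widehat{\chi}_\ell$; this is precisely what produces the renormalized $8\pi a_0$ (through Lemma \ref{3.0.sceqlemma}(ii)) when contracted in the constant sector, and thus the $4\pi a_0 N$ shift.

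\textbf{Carrying out the expansions.} For each nested commutator $\operatorname{ad}_{B(\eta)}^{(n)}(\cdot)$, I would use the structural classification in Lemma \ref{lm:indu}: every term is a product of $\Lambda$-factors, $\Pi^{(2)}$-operators of given orders, and one $\Pi^{(1)}$-operator. The bounds of Lemma \ref{lm:Pi-bds}, combined with \eqref{eq:etap}--\eqref{eq:etapN} showing $\|\eta\|_2 = \mathcal{O}(\kappa)$ and $\sum_p p^2 \eta_p^2 = \mathcal{O}(N\kappa^2)$, allow me to estimate nearly every contribution by $C\kappa(\cN_+ +1)$ or $C\kappa(\cK+\cV_N+1)$ (with a constant independent of $N$). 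Terms with unbounded $\|\eta\|_1$ never occur for normally ordered pairs thanks to property (vi) in Lemma \ref{lm:indu}, which says non-normally-ordered terms always come with at least $\eta^2$, hence are summable in $\ell^1$ by \eqref{eq:etap}. The contributions of the quartic piece $\cL_N^{(4)}$ under conjugation produce $\cV_N$ itself (up to $\kappa$-small errors) together with extra cubic/quadratic pieces that complete the scattering cancellation with $\cL_N^{(3)}$; the trilinear piece $\cL_N^{(3)}$ under conjugation generates, beyond the cancellation terms, contributions bounded by $\kappa(\cK+\cV_N)$ using a Cauchy-Schwarz splitting $ab\leq \delta a^2+\delta^{-1}b^2$ between the kinetic localization and $\cV_N$, which is the standard way cubic errors are absorbed in the Gross-Pitaevskii regime.

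\textbf{Assembling the inequalities.} Once every non-principal term is shown to be bounded in absolute value by $C\kappa(\cK+\cV_N+1)$ or $C\kappa(\cN_+ + 1)$, choosing $\kappa$ small enough yields both directions:
\[ (1-C\kappa)(\cK+\cV_N) - C \leq \cG_N - 4\pi a_0 N \leq (1+C\kappa)(\cK+\cV_N)+C. \]
Taking $\kappa$ small enough that $1-C\kappa \geq 1/2$ gives the middle inequality $\tfrac12(\cK+\cV_N) - C \leq \cG_N - 4\pi a_0 N \leq C(\cK+\cV_N+1)$. The leftmost inequality $2\pi^2 \cN_+ \leq \tfrac12 \cK$ is then immediate because the smallest nonzero momentum in $\Lambda_+^* = 2\pi \bZ^3\setminus\{0\}$ has $p^2 \geq 4\pi^2$, so $\cK = \sum_p p^2 a_p^* a_p \geq 4\pi^2 \cN_+$, and $\cV_N\geq 0$ since $V\geq 0$.

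\textbf{Main obstacle.} The hard part is the bookkeeping of the scattering cancellation: one must keep track, to the required accuracy $\mathcal{O}(1)$ in $N$, of the contributions that combine to form $p^2\eta_p + \tfrac\kappa2 \widehat V(p/N) + \tfrac{\kappa}{2N}\sum_q \widehat V((p-q)/N)\eta_q$, and recognize via \eqref{eq:eta-scat} that this equals a harmless remainder. In particular one must check that the series of nested commutators from Lemma \ref{lm:indu} truncated at low order contains all the $\mathcal{O}(N)$ contributions, with the full infinite tail controlled by Lemma \ref{lm:conv-series} and the smallness of $\|\eta\|$. A secondary but substantial difficulty is ensuring that higher-order $\Pi^{(2)}$ and $\Pi^{(1)}$ operators (arising in the expansion of $e^{-B(\eta)}\cV_N e^{B(\eta)}$) really are controlled uniformly in $N$ by $\cK+\cV_N+\cN_+$; here property (iii) in Lemma \ref{lm:indu} that every extra commutator inserts an extra power of $\eta$ is decisive, since it yields geometric convergence in $\kappa$.
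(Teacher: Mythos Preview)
Your overall strategy matches the paper's proof almost exactly: decompose $\cG_N$ into the four conjugated pieces, extract from each the leading constant and off-diagonal $b_p^* b_{-p}^* + \text{h.c.}$ contributions, observe that their coefficient is precisely $p^2\eta_p + \tfrac{\kappa}{2}\widehat V(p/N) + \tfrac{\kappa}{2N}\sum_q \widehat V((p-q)/N)\eta_q$, invoke the scattering relation \eqref{eq:eta-scat} to reduce this to harmless remainders, and control all other terms by $\delta(\cK+\cV_N) + C\kappa(\cN_+ +1)$ using the nested-commutator machinery of Lemmas \ref{lm:indu}--\ref{lm:conv-series}. The final step via the spectral gap $\cK \geq 4\pi^2 \cN_+$ is also exactly what the paper does.

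One correction to your bookkeeping: the cubic piece $\cL_N^{(3)}$ does \emph{not} participate in the scattering cancellation. In the paper (Prop.~\ref{prop:G3}) the entire conjugated cubic term $\cG_N^{(3)}$ is absorbed as an error, $\pm\cG_N^{(3)} \leq \delta\cV_N + C\kappa(\cN_+ +1)$. The convolution term $\tfrac{\kappa}{2N}\sum_q \widehat V((p-q)/N)\eta_q$ in the off-diagonal coefficient arises solely from the first commutator of $\cL_N^{(4)}$ with $B(\eta)$ (see \eqref{eq:def-E4}), not from $\cL_N^{(3)}$. So the cancellation is between contributions of $\cK$, $\wt{\cL}_N^{(2)}$, and $\cL_N^{(4)}$ only. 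This does not affect the validity of your outline, but it simplifies the accounting: you do not need to track any ``cancellation pieces'' from the cubic term.
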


The proof of Proposition \ref{prop:gene} is, from the technical point of view, the main part of our paper. It is deferred to Section \ref{sec:prop} below. Using Proposition \ref{prop:gene} we can now complete the proof of Theorem \ref{thm:main}.

\begin{proof}[Proof of Theorem \ref{thm:main}]
{F}rom the upper bound in (\ref{eq:bds-prop}), taking the expectation in the vacuum $\Omega = \{ 1, 0, \dots , 0 \} \in \cF_+^{\leq N}$, we find
\[ \langle U_N^* e^{B(\eta)} \Omega, H_N  U_N^* e^{B(\eta)} \Omega \rangle = \langle \Omega, \cG_N \Omega \rangle \leq 4 \pi a_0 N + C \]
In particular, this implies that the ground state energy $E_N$ of $H_N$ is such that 
\begin{equation}\label{eq:EN-up} E_N \leq 4 \pi a_0 N + C \, .  \end{equation}

{F}rom the lower bound 
\[ 2 \pi^2 \cN_+ - C \leq \cG_N - 4 \pi a_0 N \]
in (\ref{eq:bds-prop}), conjugating with $e^{B(\eta)}$ and then with $U_N^*$ we find, using Lemma \ref{lm:Ngrow}, the inequality 
\begin{equation}\label{eq:HN-ref} H_N \geq 4 \pi a_0 N + c \, U_N^* \cN_+ U_N - C = 4 \pi a_0 N + c \sum_{j=1}^N (1- |\ph_0 \rangle \langle \ph_0|)_j - C \end{equation}
between operators on $L^2_s (\Lambda^N)$. Here $(1- |\ph_0 \rangle \langle \ph_0|)_j$ denotes the orthogonal projection $1- |\ph_0 \rangle \langle \ph_0|$ acting on the $j$-th particle. On the one hand, (\ref{eq:HN-ref}) implies that $H_N \geq 4 \pi a_0 N - C$ and therefore that 
\[ E_N \geq 4 \pi a_0 N - C \, . \]
Combined with (\ref{eq:EN-up}), this bound implies (\ref{eq:gs-bd}). On the other hand, (\ref{eq:HN-ref}) implies that for a normalized $\psi_N \in L^2_s (\Lambda^N)$ with
\[ \langle \psi_N, H_N \psi_N \rangle \leq 4 \pi a_0 N + K \]
and with one-particle reduced density $\gamma^{(1)}_N$ we must have
\[ K + C \geq c \sum_{j=1}^N \langle \psi_N, (1-|\ph_0 \rangle \langle \ph_0| )_j \psi_N \rangle = c \, N \left[ 1 - \langle \ph_0, \gamma^{(1)}_N \ph_0 \rangle \right] \]
which implies that 
\[ 1 -  \langle \ph_0, \gamma^{(1)}_N \ph_0 \rangle \leq \frac{C(K+1)}{N} \]
for an appropriate $C >0$. This shows (\ref{eq:conv-thm}) and concludes the proof of Theorem~\ref{thm:main}.
\end{proof}

\section{Analysis of the excitation Hamiltonian $\cG_N$}
\label{sec:prop}

In this section, we prove Proposition \ref{prop:gene}. To this end, 
we use (\ref{eq:cLN}) to decompose the excitation Hamiltonian (\ref{eq:GN}) as 
\begin{equation}\label{eq:cGN-deco} \cG_N = \cG_N^{(0)} + \cG_N^{(2)} + \cG^{(3)}_N + \cG^{(4)}_N \end{equation}
with 
\[ \cG_N^{(j)} = e^{-B(\eta)} \cL^{(j)}_N e^{B(\eta)} \]
and with $\cL_N^{(j)}$ as defined in (\ref{eq:cLNj}), for $j=0,2,3,4$. 

\subsection{Preliminary results}
\label{sub:prel}

Before analyzing the operators on the r.h.s. of (\ref{eq:cGN-deco}), we collect in the following Lemma some preliminary bounds that will be used frequently in the next subsections.
\begin{lemma}\label{lm:prel}
Let $\xi \in \cF^{\leq N}_+$, $p,q \in \Lambda^*_+$, $i_1, i_2, k_1, k_2, \ell_1, \ell_2 \in \bN$, $j_1, \dots , j_{k_1}$, $m_1, \dots , m_{k_2} \in \bN \backslash \{0 \}$ and $\alpha_i = (-1)^{\ell_i}$ for $i=1,2$. For $s \in \{1, \dots ,i_1 \}, s' \in \{ 1, \dots , i_2 \}$, let $\Lambda_s, \Lambda'_{s'}$ be either a factor $(N-\cN_+ )/N$, a factor $(N+1-\cN_+ )/N$ or a $\Pi^{(2)}$-operator of the form 
\begin{equation}\label{eq:Pi2-prel}
N^{-h} \Pi^{(2)}_{\underline{\sharp}, \underline{\flat}} (\eta^{z_1}, \dots , \eta^{z_h}) 
\end{equation}
for some $h \in \bN \backslash \{ 0 \}$, $z_1, \dots, z_h \in \bN \backslash \{ 0 \}$ and $\underline{\sharp}, \underline{\flat} \in \{ \cdot, * \}^h$. Suppose that the operators 
\begin{equation}\label{eq:Pi1-prel} \begin{split} &\Lambda_1 \dots \Lambda_{i_1} N^{-k_1} \Pi^{(1)}_{\sharp,\flat} (\eta^{j_1}, \dots , \eta^{j_{k_1}} ; \eta^{\ell_1}_p \ph_{\alpha_1 p} ) \\ 
&\Lambda'_1 \dots \Lambda'_{i_2} N^{-k_2} \Pi^{(1)}_{\sharp',\flat'} (\eta^{m_1}, \dots , \eta^{m_{k_2}} ; \eta^{\ell_2}_q \ph_{\alpha_2 q}) 
\end{split} \end{equation}
with some $\sharp \in \{ \cdot , * \}^{k_1}, \flat \in \{ \cdot, * \}^{k_1 + 1}, \sharp' \in \{ \cdot, * \}^{k_2}, \flat' \in \{ \cdot, * \}^{k_2+1}$, appear in the expansion of $\text{ad}^{(n)}_{B(\eta)} (b_p)$ and of  $\text{ad}^{(k)}_{B(\eta)} (b_q)$ for some $n,k \in \bN$, as described in Lemma \ref{lm:indu}. 
\begin{itemize}
\item[i)] For any $\beta \in \bZ$, let 
\[ \text{D} = (\cN_+ +1)^{(\beta-1)/2} \Lambda_1 \dots \Lambda_{i_1} N^{-k_1} \Pi^{(1)}_{\sharp , \flat} (\eta^{j_1} , \dots , \eta^{j_{k_1}} ; \eta^{\ell_1}_p \ph_{\alpha_1 p} ) \xi \]
and
\[\wt{D} = (\cN_+ +1)^{(\beta-1)/2} \Pi^{(1)}_{\sharp , \flat} (\eta^{j_1} , \dots , \eta^{j_{k_1}} ; \eta^{\ell_1}_p \ph_{\alpha_1 p} )^* \Lambda_{i_1}^* \dots \Lambda^*_1 \xi \] 
Then, we have
\begin{equation}\label{eq:D1} \| \text{D} \| , \| \wt{D}  \|  \leq C^{n} \kappa^{n} p^{-2\ell_1} \| (\cN_+ +1)^{\beta/2} \xi \| \end{equation}
If $\ell_1$ is even, we also find  
\begin{equation}\label{eq:D0} \| \text{D} \| \leq C^{n} \kappa^{n} p^{-2\ell_1} \| a_p (\cN_+ +1)^{(\beta-1)/2} \xi \| \end{equation}
\item[ii)] For $\beta \in \bZ$, let
\[ \begin{split} \text{E} = \; &(\cN_+ +1)^{(\beta-1)/2} \Lambda_1 \dots \Lambda_{i_1} N^{-k_1} \Pi^{(1)}_{\sharp,\flat} (\eta^{j_1}, \dots , \eta^{j_{k_1}} ; \eta^{\ell_1}_p \ph_{\alpha_1 p}) \\ &\hspace{3cm} \times \Lambda'_1 \dots \Lambda'_{i_2} N^{-k_2} \Pi^{(1)}_{\sharp',\flat'} (\eta^{m_1}, \dots , \eta^{m_{k_2}} ; \eta^{\ell_2}_q \ph_{\alpha_2 q} ) \xi \end{split} \]
Then, we have 
\begin{equation}\label{eq:E11} \| \text{E} \| \leq C^{n+k} \kappa^{n+k} p^{-2\ell_1} q^{-2\ell_2} \| (\cN_+ +1)^{(\beta + 1)/2} \xi \| \end{equation}
If $\ell_2$ is even, we find 
\begin{equation}\label{eq:E10} \| \text{E} \| \leq C^{n+k} \kappa^{n+k} p^{-2\ell_1} q^{-2\ell_2} \| a_q (\cN_+ +1)^{\beta/2} \xi \| \end{equation}
If $\ell_1$ is even, we have
\begin{equation}\label{eq:E01-b} \begin{split} \| \text{E} \| \leq \; &C^{n+k} k N^{-1} \kappa^{n+k} p^{-2(\ell_1+1)} q^{-2\ell_2} \| (\cN_+ +1)^{(\beta+1)/2} \xi \| \\ &+ C^{n+k} \kappa^{n+k}  p^{-2(\ell_1 + \ell_2)} \mu_{\ell_2} \delta_{p,-q}  \| (\cN_+ +1)^{(\beta-1)/2}\xi \| \\ &+ C^{n+k} \kappa^{n+k} p^{-2\ell_1} q^{-2\ell_2} \| a_p (\cN_+ +1)^{\beta/2} \xi \| \end{split} 
\end{equation}
where $\mu_{\ell_2} = 1$ if $\ell_2$ is odd and $\mu_{\ell_2}= 0$ if $\ell_2$ is even. If $\ell_1$ is even and either $k_1 > 0$ or $k_2 >0$ or there is at least one $\Lambda$- or $\Lambda'$-operator having the form (\ref{eq:Pi2-prel}), we obtain the improved bound 
\begin{equation}\label{eq:E01} \begin{split} \| \text{E} \| \leq \; &C^{n+k} k N^{-1} \kappa^{n+k} p^{-2(\ell_1+1)} q^{-2\ell_2} \| (\cN_+ +1)^{(\beta+1)/2} \xi \| \\ &+ C^{n+k} N^{-1} \kappa^{n+k}  p^{-2(\ell_1 + \ell_2)} \mu_{\ell_2} \delta_{p,-q} \| (\cN_+ +1)^{(\beta+1)/2}\xi \| \\ &+ C^{n+k} \kappa^{n+k} p^{-2\ell_1} q^{-2\ell_2} \| a_p (\cN_+ +1)^{\beta/2} \xi \| \end{split} \end{equation}
Finally, if $\ell_1 = \ell_2 = 0$, we can write
\begin{equation}\label{eq:Edeco}  \text{E} = \text{E}_1 (p,q) + E_2 \, a_p a_q \xi 
\end{equation}
where 
\[ \| \text{E}_1 (p,q) \| \leq C^{n+k} k N^{-1} \kappa^{n+k}  p^{-2} \| a_q (\cN_+ +1)^{\beta/2} \xi \| \]
and $E_2$ is a bounded operator on $\cF^{\leq N}_+$ with \begin{equation}\label{eq:E2-est} \| E_2^\natural \zeta \| \leq C^{n+k} \kappa^{n+k} \| (\cN_+ +1)^{(\beta-1)/2} \zeta \| \end{equation} 
for $\natural \in \{ \cdot , * \}$ and for all $\zeta \in \cF^{\leq N}_+$. If $k_1 > 0$ or $k_2 > 0$ or at least one of the $\Lambda$- or $\Lambda$'-operators has the form (\ref{eq:Pi2-prel}), we also have the improved bound \begin{equation}\label{eq:E2impr} \| E_2^{\natural} \zeta \| \leq C^{n+k} N^{-1} \kappa^{n+k}   \| (\cN_+ +1)^{(\beta+1)/2} \zeta \| \end{equation}
for $\natural \in \{ \cdot ,* \}$ and all $\zeta \in \cF^{\leq N}_+$.
\end{itemize}
\end{lemma}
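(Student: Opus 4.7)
The lemma is a bookkeeping device: each of the inequalities in (i)--(ii) is an operator-theoretic bound for a building block occurring in the expansion of $\text{ad}^{(n)}_{B(\eta)}(b_p)$ produced by Lemma \ref{lm:indu}. The plan is to apply Lemma \ref{lm:Pi-bds} and to exploit the decay $|\eta_p|\le C\kappa/p^{-2}$ from \eqref{eq:etap}, using crucially properties (ii), (iii), (v), (vi) of Lemma \ref{lm:indu} to account for powers of $N$, $\kappa$ and for the normal-ordering of internal pairs.

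\textbf{Preliminary bounds on $\eta$-norms.} First I would collect the numerical estimates $\|\eta^j\|_2^2=\sum_p |\eta_p|^{2j}\le C^{2j}\kappa^{2j}$ for $j\ge 1$ and $\|\eta^j\|_1\le C^j\kappa^j$ for $j\ge 2$; the latter follows because $|\eta_p|^j \le (C\kappa)^j p^{-2j}$ is summable precisely when $j\ge 2$. Property (vi) of Lemma \ref{lm:indu} guarantees that any non-normally-ordered pair inside a $\Lambda$- or $\Pi^{(1)}$-factor carries a power $\eta^i$ with $i\ge 2$, so whenever the $\ell^1$-norm option in Lemma \ref{lm:Pi-bds} is needed, it is available. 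Similarly, I will use $\|\eta^{\ell_1}_p \varphi_{\alpha_1 p}\|_2 = |\eta_p|^{\ell_1}\le C^{\ell_1}\kappa^{\ell_1}p^{-2\ell_1}$, which accounts for the $p^{-2\ell_1}$ factor in all estimates.

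\textbf{Part (i).} For the basic estimate \eqref{eq:D1}, I would bound each $\Lambda_s$ separately: factors $(N\pm \mathcal{N}_+)/N$ are $\le 1$ on $\cF_+^{\le N}$, while each $\Pi^{(2)}$-type factor $N^{-h}\Pi^{(2)}_{\underline\sharp,\underline\flat}(\eta^{z_1},\dots,\eta^{z_h})$ is handled by Lemma \ref{lm:Pi-bds}, whose $(12N)^h$ in front is exactly cancelled by the normalizing $N^{-h}$. This leaves a product of $\eta$-norms, which combined with property (iii) of Lemma \ref{lm:indu} gives the total power $\kappa^n$. The terminal $\Pi^{(1)}$ operator is then bounded by Lemma \ref{lm:Pi-bds} applied with $g=\eta_p^{\ell_1}\varphi_{\alpha_1 p}$. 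The estimate $\tilde D$ (with adjoints) is identical after reversing the order of the argument. For the improved bound \eqref{eq:D0} in the even-$\ell_1$ case, property (v) of Lemma \ref{lm:indu} forces the $\Pi^{(1)}$-operator to end in $a_p$; factoring this $a_p$ out of the product of sums and bounding the remainder by a straightforward variant of Lemma \ref{lm:Pi-bds} yields a norm of $a_p\xi$ on the right.

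\textbf{Part (ii).} The basic bounds \eqref{eq:E11} and \eqref{eq:E10} follow by applying Part (i) twice: first to the right-most $\Pi^{(1)}$-block acting on $\xi$, then to the left-most block acting on the resulting vector. The delicate estimates are the improved bounds \eqref{eq:E01-b}, \eqref{eq:E01} and the decomposition \eqref{eq:Edeco}; the plan is to use that when $\ell_1$ is even, property (v) says the first $\Pi^{(1)}$-block ends in $a_p$, and I would commute this $a_p$ rightward through the second $\Pi^{(1)}$-block and all intermediate $\Lambda'$-operators using the standard rules from \eqref{eq:comm-b2}. Each contraction $[a_p, a_{\cdot}^*]=\delta$ with an interior operator inside a $\Pi^{(2)}$ of order $h$ converts it into a $\Pi^{(1)}$ of order $h-1$ and removes a factor $N^{-1}$-worth of summation (giving the $kN^{-1}$ term); a contraction with the terminal $a_{-q}^*$ (present precisely when $\ell_2$ is odd, by property (v)) produces the $\delta_{p,-q}$ contribution; and when $a_p$ survives all the way through, it hits $\xi$ directly, giving the $\|a_p (\cN_++1)^{\beta/2}\xi\|$ term. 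The improvement \eqref{eq:E01} over \eqref{eq:E01-b}, and analogously \eqref{eq:E2impr} over \eqref{eq:E2-est}, comes from the fact that when some $\Lambda$- or $\Lambda'$-factor is a $\Pi^{(2)}$-operator, or when $k_i>0$, one of the contractions from $a_p$ can land inside an inner sum, producing an extra summation over a momentum controlled by a factor $\eta^i$ and thereby gaining a factor $N^{-1}$ at the price of an additional $(\cN_++1)^{1/2}$. The decomposition \eqref{eq:Edeco} for $\ell_1=\ell_2=0$ is the limiting case where both blocks end in $a_p$ and $a_q$: the contributions where $a_p$ is commuted against an interior factor give $\text{E}_1(p,q)$, while the contribution where $a_p$ passes freely and reaches $a_q$ defines $E_2\,a_p a_q\xi$, with $E_2$ bounded by the same mechanism as $\tilde D$ in Part (i).

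\textbf{Main obstacle.} The nontrivial difficulty is the commutation argument distinguishing \eqref{eq:E01-b} from \eqref{eq:E01} (and analogously for \eqref{eq:E2-est} vs.\ \eqref{eq:E2impr}): one must show that, whenever the expansion contains \emph{any} additional internal structure, the $\delta_{p,-q}$ contribution gains a factor $N^{-1}$. This requires a careful case-by-case bookkeeping of which operator inside the composite expression absorbs the contraction with $a_p$; the non-trivial case is when the contraction falls into a $\Pi^{(2)}$-factor in one of the $\Lambda$'s, turning $N^{-h}\Pi^{(2)}$ into $N^{-h}\Pi^{(1)}$, which is then re-estimated by Lemma \ref{lm:Pi-bds} at the cost of one fewer $N$ in the numerator, precisely yielding the advertised $N^{-1}$ improvement.
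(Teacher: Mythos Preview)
Your overall strategy matches the paper's proof closely: bound each $\Lambda_s$ using Lemma~\ref{lm:Pi-bds}, invoke property (vi) of Lemma~\ref{lm:indu} to guarantee the $\ell^1$-norm is available for non-normally-ordered pairs, use property (iii) to collect the total power $\kappa^n$, and for the refined bounds commute the terminal $a_p$ (present by property (v) when $\ell_1$ is even) to the right. The treatment of \eqref{eq:D1}, \eqref{eq:D0}, \eqref{eq:E11}, \eqref{eq:E10}, and the first line of \eqref{eq:E01-b} is correct and essentially identical to the paper.

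However, your explanation of the improvement from \eqref{eq:E01-b} to \eqref{eq:E01} (and from \eqref{eq:E2-est} to \eqref{eq:E2impr}) is confused. You write that the gain of $N^{-1}$ in the $\delta_{p,-q}$ term comes from ``one of the contractions from $a_p$ landing inside an inner sum,'' i.e.\ from $a_p$ hitting a creation operator inside some $\Pi^{(2)}$-factor. That is the mechanism producing the \emph{first} line $C^{n+k}kN^{-1}\kappa^{n+k}p^{-2(\ell_1+1)}q^{-2\ell_2}\|(\cN_++1)^{(\beta+1)/2}\xi\|$, which is already present in \eqref{eq:E01-b} and is \emph{not} what distinguishes \eqref{eq:E01}. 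The $\delta_{p,-q}$ contribution arises solely from the commutator $[a_p,a_{-q}^*]$ with the \emph{terminal} field of the second $\Pi^{(1)}$-operator (present exactly when $\ell_2$ is odd); no interior contraction is involved there.

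The actual mechanism in the paper is different and simpler: once $[a_p,a_{-q}^*]=\delta_{p,-q}$ has removed those two field operators, the remaining product of $\Lambda$-, $\Lambda'$- and $\Pi$-operators contains one fewer creation/annihilation pair than before. If in addition $k_1>0$, $k_2>0$, or some $\Lambda$ is a $\Pi^{(2)}$-operator of the form \eqref{eq:Pi2-prel}, then there is still at least one factor $N^{-h}\Pi^{(2)}$ or $N^{-k_i}\Pi^{(1)}$ present whose norm bound from Lemma~\ref{lm:Pi-bds} contains powers of $(\cN_++1)$ balanced against the explicit $N^{-h}$ (or $N^{-k_i}$). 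Because one $(\cN_++1)$ has been freed up by the contraction, one can refrain from bounding one such ratio $(\cN_++1)/N\le 1$ and instead keep it explicit, yielding the extra $N^{-1}$ at the cost of raising the power of $(\cN_++1)$ on $\xi$ by one. Your ``Main obstacle'' paragraph should be rewritten with this mechanism in mind; as it stands it would not produce the claimed bound for the $\delta_{p,-q}$ term.
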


\begin{proof}
Let us start with part i). If $\Lambda_1$ is either the operator $(N-\cN_+ )/N$ or $(N-\cN_+ +1)/N$, then, on $\cF^{\leq N}_+$, 
\begin{equation}\label{eq:Lambd1} 
\begin{split} 
&\left\| (\cN_+ +1)^{(\beta-1)/2} \Lambda_1 \dots  \Lambda_{i_1} N^{-k_1} \Pi^{(1)}_{\sharp,\flat} (\eta^{j_1}, \dots , \eta^{j_{k_1}} ; \eta^{\ell_1}_{p} \ph_{\alpha_1 p}) \xi \right\| \\ &\hspace{2cm} \leq 2 \left\| (\cN_+ +1)^{(\beta-1)/2} \Lambda_2 \dots  \Lambda_{i_1} N^{-k_1} \Pi^{(1)}_{\sharp,\flat} (\eta^{j_1}, \dots , \eta^{j_k} ; \eta^{\ell_1}_p \ph_{\alpha_1 p}) \xi \right\| \end{split} 
\end{equation}
If instead $\Lambda_1$ has the form (\ref{eq:Pi2-prel}) for a $h \geq 1$, we apply Lemma \ref{lm:Pi-bds} and we find (using part vi) in Lemma \ref{lm:indu})
\begin{equation}\label{eq:Lambd2} \begin{split} &\left\| (\cN_+ +1)^{(\beta-1)/2}  \Lambda_1 \dots \Lambda_{i_1} N^{-k_1} \Pi^{(1)}_{\sharp,\flat} (\eta^{j_1}, \dots , \eta^{j_{k_1}} ; \eta^{\ell_1}_{p} \ph_{\alpha_1 p}) \xi \right\|
\\ &\hspace{1cm} \leq  C^h  \kappa^{\bar{h}} \| (\cN_+ +1)^{(\beta-1)/2}  \Lambda_2 \dots  \Lambda_{i_1} N^{-k_1} \Pi^{(1)}_{\sharp,\flat} (\eta^{j_1}, \dots , \eta^{j_{k_1}} ; \eta^{\ell_1}_p \ph_{\alpha_1 p}) \xi \| 
\end{split} \end{equation}
where we used the notation $\bar{h} = z_1 + \dots + z_h$ for the total number of factors $\eta$'s appearing in (\ref{eq:Pi2-prel}). Iterating the bounds (\ref{eq:Lambd1}) and (\ref{eq:Lambd2}), we find 
\begin{equation}\label{eq:bd1-Lam} 
\begin{split} &\| (\cN_+ +1)^{(\beta-1)/2}  \Lambda_1
 \dots \Lambda_{i_1} N^{-k_1} \Pi^{(1)}_{\sharp,\flat} (\eta^{j_1}, \dots , \eta^{j_k} ; \eta^{\ell_1}_{p} \ph_{\alpha_1 p}) \xi \| \\ &\hspace{.2cm} \leq C^{r+h_1 + \dots + h_s} \kappa^{\bar{h}_1 + \dots + \bar{h}_s} \| (\cN_+ +1)^{(\beta-1)/2} N^{-k_1} \Pi^{(1)}_{\sharp,\flat} (\eta^{j_1}, \dots , \eta^{j_{k_1}} ; \eta^{\ell_1}_{p} \ph_{\alpha_1 p}) \xi \| \end{split}  \end{equation}
if $r$ of the operators $\Lambda_1, \dots , \Lambda_{i_1}$ have either the form $(N-\cN_+ )/N$ or the form $(N-\cN_+ +1)/N$, and the other $s=i_1-r$ are $\Pi^{(2)}$-operators of the form (\ref{eq:Pi2-prel}) of order $h_1, \dots , h_s$ , containing $\bar{h}_1, \dots \bar{h}_s$ factors $\eta$. Again with Lemma \ref{lm:Pi-bds} and with (\ref{eq:etap}), we obtain (using also Lemma \ref{lm:indu}, part iii), and the fact that $(\cN_+ +1)^{(\beta-1)/2} \Pi^{(1)}_{\sharp,\flat} (\dots) = \Pi^{(1)}_{\sharp,\flat} (\dots) (\cN_+ +1 \pm 1)^{(\beta-1)/2}$)
\begin{equation}\label{eq:prel-i} \begin{split}  &\| (\cN_+ +1)^{(\beta-1)/2}  \Lambda_1
 \dots \Lambda_{i_1} N^{-k_1} \Pi^{(1)}_{\sharp,\flat} (\eta^{j_1}, \dots , \eta^{j_{k_1}} ; \eta^{\ell_1}_{p} \ph_{\alpha_1 p}) \xi \| \\ &\hspace{1cm} \leq C^{r+h_1 +\dots + h_s+j_1 + \dots + j_{k_1}+\ell_1} \kappa^{\bar{h}_1 + \dots + \bar{h}_s+j_1 +\dots +j_{k_1}+\ell_1} p^{-2\ell_1} \| (\cN_+ +1)^{\beta/2} \xi \| \\ &\hspace{1cm} \leq C^{n} \kappa^n p^{-2\ell_1} \| (\cN_+ +1)^{\beta/2} \xi \| \, . \end{split} 
\end{equation}
This shows the bound (\ref{eq:D1}) for $\| D \|$. The bound (\ref{eq:D1}) for $\| \wt{D} \|$ can be proven similarly. If we now assume that $\ell_1$ is even, the last field on the right in the $\Pi^{(1)}$ operator in the term $\text{D}$ must be an annihilation operator $a_p$ (see Lemma \ref{lm:indu}, part v)). Proceeding as above, but  estimating \[ \begin{split} 
\| (\cN_+ +1)^{(\beta-1)}/2 N^{-k_1} &\Pi^{(1)}_{\sharp,\flat} (\eta^{j_1}, \dots , \eta^{j_{k_1}} ; \eta^{\ell_1}_p \ph_p) \xi \| \\ &\leq C^{j_1 + \dots + j_{k_1}+\ell_1} \kappa^{j_1 + \dots + j_{k_1}+\ell_1} p^{-2\ell_1} \| a_p (\cN_+ +1)^{(\beta-1)/2} \xi \| \end{split} \]
we also obtain (\ref{eq:D0}). 

Let us now consider part ii). The bounds (\ref{eq:E11}) and (\ref{eq:E10}) follow applying (\ref{eq:D1}) twice and, respectively, (\ref{eq:D1}) and then (\ref{eq:D0}). We focus therefore on (\ref{eq:E01-b}). Here, we assume that $\ell_1$ is even. This implies that the field operator on the right of the first $\Pi^{(1)}$-operator is an annihilation operator $a_p$. To bound $\| \text{E} \|$, we have to commute $a_p$ to the right, until it hits $\xi$. To commute $a_p$ through factors of $\cN_+ $, we use the pull-through formula $a_p \cN_+  = (\cN_+ +1) a_p$. On the other hand, when we commute $a_p$ through a pair of creation and/or annihilation operators associated with a function $\eta^{j}$ for  some $j \geq 1$ (like the pairs appearing in the $\Pi^{(2)}$-operators of the form (\ref{eq:Pi2-prel}) or in the $\Pi^{(1)}$-operators in (\ref{eq:Pi1-prel})), we generate a creation or an annihilation operator $a_p$ or $a_{-p}^*$ together with an additional factor $\eta_p^j$. Furthermore, since the commutator erases a creation and an annihilation operator, we can save a factor $N^{-1}$ (taken from the factor $N^{-h}$ in (\ref{eq:Pi2-prel}) or from the factor $N^{-k_2}$ in (\ref{eq:Pi1-prel})). For example,  
\[ \left[ a_p , \sum_{r \in \Lambda^*} \eta^{j}_r a^*_{r} a_{r}  \right] = \eta^{j}_p a_p \]
There are at most $k$ pairs of creation and/or annihilation operators through which $a_p$ needs to be commuted (because every such pair carries a factor $\eta^j$, and the total number of $\eta$ factors on the right of $a_p$ is $k$). At the end, we also have to pass $a_p$ through the field operator appearing on the right of the second $\Pi^{(1)}$-operator; this is  either the annihilation operator $a_q$ if $\ell_2$ is even, or the creation operator $a^*_{-q}$, if $\ell_2$  is odd. Hence, the commutator vanishes if $\ell_2$ is even, while it is given by 
\begin{equation}\label{eq:comm-ell2} [ a_p, a^*_{-q} ] = \delta_{p,-q} 
\end{equation} 
if $\ell_2$ is odd. This leads to the estimate (\ref{eq:E01-b}). If we additionally assume that either $k_1 > 0$ or $k_2 > 0$ or that there is at least one $\Lambda$- or $\Lambda'$-operator having the form (\ref{eq:Pi2-prel}), in the contribution arising from the commutator of $a_p$ and $a_{-q}^*$ (which is only present if $\ell_2$ is odd), we can extract an additional factor $(\cN_+ +1)/N$ (this additional factor can be used here and not elsewhere, because in this term, after commuting $a_p$ and $a_{-q}^*$, there is one less factor of $\cN_+$). This observation leads to (\ref{eq:E01}). Finally, let us consider $\ell_1 = \ell_2 =0$. In this case we proceed as before, commuting the annihilation operator $a_p$ to the right. The contribution of the commutators of $a_p$ with the pairs of creation and annihilation fields appearing in the $\Pi^{(1)}$-operator and possibly in the $\Pi^{(2)}$-operators 
lying on the right of $a_p$ is collected in the term $\text{E}_1$ (this term can be estimated as on the first line on the r.h.s. of (\ref{eq:E01-b}) or (\ref{eq:E01})). After commuting $a_p$ all the way to the right, we are left with the second term on the r.h.s. of (\ref{eq:Edeco}), with the operator $\text{E}_2$ containing all $\Lambda$- and $\Lambda'$-operators as well as all pairs of annihilation and/or creation operators appearing in the two $\Pi^{(1)}$-operator which can be estimated, following Lemma \ref{lm:Pi-bds} as in (\ref{eq:E2-est}) or (\ref{eq:E2impr}). 
\end{proof}

\subsection{Analysis of $\cG^{(0)}_N$}
\label{subsec:L0}

{F}rom (\ref{eq:cLNj}), we have
\[ \cG^{(0)}_{N} = e^{-B(\eta)} \cL^{(0)}_N e^{B(\eta)} = \frac{(N-1)}{2} \kappa \widehat{V} (0)+ \cE_N^{(0)} \]
with 
\[ \cE_N^{(0)} = \frac{\kappa \widehat{V} (0)}{2N} e^{-B(\eta)} \cN_+  e^{B(\eta)} - \frac{\kappa \widehat{V} (0)}{2N} e^{-B(\eta)} \cN_+ ^2 e^{B(\eta)} \]
In the next Proposition, we estimate the error term $\cE_N^{(0)}$. 
\begin{prop}\label{prop:E0}
Let the assumptions of Proposition \ref{prop:gene} be satisfied.
Then there exists a constant $C > 0$ such that 
\begin{equation}\label{eq:E0} \pm \cE_N^{(0)} \leq C \kappa (\cN_+ +1) \end{equation}
as operator inequality on $\cF^{\leq N}_+$.
\end{prop}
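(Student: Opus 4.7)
The proof is essentially a direct application of Lemma \ref{lm:Ngrow}, exploiting the truncation to $\cF_+^{\leq N}$.

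The plan is to bound each of the two terms in
\[ \cE_N^{(0)} = \frac{\kappa \widehat{V}(0)}{2N}\,e^{-B(\eta)}\cN_+ e^{B(\eta)} \;-\; \frac{\kappa \widehat{V}(0)}{2N}\,e^{-B(\eta)}\cN_+^2 e^{B(\eta)} \]
separately, since to prove $\pm \cE_N^{(0)}\le C\kappa(\cN_++1)$ it suffices (by positivity of each of the two summands) to control the positive contribution in each direction. For the upper bound $\cE_N^{(0)}\le C\kappa(\cN_++1)$, the second term is non-positive, so I drop it and apply Lemma \ref{lm:Ngrow} with $n_1=1$, $n_2=0$, which gives $e^{-B(\eta)}\cN_+ e^{B(\eta)}\le C(\cN_++1)$, and the prefactor $1/N$ makes this even smaller than claimed.

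For the lower bound $-\cE_N^{(0)}\le C\kappa(\cN_++1)$, the first term is non-positive and can be dropped, leaving
\[ -\cE_N^{(0)}\;\le\;\frac{\kappa \widehat{V}(0)}{2N}\,e^{-B(\eta)}\cN_+^2 e^{B(\eta)}. \]
Applying Lemma \ref{lm:Ngrow} with $n_1=2$, $n_2=0$ yields $e^{-B(\eta)}(\cN_++1)^2 e^{B(\eta)}\le C(\cN_++1)^2$. The key observation is then that on the truncated Fock space $\cF_+^{\leq N}$ we have $\cN_+\le N$, so $(\cN_++1)^2\le (N+1)(\cN_++1)$. Combining these, the factor of $N$ cancels the prefactor $1/N$ and produces the bound $C\kappa(\cN_++1)$.

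There is no serious obstacle here; the only subtlety is recognizing that one must not try to bound $e^{-B(\eta)}\cN_+^2 e^{B(\eta)}$ by $N e^{-B(\eta)}\cN_+ e^{B(\eta)}$ before conjugation (this would use $\cN_+^2\le N\cN_+$, which is only valid on $\cF_+^{\leq N}$ but before the conjugation is an identity one can just as well invoke after), but rather apply Lemma \ref{lm:Ngrow} directly to $(\cN_++1)^2$ and only then use $\cN_+\le N$ on the conjugated expression. Both approaches work and give the same bound, but doing the conjugation first keeps the estimate transparent and manifestly uses only results that have already been established.
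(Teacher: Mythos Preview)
Your proof is correct and follows exactly the same approach as the paper, which simply remarks that the bound follows from Lemma~\ref{lm:Ngrow} together with $\cN_+\le N$ on $\cF_+^{\leq N}$. Your write-up merely makes explicit the two directions of the inequality and the relevant choices of exponents in Lemma~\ref{lm:Ngrow}.
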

\begin{proof}
Eq. (\ref{eq:E0}) follows from Lemma \ref{lm:Ngrow} and the fact that, on $\cF^{\leq N}_+$, $\cN_+  \leq N$. 
\end{proof}

\subsection{Analysis of $\cG_N^{(2)}$} 
\label{subsec:L2}

{F}rom (\ref{eq:cLNj}), we recall that
\[ \cL^{(2)}_N = \cK + \wt{\cL}_N^{(2)} \]
where
\[ \cK = \sum_{p \in \Lambda^*_+} p^2 a_p^* a_p \]
is the kinetic energy operator and
\begin{equation}\label{eq:wtL2} \wt{\cL}_N^{(2)} = \sum_{p \in \Lambda_+^*} \kappa \widehat{V} (p/N) \left[ b_p^* b_p - \frac{1}{N} a_p^* a_p \right] + \frac{\kappa}{2} \sum_{p \in \Lambda^*_+} \widehat{V} (p/N) \left[ b_p^* b_{-p}^* + b_p b_{-p} \right] \end{equation}

\subsubsection{Analysis of $e^{-B(\eta)} \cK e^{B(\eta)}$}
\label{subsec:K}

We write
\begin{equation}\label{eq:def-EK} e^{-B(\eta)} \cK e^{B(\eta)} = \cK + \sum_{p \in \Lambda_+^*} p^2 \eta_p^2 + \sum_{p \in \Lambda_+^*} p^2 \eta_p \left[ b^*_p b^*_{-p} + b_p b_{-p} \right] + \cE^{(K)}_N \end{equation}
In the next proposition, we bound the error term $\cE^{(K)}_N$. 
\begin{prop}\label{prop:K}
Let the assumptions of Proposition \ref{prop:gene} be satisfied (in particular, suppose $\kappa \geq 0$ is small enough). Then, for every $\delta > 0$ there exists a constant $C > 0$ such that, on $\cF^{\leq N}_+$, 
\[ \pm \cE^{(K)}_N \leq \delta (\cK + \cV_N) + C \kappa (\cN_+ +1) \, . \]
\end{prop}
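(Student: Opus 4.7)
The plan is to apply a second-order Taylor expansion in the conjugation parameter. Setting $F(s) := e^{-sB(\eta)}\, \cK \, e^{sB(\eta)}$, so that $F'(s) = e^{-sB(\eta)}\, [\cK, B(\eta)]\, e^{sB(\eta)}$, integration by parts yields
\[
e^{-B(\eta)}\, \cK\, e^{B(\eta)} \;=\; \cK + [\cK, B(\eta)] + \int_0^1 (1-s)\, e^{-sB(\eta)}\, [[\cK, B(\eta)], B(\eta)]\, e^{sB(\eta)}\, ds.
\]
Because $\cK$ commutes with $\cN_+$ and $[\cK, a_r^{\sharp}] = \pm r^2 a_r^{\sharp}$, one has $[\cK, b_r^{\sharp}] = \pm r^2 b_r^{\sharp}$, and therefore the first commutator is explicit,
\[
[\cK, B(\eta)] \;=\; \sum_{r \in \Lambda^*_+} r^2\, \eta_r \,(b_r^* b_{-r}^* + b_r b_{-r}),
\]
which matches exactly the quadratic piece subtracted in the definition \eqref{eq:def-EK} of $\cE_N^{(K)}$. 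Hence
\[
\cE_N^{(K)} \;=\; -\sum_{p\in\Lambda^*_+} p^2\, \eta_p^2 \;+\; \int_0^1 (1-s)\, e^{-sB(\eta)}\, [[\cK, B(\eta)], B(\eta)]\, e^{sB(\eta)}\, ds.
\]

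Next, I would expand $[[\cK, B(\eta)], B(\eta)] = \sum_r r^2\, \eta_r\, [b_r^* b_{-r}^* + b_r b_{-r},\, B(\eta)]$ using the identities \eqref{2.2.Betacommutators}. The resulting expression splits into three groups of terms: (a) the leading ``Bogoliubov'' scalar piece, which at leading order equals $2\sum_r r^2\, \eta_r^2 \,\big(1+\text{factors in } (N-\cN_+)/N\big)$; after multiplying by $\int_0^1 (1-s)\, ds = 1/2$, this produces exactly $+\sum_p p^2\, \eta_p^2$, cancelling the explicit $-\sum_p p^2\, \eta_p^2$ and leaving only a small operator-valued remainder of size $\lesssim (\cN_+/N)\sum_p p^2\, \eta_p^2$; (b) quadratic operator terms of the schematic form $r^2\, \eta_r^2\, b_r^{\sharp} b_{\pm r}^{\flat}$; (c) cubic and quartic ``$1/N$-correction'' contributions originating from the remainder pieces $N^{-1}\sum_q b_q^* a_{-q}^* a_p\, \eta_q$ in \eqref{2.2.Betacommutators}.

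To close the estimate, each surviving term is bounded using the toolkit of Sections 2 and 4.1. Type (b) terms have coefficients $r^2\, \eta_r^2 \leq C\kappa^2/r^2$, which are absolutely summable, so Lemma \ref{lm:Bbds} gives directly $\leq C\kappa^2(\cN_+ + 1) \leq C\kappa(\cN_+ + 1)$ for $\kappa$ small. Type (c) terms are estimated via Cauchy--Schwarz in position space combined with the pointwise decay $|\check{\eta}(x)| \leq C\kappa/(|x|+1)$ from Lemma \ref{3.0.sceqlemma}, producing a bound $\leq \delta\, \cV_N + C_\delta\, \kappa\,(\cN_+ + 1)$, where the splitting between $\cV_N$ and $\cN_+$ absorbs the singular $\check\eta$-behaviour at the origin into $\cV_N$. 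The outer conjugations $e^{\pm sB(\eta)}$ are finally absorbed using Lemma \ref{lm:Ngrow}, at the cost of a uniform multiplicative constant. The main obstacle is the lengthy bookkeeping: the double commutator generates many distinct operator terms, each of the forms allowed by Lemma \ref{lm:indu}, and one must verify that each individually admits a bound of the required shape. The smallness of $\kappa$ is essential in upgrading the natural $\kappa^2(\cN_++1)$-type bounds to the cleaner $\kappa(\cN_++1)$ appearing in the proposition.
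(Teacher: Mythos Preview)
Your proposal has a genuine gap at the step where you write ``the outer conjugations $e^{\pm sB(\eta)}$ are finally absorbed using Lemma~\ref{lm:Ngrow}.'' Lemma~\ref{lm:Ngrow} controls only powers of $(\cN_++1)$ and $(N+1-\cN_+)$ under conjugation by $e^{B(\eta)}$; it says nothing about the stability of $\cK$ or $\cV_N$. Yet the double commutator $[[\cK,B(\eta)],B(\eta)]$ contains terms --- your ``type~(c)'' contributions, and in the paper's notation the terms $\text{G}_{41}$ and $\text{G}_{42}$ in \eqref{eq:A4}--\eqref{eq:E41} --- that \emph{cannot} be bounded by $(\cN_++1)$ alone: the estimate necessarily involves $\|\cV_N^{1/2}\xi\|$ or $\|\cK^{1/2}\xi\|$. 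Once such a bound is obtained for the double commutator, you would need to replace $\xi$ by $e^{sB(\eta)}\xi$ inside the integral, and you have no tool that turns $\|\cV_N^{1/2}e^{sB(\eta)}\xi\|$ or $\|\cK^{1/2}e^{sB(\eta)}\xi\|$ back into $\|\cV_N^{1/2}\xi\|$, $\|\cK^{1/2}\xi\|$. For $\cK$ this would in fact be circular, since controlling $e^{-sB(\eta)}\cK e^{sB(\eta)}$ is exactly the content of the proposition you are proving. This is precisely why the paper does \emph{not} stop at second order but expands the full commutator series via Lemma~\ref{lm:conv-series}: every contribution is then an explicit operator on $\cF_+^{\leq N}$ with no residual conjugation, and can be bounded directly on $\xi$ using Lemma~\ref{lm:prel}.

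A second, related omission is the role of the scattering equation \eqref{eq:eta-scat}. The dangerous coefficients $p^2\eta_p$ (appearing both in $\text{G}_{41}$ and in the $\ell_1=\ell_2=0$ terms of the full series, cf.\ \eqref{eq:eta-scat2}) are \emph{not} controlled by the pointwise decay of $\check\eta$ that you invoke; the decay $|\eta_p|\le C\kappa/p^2$ only gives $|p^2\eta_p|\le C\kappa$, which is not summable and produces no $V(N(x-y))$ in position space. The mechanism that generates the factor $N^2V(N(x-y))$ --- and hence allows the Cauchy--Schwarz argument against $\cV_N$ --- is the substitution $p^2\eta_p = -\tfrac{\kappa}{2}\widehat V(p/N) + (\text{lower order})$ coming from the Neumann scattering equation. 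Without this identity, the bound you sketch for the type~(c) terms does not close.
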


\begin{proof}
We write
\[ \begin{split} 
e^{-B(\eta)} \cK e^{B(\eta)} &= \cK + \int_0^1 e^{-s B(\eta)} [\cK , B(\eta)] e^{sB(\eta)} ds \\ &=\cK+ \int_0^1 \sum_{p \in \Lambda_+^*} p^2 \eta_p \left[ e^{-s B(\eta)} b_p b_{-p} e^{s B(\eta)} + e^{-s B(\eta)} b_p^* b_{-p}^* e^{sB (\eta)} \right] \, ds \end{split}  \]
Lemma \ref{lm:conv-series}, together with $\text{ad}^{(n)}_{sB(\eta)} (A) = s^n \text{ad}^{(n)}_{B(\eta)} (A)$, implies that 
\[ e^{-B(\eta)} \cK e^{B(\eta)} = \cK + \sum_{n,k \geq 0} \frac{(-1)^{n+k}}{n!k!(n+k+1)} \sum_{p \in \Lambda_+^*} p^2 \eta_p \left[ \text{ad}^{(n)}_{B(\eta)} (b_p) \text{ad}^{(k)}_{B(\eta)} (b_{-p}) + \text{h.c.} \right] \]
We separate the summands with $(n,k) = (0,0), (0,1)$; we find 
\[ \begin{split} e^{-B(\eta)} \cK e^{B(\eta)} = \; &\cK + \sum_{p \in \Lambda_+^*} p^2 \eta_p \left[ b_p b_{-p} + b^*_p b^*_{-p} \right] - \frac{1}{2} \sum_{p \in \Lambda_+^*} p^2 \eta_p \left( b_p [B(\eta), b_{-p}] + \text{h.c} \right) \\ &+\sum^*_{n,k} \frac{(-1)^{n+k}}{n!k!(n+k+1)} \sum_{p \in \Lambda_+^*} p^2 \eta_p \left[ \text{ad}^{(n)}_{B(\eta)} (b_p) \text{ad}^{(k)}_{B(\eta)} (b_{-p}) + \text{h.c.} \right] \end{split} \]
where $\sum^*_{n,k}$ indicates the sum over all pairs $(n,k) \not = (0,0), (0,1)$. With (\ref{eq:comm-bp}) and (\ref{eq:def-EK}) we obtain 
\begin{equation}\label{eq:Kterms} 
\begin{split} \cE^{(K)}_N   = \; & \sum_{p \in \Lambda_+^*} p^2 \eta^2_p \left[ b_p^* b_p - \frac{1}{N} a_p^* a_p \right] - \frac{\cN_+ }{N} \sum_{p \in \Lambda_+^*} p^2 \eta^2_p \\ &- \frac{1}{N} \sum_{p \in \Lambda^*_+} p^2 \eta_p^2 b_p \cN_+  b_p^* -\frac{1}{2N} \sum_{p,q \in \Lambda^*_+} p^2 \eta_p \eta_q \left( b_p b_q^* a^*_{-q} a_{-p} + \text{h.c.} \right) \\ &+\sum^*_{n,k} \frac{(-1)^{n+k}}{n!k!(n+k+1)} \sum_{p \in \Lambda_+^*} p^2 \eta_p \left[ \text{ad}^{(n)}_{B(\eta)} (b_p) \text{ad}^{(k)}_{B(\eta)} (b_{-p}) + \text{h.c.} \right] \\
=: \; &\text{G}_1 + \text{G}_2 + \text{G}_3 + \text{G}_4 \\ &+\sum^*_{n,k} \frac{(-1)^{n+k}}{n!k!(n+k+1)} \sum_{p \in \Lambda_+^*} p^2 \eta_p \left[ \text{ad}^{(n)}_{B(\eta)} (b_p) \text{ad}^{(k)}_{B(\eta)} (b_{-p}) + \text{h.c.} \right] 
\end{split} 
\end{equation}
The expectation of the first term on the r.h.s. of (\ref{eq:Kterms}) can be estimated by 
\begin{equation}\label{eq:G1fin} \begin{split} \left| \langle \xi , \text{G}_1 \xi \rangle \right| &\leq  \sum_{p \in \Lambda^*_+} p^2 \eta_p^2 \| b_p \xi \|^2 + \frac{1}{N}  \sum_{p \in \Lambda^*_+} p^2 \eta_p^2 \| a_p \xi \|^2 \\ &\leq \sup_{p \in \Lambda^*_+} \, ( p^2 \eta_p^2 ) \, \| \cN_+ ^{1/2} \xi \|^2 \leq C \kappa^2 \| \cN_+ ^{1/2} \xi \|^2 \end{split} \end{equation} 
with (\ref{eq:etap}). To bound the second term on the r.h.s. of (\ref{eq:Kterms}) we remark that, by (\ref{eq:etapN}),   
\begin{equation}\label{eq:etaH1} \sum_p p^2 \eta_p^2 = \| \nabla \check{\eta} \|^2 \leq C N \kappa^2 
\end{equation}
This implies that 
\begin{equation}\label{eq:G2fin} \left| \langle \xi, \text{G}_2 \xi \rangle \right| \leq C \kappa^2 \| \cN_+ ^{1/2} \xi \|^2 \end{equation}
To estimate the contribution of the third term on the r.h.s. of  (\ref{eq:Kterms}), we commute $b_p$ to the right of $b_p^*$. We find, using the fact that $\cN_+ \leq N$ on $\cF_+^{\leq N}$ and again (\ref{eq:etap}), that 
\begin{equation}\label{eq:G3fin}\begin{split} \left| \langle \xi , \text{G}_3 \xi \rangle \right| &\leq \frac{2}{N} \sum_{p \in \Lambda^*_+} p^2 \eta_p^2 \| (\cN_+ +1)^{1/2} \xi \|^2 + \frac{1}{N} \sum_{p \in \Lambda_+^*} p^2 \eta_p^2 \| a_p (\cN_+ +1)^{1/2} \xi \|^2 \\ &\leq C \kappa^2 \| (\cN_+ +1)^{1/2} \xi \|^2 \end{split} \end{equation}
As for the fourth term on the r.h.s. of (\ref{eq:Kterms}), we write it as 
\begin{equation}\label{eq:A4}  
\begin{split} \text{G}_4 = \; &-\frac{1}{2N} \sum_{p,q \in \Lambda^*_+} p^2 \eta_p \eta_q  \left[ b_q^* a_{-q}^* a_{-p} b_p + \text{h.c.} \right] + \frac{1}{2N^2} \sum_{p,q \in \Lambda^*_+} p^2 \eta_p \eta_q \, \left[ a_q^* a_p a^*_{-q} a_{-p} + \text{h.c.} \right]  \\ &- \frac{1}{N} \sum_{p \in \Lambda_+^*} p^2 \eta_p^2 \left[ b_p^* b_p + \frac{N-\cN_+ }{N} a_p^* a_p \right] \\ =: \; & \text{G}_{41} + \text{G}_{42} + \text{G}_{43} \end{split} 
\end{equation}
While it is easy to bound  
\begin{equation}\label{eq:E42} \begin{split}\left| \langle \xi , \text{G}_{42} \xi \rangle \right|  &\leq \frac{1}{2N^2} \sum_{p,q \in \Lambda^*_+} p^2 \eta_p \eta_q \| a_q (\cN_+ +1)^{1/2} \xi \| \| a_p (\cN_+ +1)^{1/2} \xi \|  \\
&\leq \frac{1}{2N^2} \left[ \sum_{p,q \in \Lambda^*_+} p^2 \eta_p^2 \| a_q (\cN_+ +1)^{1/2} \xi \|^2 \right]^{\frac{1}{2}} \left[ \sum_{p,q \in \Lambda^*_+} p^2 \eta_q^2 \| a_p (\cN_+ +1)^{1/2} \xi \|^2 \right]^{\frac{1}{2}} 
\\
&\leq C N^{-1/2} \kappa^2 \| (\cN_+ +1)^{1/2} \xi \| \| \cK^{1/2} \xi \| \end{split} \end{equation}
and 
\begin{equation}\label{eq:E43} \left| \langle \xi , \text{G}_{43} \xi \rangle \right| \leq C N^{-1} \kappa^2 \| (\cN_+ +1)^{1/2} \xi \|^2 ,\end{equation}
in order to control the term $\text{G}_{41}$ we need to use Eq. (\ref{eq:eta-scat}). We find
\begin{equation}\label{eq:E41} \begin{split} \text{G}_{41} = \; &\frac{\kappa}{4N} \sum_{p,q \in \Lambda^*_+} \widehat{V} (p/N) \eta_q \, \left[ b_q^* a_{-q}^* a_{-p} b_p + \text{h.c.} \right] \\ &- \frac{\kappa}{4N^2} \sum_{p,q \in \Lambda^*_+, r\in \Lambda^*} \widehat{V} ((p-r)/N) \eta_r \eta_q \left[ b_q^* a_{-q}^* a_{-p} b_p + \text{h.c.} \right] \\ &+ N^2 \lambda_\ell \sum_{p,q \in \Lambda_+^*} \widehat{\chi}_\ell (p) \eta_q  \left[ b_q^* a_{-q}^* a_{-p} b_p + \text{h.c.} \right] \\ &- N \lambda_\ell \sum_{p,q \in \Lambda^*_+, r\in \Lambda^*} \widehat{\chi}_\ell (p-r) \eta_r \eta_q \left[ b_q^* a_{-q}^* a_{-p} b_p + \text{h.c.} \right] \\
=: \; & \text{G}_{411}+ \text{G}_{412} + \text{G}_{413} + \text{G}_{414} \end{split} \end{equation}
We estimate
\[ \begin{split} |\langle \xi, \text{G}_{413} \xi \rangle | &\leq \frac{C\kappa}{N} \sum_{p,q \in \Lambda^*_+} |\widehat{\chi}_\ell (p)| |\eta_q| \| a_q (\cN_+ +1)^{1/2} \xi \| \| a_p (\cN_+ +1)^{1/2} \xi \| \\ &\leq \frac{C\kappa}{N} \| \widehat{\chi}_\ell \|_2 \| \eta \| \| (\cN_+ +1) \xi \|^2 \\ &\leq C \kappa^2 \| (\cN_+ +1)^{1/2} \xi \|^2 \end{split} \]
Furthermore
\[ \begin{split} |\langle \xi, \text{G}_{414} \xi \rangle | &\leq  \frac{C\kappa}{N^2} \sum_{p,q \in \Lambda^*_+} |g (p)| |\eta_q| \| a_q (\cN_+ +1)^{1/2} \xi \| \| a_p (\cN_+ +1)^{1/2} \xi \| \\ &\leq C \kappa \| \eta \| \| g \| \| (\cN_+ +1)^{1/2} \xi \|^2 \end{split} \]
where we defined $g (p) = \sum_{r \in \Lambda^*} \widehat{\chi}_\ell (p-r) \eta_r$. Since 
\[ \| g \| = \| \chi_\ell \check{\eta} \| \leq \| \check{\eta} \| = \| \eta \| \leq C \kappa \]
we conclude that
\[ |\langle \xi, \text{G}_{414} \xi \rangle | \leq C \kappa^2 \| (\cN_+ +1)^{1/2} \xi \|^2 \]
Let us now consider the first term on the r.h.s. of (\ref{eq:E41}). Switching to position space we find, on $\cF^{\leq N}_+$, 
\[ \begin{split} \text{G}_{411} &= \frac{\kappa}{4N} \int_{\Lambda^{\times 4}} dx dy dz dw \sum_{p,q \in \Lambda^*_+} \widehat{V} (p/N) \eta_q e^{iq (z-w)} e^{ip (x-y)} \check{b}_z^* \check{a}_w^* \check{a}_x \check{b}_y \\ &= \frac{\kappa}{4} \int_{\Lambda^{\times 4}} dx dy dz dw \, N^2 V (N (x-y)) \check{\eta} (z-w) \check{b}_z^* \check{a}_w^* \check{a}_x \check{b}_y \end{split} \]
Hence
\[ \begin{split} |\langle \xi , \text{G}_{411} \xi \rangle | &\leq C \kappa \int_{\Lambda^{\times 4}} dx dydz dw \, N^2 V(N(x-y)) |\check{\eta} (z-w)| \| \check{a}_x \check{a}_y \xi \| \| \check{a}_w \check{a}_z \xi \| \\ &\leq C \kappa \left[ \int_{\Lambda^{\times 4}} dx dydz dw \, N^2 V(N(x-y)) |\check{\eta} (z-w)|^2 \| \check{a}_x \check{a}_y \xi \|^2 \right]^{1/2} \\ &\hspace{4cm} \times \left[ 
 \int_{\Lambda^{\times 4}} dx dydz dw \, N^2 V(N(x-y)) \| \check{a}_z \check{a}_w \xi \|^2 \right]^{1/2} \\ &\leq C \kappa^{3/2} \| (\cN_+ +1)^{1/2} \xi \| \| \cV_N^{1/2} \xi \| \end{split} \] 
The term $\text{G}_{412}$ can also be estimated similarly. We conclude that
\[ |\langle \xi , \text{G}_{41} \xi \rangle | \leq C \kappa^2 \| (\cN_+ +1)^{1/2} \xi \|^2 + C \kappa^{3/2} \| (\cN_+ +1)^{1/2} \xi \| \| \cV_N^{1/2} \xi \| \]
and therefore, together with (\ref{eq:E42}), (\ref{eq:E43}), we find  
\begin{equation}\label{eq:G4fin} |\langle \xi , \text{G}_{4} \xi \rangle | \leq C  \kappa^2 \| (\cN_+ +1)^{1/2} \xi \| \| (\cK + \cN_+  +1)^{1/2} \xi \| + C \kappa^{3/2} \| (\cN_+ +1)^{1/2} \xi \| \| \cV_N^{1/2} \xi \| \end{equation}
We consider next the last term in (\ref{eq:Kterms}), namely the sum over all pairs $(n,k) \not = (0,0), (0,1)$. According to Lemma \ref{lm:conv-series}, the operator
\begin{equation}\label{eq:adadK} \sum_{p\in \Lambda^*_+} p^2 \eta_p \, \text{ad}^{(n)}_{B(\eta)} (b_p) \text{ad}^{(k)} (b_{-p})\end{equation}
can be written as the sum of $2^{n+k} n!k!$ terms having the form
\begin{equation}\label{eq:E} 
\begin{split} \text{G} = \; &\sum_{p \in \Lambda^*_+} p^2 \eta_p \, \Lambda_1 \dots \Lambda_{i_1} N^{-k_1} \Pi^{(1)}_{\sharp,\flat} (\eta^{j_1}, \dots , \eta^{j_{k_1}} ; \eta^{\ell_1}_p \ph_{\alpha_1 p}) \\ &\hspace{3cm} \times \Lambda'_1 \dots \Lambda'_{i_2} N^{-k_2} \Pi^{(1)}_{\sharp', \flat'} (\eta^{m_1}, \dots , \eta^{m_{k_2}} ; \eta^{\ell_2}_p \ph_{-\alpha_2 p}) 
\end{split} 
\end{equation}
with $i_1, i_2, k_1, k_2, \ell_1, \ell_2 \in \bN$, $j_1, \dots , j_{k_1}, m_1, \dots, m_{k_2} \in \bN \backslash \{ 0 \}$, $\alpha_i = (-1)^{\ell_i}$ for $i=1,2$, and where each $\Lambda_r$, $\Lambda'_r$ is either a factor $(N-\cN_+ )/N$, $(N+1-\cN_+ )/N$ or a $\Pi^{(2)}$-operator of the form
\begin{equation}\label{eq:Pi2-typ} N^{-h} \Pi^{(2)}_{\underline{\sharp}, \underline{\flat}} (\eta^{(z_1)}, \dots , \eta^{(z_h)}) \end{equation}
with $h, z_1, \dots , z_h \in \bN \backslash \{ 0 \}$. We estimate the expectation of operators of the form (\ref{eq:E}).

Let us first assume that $\ell_1 + \ell_2 \geq 1$. With Lemma \ref{lm:prel}, part ii), we find (using the bounds \eqref{eq:E11} if $\ell_1 + \ell_2 \geq 2$, \eqref{eq:E10} if $(\ell_1, \ell_2)=(1,0)$ and \eqref{eq:E01} if $(\ell_1, \ell_2)=(0,1)$) 
\begin{equation}\label{eq:l1l2} 
\begin{split}  |\langle \xi , \text{G} \xi \rangle | \leq \; &C^{n+k} \| (\cN_+ +1)^{1/2} \xi \| \\ &\times \sum_{p \in \Lambda^*_+} p^2 \eta_p  \Big\{ (1+k/N) \eta_p^2 \kappa^{n+k-2} \| (\cN_+ +1)^{1/2} \xi \| \\ &\hspace{2cm} + N^{-1} \eta_p \kappa^{n+k-1} \| (\cN_+ +1)^{1/2} \xi \| + \eta_p \kappa^{n+k-1} \| a_p \xi \|  \Big\}  \end{split} \end{equation}
To apply (\ref{eq:E01}) in the case $(\ell_1, \ell_2) = (0,1)$, we use here the fact that the pairs $(n,k) = (0,0), (0,1)$ are excluded. The choice $(n,k) = (1,0)$ is not compatible with $(\ell_1, \ell_2) = (0,1)$ (by Lemma \ref{lm:indu}, $\ell_1 \leq n$ and $\ell_2 \leq k$). Hence $n+k \geq 2$, while $\ell_1 + \ell_2 = 1$; this implies by Lemma \ref{lm:indu}, part iii), that either $k_1 > 0$ or $k_2 > 0$ or at least one of the $\Lambda$- or $\Lambda'$-operators is a $\Pi^{(2)}$-operator of the form (\ref{eq:Pi2-typ}). With (\ref{eq:etap}) and (\ref{eq:etapN}), we conclude from (\ref{eq:l1l2}) that
\begin{equation}\label{eq:Gfin} |\langle \xi , \text{G} \xi \rangle | \leq C^{k+n} (1+k/N) \kappa^{n+k+1} \| (\cN_+ +1)^{1/2} \xi \|^2 \end{equation}

Let us now consider the case $\ell_1 = \ell_2 = 0$. With (\ref{eq:Edeco}) in Lemma \ref{lm:prel}, we can write 
\begin{equation}\label{eq:Edeco-kin} \langle \xi , \text{G} \xi \rangle = \sum_{p \in \Lambda^*_+} p^2 \eta_p \langle (\cN_+ +1)^{1/2} \xi , \text{E}_1 (p,-p) \rangle  + \sum_{p\in \Lambda^*_+} p^2 \eta_p  \langle (\cN_+ +1)^{1/2} \xi , \text{E}_2 a_p a_{-p} \xi \rangle \end{equation}
where the first term can be bounded by 
\[ \begin{split} 
\left|\sum_{p \in \Lambda^*_+} p^2 \eta_p \langle (\cN_+ +1)^{1/2} \xi , \text{E}_1 (p,-p) \rangle \right| &\leq \sum_{p \in \Lambda^*_+} p^2 \eta_p \| (\cN_+ +1)^{1/2} \xi \| \| \text{E}_1 (p,-p) \| \\ &\leq C^{n+k} k N^{-1} \kappa^{n+k+1} \| (\cN_+ +1)^{1/2} \xi \| \sum_{p \in \Lambda^*_+} p^{-2} \| a_p \xi \|  \\ &\leq C^{n+k} k N^{-1} \kappa^{n+k+1} \| (\cN_+ +1)^{1/2} \xi \|^2 \end{split} \]
As for the second term on the r.h.s. of (\ref{eq:Edeco-kin}), we  use the relation (\ref{eq:eta-scat}) to replace
\begin{equation}\label{eq:eta-scat2} p^2 \eta_p = -\frac{\kappa}{2} \widehat{V} (p/N) - \frac{\kappa}{2N} \sum_{q \in \Lambda^*} \widehat{V} ((p-q)/N) \eta_q + N^3 \lambda_\ell \widehat{\chi}_\ell (p) + N^2 \lambda_\ell \sum_{q \in \Lambda^*} \widehat{\chi}_\ell (p-q) \eta_q \end{equation}
To bound the contribution proportional to $\kappa \widehat{V} (p/N)$, we switch to position space. We find , for $\xi \in \cF^{\leq N}_+$, 
\[ \begin{split} \Big| \kappa \sum_{p \in \Lambda^*_+} \widehat{V} (p/N) &\langle (\cN_+ +1)^{1/2} \xi , E_2  a_p a_{-p} \xi \rangle \Big| \\ 
&= \left| \kappa \int_{\Lambda \times \Lambda} dx dy N^3 V(N(x-y))\langle E^*_2 (\cN_+ +1)^{1/2} \xi,  \check{a}_x \check{a}_y \xi \rangle \right| \\ &\leq \kappa \int_{\Lambda \times \Lambda} dx dy N^3 V(N(x-y)) \| \text{E}_2^* (\cN_+ +1)^{1/2} \xi \| \| 
\check{a}_x \check{a}_y \xi  \| \end{split} \] 
Since we are excluding the term with $(n,k) = (0,0)$, we have either $k_1 > 0$ or $k_2 >0$ or at least one of the $\Lambda$-operators has the form (\ref{eq:Pi2-typ}); this allows us to apply the bound (\ref{eq:E2impr}). We obtain 
\[ \begin{split} &\left| \kappa \sum_{p \in \Lambda^*_+} \widehat{V} (p/N) \langle \text{E}_2^* (\cN_+ +1)^{1/2} \xi , a_p a_{-p} \xi \rangle \right| \\ &\hspace{3cm} \leq  C^{n+k} \kappa^{n+k+1} \int_{\Lambda \times \Lambda} dx dy N^{5/2} V(N(x-y)) \| (\cN_+ +1)^{1/2} \xi \| \| \check{a}_x \check{a}_y \xi  \| \\ &\hspace{3cm} \leq C^{n+k} \kappa^{n+k+1}  \left[ \int_{\Lambda \times \Lambda} dx dy N^{2} V(N(x-y))  \| \check{a}_x \check{a}_y \xi  \|^2 \right]^{1/2} \\ &\hspace{4cm} \times \left[ \int_{\Lambda \times \Lambda} dx dy N^3 V(N(x-y)) \| (\cN_+ +1)^{1/2} \xi \|^2 \right]^{1/2} \\ &\hspace{3cm} \leq C^{n+k} \kappa^{n+k+1/2} \| (\cN_+ +1)^{1/2} \xi \| \| \cV_N^{1/2} \xi \| \end{split} \] 
The contribution of the other terms on the r.h.s. of (\ref{eq:eta-scat2}) can be bounded similarly. We conclude that, in the case $\ell_1 = \ell_2 = 0$,  
\begin{equation}\label{eq:G00} \begin{split}  |\langle \xi , \text{G} \xi \rangle | \leq \; &C^{k+n} (1+k/N) \kappa^{n+k+1} \| (\cN_+ +1)^{1/2} \xi \|^2 \\ &+ C^{k+n} \kappa^{n+k+1/2} \| (\cN_+ +1)^{1/2} \xi \|  \| \cV_N^{1/2} \xi \| \end{split}  \end{equation}
Combining this bound with (\ref{eq:Gfin}) we obtain from (\ref{eq:adadK}), for sufficiently small $\kappa$, \[ \begin{split} \Big| \sum^*_{n,k} \frac{(-1)^{n+k}}{n!k!(n+k+1)}& \sum_{p \in \Lambda_+^*} p^2 \eta_p \left\langle \xi, \left[ \text{ad}^{(n)}_{B(\eta)} (b_p) \text{ad}^{(k)}_{B(\eta)} (b_{-p}) + \text{h.c.} \right]  \xi \right\rangle \Big| \\ & \leq C \kappa  \| (\cN_+ +1)^{1/2} \xi \|^2 + C \kappa^{1/2} \| (\cN_+ +1)^{1/2} \xi \|  \| \cV_N^{1/2} \xi \| \end{split} \]
Together with (\ref{eq:G1fin}), (\ref{eq:G2fin}), (\ref{eq:G3fin}), (\ref{eq:G4fin}), we finally estimate (\ref{eq:Kterms}) by 
\[ |\langle \xi , \cE^{(K)}_N \xi \rangle | \leq C \kappa \| (\cN_+ +1)^{1/2} \xi \| \| (\cK + \cN_+ +1)^{1/2} \xi \| + C \kappa^{1/2} \| (\cN_+ +1)^{1/2} \xi \| \| \cV_N^{1/2} \xi \|  \]
Hence, for any $\delta > 0$, we can find $C > 0$ such that 
\[ \pm \cE_N^{(K)} \leq \delta (\cK+\cV_N) + C \kappa (\cN_+ +1) \]
as claimed.
\end{proof}

\subsubsection{Analysis of $e^{-B(\eta)} \wt{\cL}_N^{(2)} e^{B(\eta)}$}
\label{sebsec:wtL2}

With $\wt{\cL}_N^{(2)}$ as in (\ref{eq:wtL2}), we write 
\begin{equation}\label{eq:cE2def} e^{-B(\eta)} \wt{\cL}_N^{(2)} e^{B(\eta)} = \kappa \sum_{p \in \Lambda_+^*} \widehat{V} (p/N) \eta_p + \frac{\kappa}{2} \sum_{p \in \Lambda_+^*} \widehat{V} (p/N) \left[ b_p b_{-p} + b^*_p b^*_{-p} \right] + \cE^{(2)}_N 
\end{equation}
In the next proposition, we estimate the error term $\cE^{(2)}_N$.
\begin{prop}\label{prop:cL2}
Let the assumptions of Proposition \ref{prop:gene} be satisfied (in particular, suppose $\kappa \geq 0$ is small enough). Then, for every $\delta > 0$, there exists a constant $C > 0$ such that, on $\cF^{\leq N}_+$, 
\[ \pm \cE^{(2)}_N \leq \delta \cV_N + C \kappa (\cN_+ +1) \]
\end{prop}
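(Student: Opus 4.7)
The strategy mirrors the proof of Proposition \ref{prop:K}. I would first split
\[ \wt{\cL}_N^{(2)} = \wt{\cL}_N^{(2a)} + \wt{\cL}_N^{(2b)}, \]
with $\wt{\cL}_N^{(2a)} = \kappa\sum_{p \in \Lambda^*_+} \widehat V(p/N)[b_p^* b_p - N^{-1} a_p^* a_p]$ and $\wt{\cL}_N^{(2b)} = \tfrac{\kappa}{2}\sum_{p \in \Lambda^*_+} \widehat V(p/N)[b_p^* b_{-p}^* + b_p b_{-p}]$. Since $V \in L^1$ makes $\widehat V$ uniformly bounded, one has $\pm \wt{\cL}_N^{(2a)} \leq C\kappa\cN_+$; Lemma \ref{lm:Ngrow} then yields $\pm e^{-B(\eta)} \wt{\cL}_N^{(2a)} e^{B(\eta)} \leq C\kappa(\cN_+ + 1)$, so this piece goes directly into $\cE_N^{(2)}$.

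For $\wt{\cL}_N^{(2b)}$ I would apply Lemma \ref{lm:conv-series} to expand
\[ e^{-B(\eta)}\wt{\cL}_N^{(2b)} e^{B(\eta)} = \frac{\kappa}{2}\sum_{p \in \Lambda^*_+}\widehat V(p/N)\sum_{n,k \geq 0}\frac{(-1)^{n+k}}{n!k!}\Big[\mathrm{ad}^{(n)}_{B(\eta)}(b_p^*)\mathrm{ad}^{(k)}_{B(\eta)}(b_{-p}^*) + \mathrm{h.c.}\Big]. \]
The $(n,k)=(0,0)$ summand is exactly the main quadratic term in (\ref{eq:cE2def}). The additive constant $\kappa \sum_p \widehat V(p/N)\eta_p$ is produced by the summands $(n,k)=(1,0)$ for $b_p^* b_{-p}^*$ and $(n,k)=(0,1)$ for $b_p b_{-p}$: using (\ref{2.2.Betacommutators}) and normal-ordering via $b_{-p} b_{-p}^* = (1 - \cN_+/N) - N^{-1} a_{-p}^* a_{-p} + b_{-p}^* b_{-p}$, these two contributions yield scalars $\eta_p (N-\cN_+ + 1)(N-\cN_+)/N^2$ and $\eta_p (N-\cN_+)^2/N^2$ respectively; combined with their hermitian conjugates and weighted by $\tfrac{\kappa}{2}\widehat V(p/N)$, they add up to $\kappa\sum_p \widehat V(p/N)\eta_p$ at leading order in $\cN_+/N$. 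Every leftover from these single commutators (the $O(\cN_+/N)$ correction, the $N^{-1} a^* a$ pieces, the residual $b_p^* b_p$ terms, and the companion summands $(n,k)=(0,1)$ for $b_p^* b_{-p}^*$ and $(1,0)$ for $b_p b_{-p}$) carries an explicit kernel $\eta_p \widehat V(p/N)$ with $\ell^1$-norm $O(\kappa)$ by (\ref{eq:etap}), hence is absorbed into $C\kappa(\cN_+ + 1)$.

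The remaining double sum over $(n,k) \notin \{(0,0),(1,0),(0,1)\}$ would be handled following verbatim the analysis of the starred sum in Proposition \ref{prop:K}. Each summand is a product of two factors of the form (\ref{eq:Pi1-prel}), and Lemma \ref{lm:prel}(ii) applies: for $(\ell_1,\ell_2) \neq (0,0)$ the extra $\eta_p$ factors make the $p$-sum absolutely convergent and produce contributions $\leq C^{n+k}\kappa^{n+k+1}\|(\cN_+ + 1)^{1/2}\xi\|^2$. For $\ell_1 = \ell_2 = 0$, the decomposition (\ref{eq:Edeco}) isolates the leftover $E_2 a_p a_{-p}$; reinserting the coefficient $\tfrac{\kappa}{2}\widehat V(p/N)$ and switching to position space gives
\[ \frac{\kappa}{2}\sum_{p \in \Lambda^*_+} \widehat V(p/N)\,a_p a_{-p} = \frac{\kappa}{2}\int_{\Lambda \times \Lambda} N^3 V(N(x-y))\,\check a_x \check a_y\,dx\,dy, \]
which after Cauchy--Schwarz against $E_2^*(\cN_+ + 1)^{1/2}\xi$ is bounded by $C^{n+k}\kappa^{n+k+1/2}\|(\cN_+ + 1)^{1/2}\xi\|\|\cV_N^{1/2}\xi\|$, exactly as for the term $\text{G}_{411}$ in the previous proof. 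Summation over $n,k$ converges absolutely for $\kappa$ small and produces
\[ |\langle \xi, \cE_N^{(2)} \xi\rangle| \leq C\kappa\|(\cN_+ + 1)^{1/2}\xi\|^2 + C\kappa^{1/2}\|(\cN_+ + 1)^{1/2}\xi\|\|\cV_N^{1/2}\xi\|, \]
from which a further Cauchy--Schwarz gives the stated operator inequality.

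The main technical obstacle is the bookkeeping at the single-commutator level: isolating the genuine scalar $\eta_p$ from each $b_{-p}b_{-p}^*$ and $b_p b_p^*$ normal-ordering, and verifying that all operator-valued remainders (powers of $\cN_+/N$, stray $N^{-1}a^* a$ terms, the $b^*b$ tail, and the asymmetric summands $(n,k)=(0,1)$ for $b^*b^*$ and $(1,0)$ for $bb$) really do fit inside $C\kappa(\cN_+ + 1)$. In contrast with Proposition \ref{prop:K}, where the coefficient $p^2 \eta_p$ had to be traded for $\widehat V(p/N)$ via the scattering equation (\ref{eq:eta-scat2}) in order to convert the quartic $\ell_1 = \ell_2 = 0$ remainder into $\cV_N$, here the coefficient is already $\widehat V(p/N)$, so the scattering equation is not needed; the pointwise positivity of $V$ in position space alone suffices.
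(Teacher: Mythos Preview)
Your strategy is the same as the paper's: treat the diagonal piece via Lemma~\ref{lm:Ngrow}, expand the off-diagonal piece with Lemma~\ref{lm:conv-series}, isolate the $(0,0)$ and single-commutator terms to extract the constant $\kappa\sum_p\widehat V(p/N)\eta_p$, and control the remaining sum with Lemma~\ref{lm:prel} exactly as in Proposition~\ref{prop:K}. Your closing remark that the scattering equation is not needed here is also correct, and matches the paper.

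Two points in your single-commutator bookkeeping are not right, however. First, the kernel $\eta_p\widehat V(p/N)$ does \emph{not} have $\ell^1$-norm $O(\kappa)$: by (\ref{eq:etap}) one only gets $\sum_p|\widehat V(p/N)\eta_p|\leq CN\kappa$, as in (\ref{eq:riemann}). The residual terms you list are nonetheless harmless, but for different reasons: the $O(\cN_+/N)$ correction to the constant carries the explicit factor $\cN_+/N$ that absorbs the extra $N$; the $b_p^*b_p$ and $N^{-1}a_p^*a_p$ tails are controlled via $\sup_p|\widehat V(p/N)\eta_p|\leq C\kappa$ together with $\sum_p\|b_p\xi\|^2\leq\|\cN_+^{1/2}\xi\|^2$, not via the $\ell^1$-norm. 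Second, the commutator $[B(\eta),b_{-p}]$ also contains the piece $N^{-1}\sum_q\eta_q b_q^*a_{-q}^*a_{-p}$, and when multiplied by $b_p$ this produces a genuinely quartic term of the form $N^{-1}\kappa\sum_{p,q}\widehat V(p/N)\eta_q\,a_q^*a_{-q}^*a_pa_{-p}$ (cf.\ (\ref{eq:bb-sec-dec}) and (\ref{eq:bb-sec-3}) in the paper). This term is \emph{not} absorbed into $C\kappa(\cN_++1)$: one must pass to position space and use $\cV_N$, exactly as you do for the $\ell_1=\ell_2=0$ contributions in the higher-order sum. So your final inequality is correct, but the $\cV_N^{1/2}$ term already appears at the single-commutator level, not only in the tail.
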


\begin{proof}
Recall that 
\begin{equation}\label{eq:wtcL2-de} \wt{\cL}_N^{(2)} = \kappa \sum_{p \in \Lambda_+^*} \widehat{V} (p/N) \left( b_p^* b_p - \frac{1}{N} a_p^* a_p \right) + \frac{\kappa}{2} \sum_{p \in \Lambda^*_+} \widehat{V} (p/N) \left( b_p b_{-p} + b^*_p b^*_{-p} \right) \end{equation}
The expectation of the conjugation of the first term 
can be estimated by 
\begin{equation}\label{eq:b*b-in}
\begin{split} \left| \kappa \sum_{p \in \Lambda_+^*} \widehat{V} (p/N) \langle \xi , e^{-B(\eta)} b_p^* b_p e^{B(\eta)} \xi \rangle \right| &\leq \kappa \sum_{p \in   \Lambda_+^*} |\widehat{V} (p/N)| \langle \xi , e^{-B(\eta)} b_p^* b_p e^{B(\eta)} \xi \rangle \\& \leq C \kappa \langle \xi , e^{-B(\eta)} \cN_+  e^{B(\eta)} \xi \rangle \\ &\leq C \kappa \| (\cN_+ +1)^{1/2} \xi \|^2 \end{split} \end{equation}
The contribution proportional to $-N^{-1} a_p^* a_p$ on the r.h.s. of (\ref{eq:wtcL2-de}) can be bounded analogously. So, let us focus on the last sum on the r.h.s. of (\ref{eq:wtcL2-de}). According to Lemma \ref{lm:conv-series}, we can expand
\begin{equation}\label{eq:bb-deco-in}\begin{split} \kappa \sum_{p \in \Lambda^*_+} \widehat{V} (p/N) & e^{-B(\eta)} b_p b_{-p} e^{B(\eta)}  \\ = \; &\sum_{n,k \geq 0} \frac{(-1)^{k+n}}{k!n!} \kappa \sum_{p \in \Lambda_+^*} \widehat{V} (p/N) \text{ad}^{(n)} (b_p) \text{ad}^{(k)} (b_{-p}) 
\\ = \; &\kappa \sum_{p \in \Lambda_+^*} \widehat{V} (p/N) b_p b_{-p} - \kappa \sum_{p \in \Lambda^*_+} \widehat{V} (p/N) b_p [ B(\eta) , b_{-p} ] \\ &+ \sum^*_{n,k} \frac{(-1)^{k+n}}{k!n!} \kappa \sum_{p \in \Lambda^*_+} \widehat{V} (p/N)  \text{ad}^{(n)} (b_p) \text{ad}^{(k)} (b_{-p}) \end{split} \end{equation}
where the sum $\sum^*$ runs over all pairs $(n,k) \not = (0,0), (0,1)$.
The first term on the r.h.s. of (\ref{eq:bb-deco-in}) does not enter the definition (\ref{eq:cE2def}) of the error term $\cE^{(2)}_{N}$. The second term on the r.h.s. of (\ref{eq:bb-deco-in}) is given by 
\begin{equation}\label{eq:bb-sec-dec}
\begin{split} 
- \kappa \sum_{p \in \Lambda^*_+} \widehat{V} &(p/N) b_p \left[ B(\eta) , b_{-p} \right] \\ = \; &\frac{N-\cN_+ }{N} \kappa \sum_{p\in \Lambda^*_+} \widehat{V}(p/N) \eta_p b_b b_p^* - \frac{\kappa}{N} \sum_{p,q \in \Lambda^*_+} \widehat{V} (p/N) \eta_q b_p b_q^* a_{-q}^* a_{-p} \\ = 
\; &\left( \frac{N-\cN_+ }{N} \right)^2 \kappa \sum_{p\in \Lambda^*_+} \widehat{V} (p/N) \eta_p + \frac{N-\cN_+ }{N} \kappa \sum_{p \in\Lambda^*_+} \widehat{V} (p/N) \eta_p \left( b_p^* b_p - \frac{3}{N} a_p^* a_p \right) \\ &- \frac{N-\cN_+ }{N^2} \kappa \sum_{p,q \in \Lambda^*} \widehat{V} (p/N) \eta_q a_q^* a_{-q}^* a_p a_{-p} \end{split} \end{equation}
To bound the expectation of the last term, we observe that 
\begin{equation}\label{eq:bb-sec3} \left| \frac{\kappa}{N} \sum_{p,q \in \Lambda^*} \widehat{V} (p/N) \eta_q \langle \xi , a_q^* a_{-q}^* a_p a_{-p} \xi \rangle \right| \leq \frac{\kappa}{N}  \Big\| \sum_{q\in \Lambda_+^*} \eta_q a_q a_{-q} \xi \Big\| \Big\| \sum_{p \in \Lambda_+^*} \widehat{V} (p/N) a_p a_{-p} \xi \Big\| \end{equation}
On the one hand, 
\[ \begin{split} \Big\| \sum_{q \in \Lambda^*_+} \eta_q a_q a_{-q} \xi \Big\| &\leq \sum_{q \in \Lambda_+^*} |\eta_q| \| a_q (\cN_+ +1)^{1/2} \xi \| \\ &\leq C \kappa \| (\cN_+ +1) \xi \| \leq C N^{1/2} \kappa \| (\cN_+ +1)^{1/2} \xi \| \end{split} \]
On the other hand, switching to position space, 
\[ \begin{split}  \kappa \Big\| \sum_{p \in \Lambda_+^*} \widehat{V} (p/N) a_p a_{-p} \xi \Big\|  &\leq \kappa \int_{\Lambda \times \Lambda} dx dy N^3 V(N(x-y)) \| \check{a}_x \check{a}_y \xi \| \\ & \leq C N^{1/2} \left(  \kappa^{1/2} \| \cV_N^{1/2} \xi \|  + C \kappa \| (\cN_+ +1)^{1/2} \xi \| \right) \end{split} \]
{F}rom (\ref{eq:bb-sec3}), we find
\begin{equation}\label{eq:bb-sec-3} \begin{split} & \left| \frac{\kappa}{N} \sum_{p,q \in \Lambda^*} \widehat{V} (p/N) \eta_q \langle \xi , a_q^* a_{-q}^* a_p a_{-p} \xi \rangle \right| \\ &\hspace{2cm} 
\leq C \kappa^2 \| (\cN_+ +1)^{1/2} \xi \|^2 + C \kappa^{3/2} \| (\cN_+ +1)^{1/2} \xi \| \| \cV_N^{1/2} \xi \| 
\end{split} \end{equation}
To control the first and second term on the r.h.s. of (\ref{eq:bb-sec-dec}), we observe that 
\begin{equation}\label{eq:riemann} \begin{split} 
\frac{\kappa}{N} \sum_{p \in \Lambda_+^*} |\widehat{V} (p/N)| \eta_p &\leq \frac{C \kappa^2}{N} \sum_{p \in \Lambda_+^*} \frac{|\widehat{V} (p/N)|}{p^2} \\ &\leq C \kappa^2 \sum_{q \in  \Lambda_+^*/N} \frac{1}{N^3} \frac{|\widehat{V} (q)|}{q^2} \leq C \kappa^2 \int_{\bR^3} \frac{|\widehat{V} (q)|}{q^2} dq \leq C \kappa^2 \end{split} \end{equation}
since the sum over the rescaled lattice $N^{-1} \Lambda^*_+$ 
can be interpreted as a Riemann sum. Together with (\ref{eq:bb-sec-3}), this remark implies that 
\begin{equation}\label{eq:bb-second} 
\begin{split} 
\Big| - &\kappa \sum_{p\in \Lambda_+^*} \widehat{V} (p/N) \langle \xi, b_p [B(\eta), b_{-p} ] \xi \rangle - \kappa \sum_{p \in \Lambda_+^*} \widehat{V} (p/N) \eta_p \Big| \\ &\hspace{3cm} \leq C \kappa^2 \| (\cN_+ +1)^{1/2} \xi \|^2 + C \kappa^{3/2} \| (\cN_+ +1)^{1/2} \xi \| \| \cV_N^{1/2} \xi \| \end{split} \end{equation}

Let us now focus on the sum $\sum^*$ over all pairs $(n,k) \not = (0,0), (0,1)$ on the r.h.s. of (\ref{eq:bb-deco-in}). According to Lemma \ref{lm:conv-series}, the operator 
\begin{equation}\label{eq:bb-deco} \kappa \sum_{p \in \Lambda^*_+} \widehat{V} (p/N)  \text{ad}^{(n)} (b_p) \text{ad}^{(k)} (b_{-p}) \end{equation}
can be expanded as the sum of $2^{n+k} n!k!$ terms having the form
\[ \begin{split} \text{I} = &\kappa \sum_{p \in \Lambda^*_+} \widehat{V} (p/N)  \Lambda_1 \dots \Lambda_{i_1} N^{-k_1} \Pi^{(1)}_{\sharp,\flat} (\eta^{j_1} , \dots , \eta^{j_{k_1}} ; \eta_p^{\ell_1} \ph_{\alpha_1 p} ) \\ &\hspace{3cm} \times \Lambda'_1 \dots \Lambda'_{i_2} N^{-k_2} \Pi^{(1)}_{\sharp', \flat'} (\eta^{m_1}, \dots , \eta^{m_{k_2}} ; \eta_p^{\ell_2} \ph_{-\alpha_2 p} ) \end{split} \]
where $i_1, i_2, k_1, k_2, \ell_1, \ell_2 \in \bN$, $j_1, \dots, j_{k_1}, m_1, \dots, m_{k_2} \in \bN \backslash \{ 0 \}$, $\alpha_i = (-1)^{\ell_i}$ for $i=1,2$ and where each operator $\Lambda_i, \Lambda'_i$ is either a a factor $(N-\cN_+ )/N$, a factor $(N-\cN_+ +1)/N$ or a $\Pi^{(2)}$-operator of order $h \in \bN \backslash \{ 0 \}$ having the form
\begin{equation}\label{eq:Pi2-bb} N^{-h} \, \Pi^{(2)}_{\underline{\sharp}
, \underline{\flat}} (\eta^{z_1}, \dots \eta^{z_h}) \end{equation}
with $z_1, \dots ,z_h \in \bN \backslash \{ 0 \}$. 
To bound the expectation of an operator of the form $\text{I}$ we consider first the case $\ell_1 + \ell_2 \geq 1$. Combining the bounds (\ref{eq:E11}) (if $\ell_1 + \ell_2 \geq 2$), (\ref{eq:E10}) (if $(\ell_1,\ell_2) = (1,0)$) and (\ref{eq:E01}) (if $(\ell_1,\ell_2) = (0,1)$) from Lemma \ref{lm:prel}, we obtain 
\begin{equation}\label{eq:bb-ellgeq2} \begin{split} 
|\langle \xi, \text{I} \xi \rangle | \leq  \; & \kappa \sum_{p \in \Lambda^*_+} |\widehat{V} (p/N)| \| (\cN_+ +1)^{1/2} \xi \| \| (\cN_+ +1)^{-1/2}  \Lambda_1 \dots \Lambda_{i_1}  \\ & \times N^{-k_1} \Pi^{(1)}_{\sharp,\flat} (\eta^{j_1} , \dots , \eta^{j_{k_1}} ; \eta_p^{\ell_1} \ph_p ) \Lambda'_1 \dots \Lambda'_{i_2} N^{-k_2} \Pi^{(1)}_{\sharp', \flat'} (\eta^{m_1}, \dots , \eta^{m_{k_2}} ; \eta_p^{\ell_2} \ph_p )\xi \| \\
\leq \; &C^{n+k} \kappa^{n+k+1} \| (\cN_+ +1)^{1/2} \xi \|  \\ &\hspace{1cm} \times  \sum_{p \in \Lambda^*_+} |\widehat{V} (p/N)|\Big\{  (1+k/N) p^{-4} \| (\cN_+ +1)^{1/2} \xi \| \\ &\hspace{4cm} + p^{-2} \| a_p \xi \|   + N^{-1} p^{-2} \| (\cN_+ +1)^{1/2} \xi \| \Big\} \\
\leq \; &C^{k+n} (1+k/N) \kappa^{n+k+1}  \| (\cN_+ +1)^{1/2} \xi \|^2 
\end{split} 
\end{equation}
where we used again the bound (\ref{eq:riemann}). If instead $\ell_1 = \ell_2 = 0$, we use (\ref{eq:Edeco}) to decompose
\[ \begin{split} \langle \xi, \text{I} \, \xi \rangle  = \; &\kappa \sum_{p\in \Lambda^*_+} \widehat{V} (p/N) \langle (\cN_+ +1)^{1/2} \xi, \text{E}_1 (p,-p) \rangle \\ &+ \kappa \sum_{p\in \Lambda^*_+} \widehat{V} (p/N) \langle (\cN_+ +1)^{1/2} \xi , \text{E}_2 a_p a_{-p} \xi \rangle \end{split} \]
The r.h.s. of the last equation can be estimated exactly as we did with the r.h.s. of (\ref{eq:Edeco-kin}). We obtain, similarly to (\ref{eq:G00}), that for $\ell_1 = \ell_2 = 0$, 
\[ \begin{split} |\langle \xi , \text{I} \xi \rangle | \leq \; &C^{k+n} (1+k/N) \kappa^{n+k+1} \| (\cN_+ +1)^{1/2} \xi \|^2 \\ &+ C^{k+n} \kappa^{n+k+1/2} \| (\cN_+ +1)^{1/2} \xi \| \| \cV_N^{1/2} \xi \| \, . \end{split}  \]
Combining this bound with (\ref{eq:bb-ellgeq2}), we find from (\ref{eq:bb-deco}) that for sufficiently small $\kappa$, 
\[ \begin{split} \Big|\sum^*_{n,k} \frac{(-1)^{k+n}}{k!n!} \kappa \sum_{p\in \Lambda_+^*} &\widehat{V} (p/N) \langle \xi , \text{ad}^{(n)} (b_p) \text{ad}^{(k)} (b_{-p}) \xi \rangle \Big| \\ &\leq C \kappa \| (\cN_+ +1)^{1/2} \xi \|^2 + C \kappa^{1/2} \| (\cN_+ +1)^{1/2} \xi \| \| \cV_N^{1/2} \xi \| \end{split} \]
Together with (\ref{eq:b*b-in}), (\ref{eq:bb-deco-in}) and (\ref{eq:bb-second}), we conclude that
\[ |\langle \xi , \cE^{(2)}_N \xi \rangle | \leq C \kappa \| (\cN_+ +1)^{1/2} \xi \|^2 + C \kappa^{1/2} \| (\cN_+ +1)^{1/2} \xi \| \| \cV_N^{1/2} \xi \|  \]
Hence, for every $\delta > 0$ we can find a constant $C > 0$ such that 
\[ \pm \cE_N^{(2)} \leq \delta \cV_N + C \kappa \| (\cN_+ +1)^{1/2} \xi \|^2 \]
\end{proof}

\subsection{Analysis of $\cG_N^{(3)}$}
\label{sebsec:L3}

{F}rom (\ref{eq:cLNj}) and (\ref{eq:cGN-deco}), we have 
\begin{equation}\label{eq:G3N2} \cG_N^{(3)} = \frac{\kappa}{\sqrt{N}} \sum_{p,q \in \Lambda_+^* , p+q \not = 0} \widehat{V} (p/N)  e^{-B(\eta)} b_{p+q}^* a_{-p}^* a_q e^{B(\eta)} + \text{h.c.} \end{equation}
In the next proposition, we show how to bound $\cG_N^{(3)}$.
\begin{prop}\label{prop:G3}
Let the assumptions of Proposition \ref{prop:gene} be satisfied (in particular, suppose $\kappa \geq 0$ is small enough). Then, for every $\delta > 0$ there exists $C > 0$ such that, on $\cF_+^*$,  
\[ \pm \cG_N^{(3)} \leq \delta \cV_N + C \kappa (\cN_+ +1) \]
\end{prop}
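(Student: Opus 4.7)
The plan is to mirror the strategy used in Propositions \ref{prop:K} and \ref{prop:cL2}. The Duhamel formula applied iteratively to $e^{-B(\eta)}b_{p+q}^* a_{-p}^* a_q e^{B(\eta)}$ yields
\[ e^{-B(\eta)}\,b_{p+q}^* a_{-p}^* a_q\, e^{B(\eta)} = \sum_{n\geq 0}\frac{(-1)^n}{n!}\,\mathrm{ad}^{(n)}_{B(\eta)}\!\bigl(b_{p+q}^* a_{-p}^* a_q\bigr), \]
where each nested commutator is computed via the identities recorded in (\ref{2.2.Betacommutators}); a straightforward adaptation of Lemmas \ref{lm:indu} and \ref{lm:conv-series} shows that the series converges absolutely for $\kappa$ small enough. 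The $n=0$ summand
\[ \cG^{(3),0}_N := \frac{\kappa}{\sqrt{N}} \sum_{\substack{p,q \in \Lambda_+^* \\ p+q\neq 0}} \widehat V(p/N)\, b^*_{p+q} a^*_{-p} a_q + \text{h.c.} \]
is the ``bare'' cubic operator, while every subsequent term carries at least one additional $\eta$-factor (and hence an explicit gain of $\kappa$) originating from the application of (\ref{2.2.Betacommutators}).

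The leading piece $\cG^{(3),0}_N$ I would bound by passing to position space, where it reads
\[ \kappa\sqrt N \int_{\Lambda\times\Lambda} dx\,dy\,N^2 V(N(x-y))\,\check b_x^*\check a_y^*\check a_x + \text{h.c.} \]
(the constraint $p+q\neq 0$ produces only a negligible correction). Using $\check b_x = \sqrt{(N-\cN_+)/N}\,\check a_x$ to dominate $\|\check a_y\check b_x\xi\|\leq\|\check a_y\check a_x\xi\|$, and then applying Cauchy--Schwarz in the measure $N^2 V(N(x-y))\,dx\,dy$ split symmetrically between a factor $N^3 V$, which reconstructs $\cV_N$ through $\langle\xi,\cV_N\xi\rangle = \tfrac{\kappa}{2N}\int N^3 V\|\check a_x\check a_y\xi\|^2$, and a factor $N V$, which integrates to $N^{-2}\|V\|_1$, one obtains
\[ |\langle\xi,\cG^{(3),0}_N\xi\rangle|\leq C\kappa^{1/2}\,\|\cV_N^{1/2}\xi\|\,\|(\cN_++1)^{1/2}\xi\|, \]
giving $\pm\,\cG^{(3),0}_N\leq\delta\cV_N + C\kappa(\cN_++1)$ for any $\delta>0$.

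For the error $\cE_N^{(3)} := \cG_N^{(3)} - \cG^{(3),0}_N$, the same inductive description as in Lemma \ref{lm:indu} shows that each summand is a product of $\Lambda$-operators with $\Pi^{(1)}$-type factors carrying external momenta $p+q,-p,q$ and a positive total number of $\eta$-factors, each such factor contributing a weight $\kappa/p^2$ summable against the bounded symbol $\widehat V(p/N)$. Such a term can be controlled by two successive applications of Lemma \ref{lm:prel}: pair two of the three external fields using part ii) and bound the remaining single $\Pi^{(1)}$-operator through part i), using (\ref{eq:etap})--(\ref{eq:etapN}) for the $\eta$-sums. The delicate case is when all external exponents vanish; the pairing then leaves a potentially dangerous $\sqrt N$ from the $N^{-1/2}$ prefactor, and here the trick from the proof of Proposition \ref{prop:K} applies: one replaces the factor $p^2\eta_p$ by $\tfrac{\kappa}{2}\widehat V(p/N)$ plus lower-order corrections via the scattering identity (\ref{eq:eta-scat}), and then bounds the result against $\cV_N$ in position space exactly as for the leading piece. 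Absolute convergence of the resulting series (for $\kappa$ small) finally yields $\pm\cE_N^{(3)}\leq\delta\cV_N+C\kappa(\cN_++1)$.

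The main technical obstacle is the bookkeeping: because the operator is cubic and carries the $N^{-1/2}$ prefactor, no Cauchy--Schwarz can afford to sacrifice a factor of $\sqrt N$, so every pairing must ultimately land on $\cV_N$ (which scales with $N$) rather than on $\cK$ or on $\cN_+$. The scattering identity (\ref{eq:eta-scat}) is the essential tool for absorbing this otherwise dangerous $\sqrt N$ whenever no external $\eta$-exponent is available for the job.
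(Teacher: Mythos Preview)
Your overall architecture---separate off the bare cubic term and bound it in position space against $\cV_N$, then control the remainder term by term---is sound, and your estimate of the leading piece $\cG_N^{(3),0}$ is essentially the $r=0$, $\ell_1=0$ case of the paper's first sum. However, two points deserve correction.

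First, the invocation of the scattering identity (\ref{eq:eta-scat}) is misplaced. That identity was essential in Proposition~\ref{prop:K} because the coefficient there was $p^2\eta_p$, which is \emph{not} square--summable and had to be traded for $\widehat V(p/N)$ before one could close the estimate. In $\cG_N^{(3)}$ the coefficient is already $\widehat V(p/N)$, bounded uniformly; there is no divergent $p^2$ weight anywhere in the expansion, and the paper never uses (\ref{eq:eta-scat}) in this proof. The dangerous $N^{-1/2}$ is absorbed either directly by the potential energy in position space (exactly as you do for the leading term) or by the extra $\eta$-factors that appear in the commutator expansion, together with the improved bounds (\ref{eq:E01}), (\ref{eq:E2impr}), (\ref{eq:S2-imp}) when $k_1>0$, $k_2>0$, or a $\Pi^{(2)}$-factor is present. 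Your closing paragraph therefore misidentifies the mechanism.

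Second, your proposal to expand $\mathrm{ad}^{(n)}_{B(\eta)}\bigl(b_{p+q}^* a_{-p}^* a_q\bigr)$ in one shot, relying on a ``straightforward adaptation'' of Lemma~\ref{lm:indu}, hides nontrivial bookkeeping: that lemma is formulated for a single $b$-field. The paper avoids this by first applying one Duhamel step to the pair $a_{-p}^* a_q$ alone, using $[a_{-p}^* a_q, B(\eta)] = \eta_q b_{-p}^* b_{-q}^* + \eta_p b_q b_p$, which splits $\cG_N^{(3)}$ into three sums (see (\ref{eq:GN3-in})): one still cubic but with the $a^* a$ pair untouched, and two with an explicit $\eta_p$ or $\eta_q$ factor and only $b$-fields remaining. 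Each $b$ is then expanded separately by Lemma~\ref{lm:indu}, so that Lemma~\ref{lm:prel} and its position-space analogue Lemma~\ref{lm:prel4} apply verbatim. This decomposition is what makes the case analysis tractable; your single expansion would require you to rederive an analogue of Lemma~\ref{lm:indu} for triple products before you could invoke Lemma~\ref{lm:prel}.
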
  

Since some of the terms in $\cG_N^{(3)}$ (and many terms in $\cG_N^{(4)}$, which will be analyzed in the next subsection) 
have to be bounded with the potential energy operator, in the proof of Prop. \ref{prop:G3} (and in the proof of Prop. \ref{prop:E4} in the next subsection) we will often need to switch to position space. For this reason it is convenient to show a version of the estimates in Lemma \ref{lm:prel} stated in position space. The proof of the following Lemma follows closely the proof of Lemma 5.2 in \cite{BS}.
\begin{lemma} \label{lm:prel4}
Let $\xi \in \cF^{\leq N}_+$, $\beta \in \bN$, $i_1, i_2, k_1, k_2, \ell_1, \ell_2 \in \bN$, $j_1, \dots, j_{k_1}, m_1, \dots , m_{k_2} \in \bN\backslash \{ 0 \}$, For every $s=1, \dots, \max \{ i_1, i_2 \}$, let $\Lambda_s, \Lambda'_s$ be either a factor $(N-\cN_+ )/N$, $(N-\cN_+ +1)/N$ or a $\Pi^{(2)}$-operator of the form
\begin{equation}\label{eq:Pi-prel4} N^{-h} \Pi^{(2)}_{\underline{\sharp}, \underline{\flat}} (\eta^{z_1}, \dots , \eta^{z_h}) \end{equation}
for some $h \in \bN \backslash \{ 0 \}$, $z_1, \dots ,z_h \in \bN \backslash \{ 0 \}$ and $\underline{\sharp},\underline{\flat} \in \{ \cdot , * \}^h$. Suppose that the operators 
\[ \begin{split} \Lambda_1 \dots \Lambda_{i_1} N^{-k_1} \Pi^{(1)}_{\sharp, \flat} (\eta^{j_1}, \dots , \eta^{j_{k_1}} ; \check{\eta}^{\ell_1}_x ) \\ \Lambda'_1 \dots \Lambda'_{i_2} N^{-k_2} \Pi^{(1)}_{\sharp', \flat'} (\eta^{m_1}, \dots , \eta^{m_{k_2}} ; \check{\eta}^{\ell_2}_y )
 \end{split} \]
for some $\sharp \in \{ \cdot, * \}^{k_1}, \flat \in \{ \cdot, * \}^{k_1+1}, \sharp' \in \{ \cdot , * \}^{k_2}, \flat' \in \{ \cdot, * \}^{k_2+1}$ appear in the expansion of $\text{ad}^{(n)}_{B(\eta)} (\check{b}_x)$ and of $\text{ad}^{(k)}_{B(\eta)} (\check{b}_y)$ for some $n,k \in \bN$, as described in Lemma \ref{lm:indu}. Here we use the notation $\check{\eta}^{\ell_1}_x$ for the function $z \to \check{\eta}^{\ell_1} (x-z)$, where $\check{\eta}^{\ell_1}$ denotes the Fourier transform of the function $\eta^{\ell_1}$ defined on $\Lambda^*_+$. Let
\[ \begin{split} 
\text{S} = & \; \| (\cN_+ +1)^{\beta/2}  \Lambda'_1 \dots \Lambda'_{i_2} N^{-k_2} \Pi^{(1)}_{\sharp', \flat'} (\eta^{m_1}, \dots , \eta^{m_{k_2}} ; \check{\eta}^{\ell_2}_y )\\ &\hspace{2cm} \times  \Lambda_1 \dots \Lambda_{i_1} N^{-k_1} \Pi^{(1)}_{\sharp, \flat} (\eta^{j_1}, \dots , \eta^{j_{k_1}} ; \check{\eta}^{\ell_1}_x ) \xi \| \end{split} \]
Then we have the following bounds. If $\ell_1, \ell_2 \geq 1$, 
\begin{equation}\label{eq:S1} \text{S} \leq C^{n+k} \kappa^{n+k} \| (\cN_+ +1)^{(\beta+2)/2} \xi \|  \end{equation}
If $\ell_1 =0$ and $\ell_2 \geq 1$, \[ \text{S} \leq C^{n+k} \kappa^{n+k}  \| \check{a}_x (\cN_+ +1)^{(\beta+1)/2} \xi \| \]
If $\ell_1 \geq 1$ and $\ell_2 =0$,
\begin{equation}\label{eq:S2} \begin{split} \text{S} \leq \; &C^{n+k} \kappa^{n+k} n N^{-1} \| (\cN_+ +1)^{(\beta+2)/2} \xi \| \\ &+ C^{n+k} \kappa^{n+k-\ell_1} \mu_{\ell_1} |\check{\eta}^{\ell_1} (x-y)| \| (\cN_+ +1)^{\beta/2} \xi \| \\ &+C^{n+k} \kappa^{n+k} \| \check{a}_y (\cN_+ +1)^{(\beta+1)/2} \xi \| \end{split} \end{equation}
where $\mu_{\ell_1} = 1$ if $\ell_1$ is odd, while $\mu_{\ell_1} = 0$ if $\ell_1$ is even. If $\ell_1 \geq 1$ and $\ell_2 =0$ and we additionally assume that $k_1 > 0$ or $k_2 > 0$ or at least one of the $\Lambda$- or $\Lambda'$-operators is a $\Pi^{(2)}$-operator of the form (\ref{eq:Pi-prel4}), we obtain the improved estimate
\begin{equation}\label{eq:S2-imp} 
\begin{split} \text{S} \leq \; &C^{n+k} \kappa^{n+k} n N^{-1} \| (\cN_+ +1)^{(\beta+2)/2} \xi \| \\ &+ C^{n+k} \kappa^{n+k-\ell_1} \mu_{\ell_1} N^{-1} |\check{\eta}^{\ell_1} (x-y)| \| (\cN_+ +1)^{(\beta+2)/2} \xi \| \\ &+ C^{n+k} \kappa^{n+k} \| \check{a}_y (\cN_+ +1)^{(\beta+1)/2} \xi \| \end{split} \end{equation}
Finally, if $\ell_1 = \ell_2 = 0$, 
\[ \text{S} \leq C^{n+k} \kappa^{n+k} n N^{-1} \| \check{a}_x (\cN_+ +1)^{(\beta+1)/2} \xi \| +C^{n+k} \kappa^{n+k}  \| \check{a}_x \check{a}_y (\cN_+ +1)^{\beta/2} \xi \| \]
\end{lemma}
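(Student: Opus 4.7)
The plan is to mirror the proof of Lemma \ref{lm:prel} in position space, using Lemma \ref{lm:indu} to control the structure of nested commutators and Lemma \ref{lm:Pi-bds} to estimate the $\Pi^{(1)}$- and $\Pi^{(2)}$-factors. The starting observation is that the pointwise decay $|\eta_p|\le C\kappa/p^{2}$ (and the analogous bounds for powers $\eta^{j}$, $j\ge 2$) gives uniform control $\|\eta^{j}\|_{2}, \|\eta^{j}\|_{1}\le C^{j}\kappa^{j}$, so every $\Pi^{(2)}$-type $\Lambda$-operator of the form (\ref{eq:Pi-prel4}) can be replaced, via Lemma \ref{lm:Pi-bds}, by a scalar factor $C^{h}\kappa^{\bar h}$ (where $\bar h=z_{1}+\dots+z_{h}$) times $(\cN_{+}+1)^{\bullet}$, at the price of shifting the power of $(\cN_{+}+1)$ applied to $\xi$. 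Each number-operator factor $(N\pm\cN_{+})/N$ is bounded by $1$ on $\cF^{\le N}_{+}$. Once every $\Lambda_{s}$ and $\Lambda'_{s}$ has been absorbed this way, we are left with the product of the two $\Pi^{(1)}$-operators acting on $\xi$, and Lemma \ref{lm:indu}(iii) ensures that the overall $\eta$-power collected is $n+k$, giving the prefactor $C^{n+k}\kappa^{n+k}$ in all cases.

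Next I would split according to $\ell_{1},\ell_{2}$. When $\ell_{1},\ell_{2}\ge 1$, Lemma \ref{lm:indu}(v) shows that both $\Pi^{(1)}$-operators end in a bound field paired with $\check\eta^{\ell_{i}}_{\cdot}$; a direct application of the position-space $\Pi^{(1)}$-bound from Lemma \ref{lm:Pi-bds} yields \eqref{eq:S1}. If $\ell_{1}=0$ (respectively $\ell_{2}\ge 1$), the rightmost operator of the inner $\Pi^{(1)}$ is precisely $\check b^{\sharp}_{x}$ by Lemma \ref{lm:indu}(v), so before applying Lemma \ref{lm:Pi-bds} I can pull $\check a_{x}$ out to act on $\xi$, producing the factor $\|\check a_{x}(\cN_{+}+1)^{(\beta+1)/2}\xi\|$.

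The main step—and the primary technical obstacle—is the case $\ell_{1}\ge 1$, $\ell_{2}=0$, which must produce the diagonal contribution $|\check\eta^{\ell_{1}}(x-y)|$. Here the outer (right) $\Pi^{(1)}$-operator ends with $\check b_{y}$, so the idea is to commute the underlying $\check a_{y}$ to the right through the inner $\Pi^{(1)}$-operator. Using the CCR \eqref{eq:comm-b}, \eqref{eq:comm-b2}, each commutator with an internal pair $\check a^{\sharp_{i}}\check a^{\flat_{i}}$ associated to a kernel $\check\eta^{j_{l}}(x_{l}-y_{l})$ produces a delta identification yielding a convolution bounded by the $\ell^{1}$-norm of that $\eta^{j_{l}}$, while at the same time removing one pair of fields and thus liberating a factor $N^{-1}$; Lemma \ref{lm:indu}(vi) guarantees $j_{l}\ge 2$ so these convolutions are summable and contribute to the first line of \eqref{eq:S2}. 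After $\check a_{y}$ has been commuted all the way to the rightmost field of the inner $\Pi^{(1)}$, the residual commutator $[\check a_{y},\check a^{\flat_{k_{1}}}(\check\eta^{\ell_{1}}_{x})]$ vanishes when $\ell_{1}$ is even (annihilation–annihilation), and equals $\check\eta^{\ell_{1}}(x-y)$ when $\ell_{1}$ is odd (annihilation–creation), producing the $\mu_{\ell_{1}}|\check\eta^{\ell_{1}}(x-y)|$ term; the leftover term from commuting $\check a_{y}$ past this last field (in the normally-ordered part) gives the $\|\check a_{y}(\cN_{+}+1)^{(\beta+1)/2}\xi\|$ piece. The improved bound \eqref{eq:S2-imp} follows because, under the extra hypothesis, at least one unused $N^{-1}$ factor is available in either $N^{-k_{1}}$, $N^{-k_{2}}$ or one of the $\Pi^{(2)}$-operators, which can be extracted from the diagonal commutator term at the cost of a single $(\cN_{+}+1)$.

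Finally, the case $\ell_{1}=\ell_{2}=0$ is obtained by combining the previous two: both $\check b_{x}$ and $\check b_{y}$ provide annihilation operators at the rightmost positions, and one simply applies the commutation argument of the previous paragraph to the left $\Pi^{(1)}$-operator while pulling $\check a_{y}$ out directly from the right one, with the non-commutator remainder producing $\|\check a_{x}\check a_{y}(\cN_{+}+1)^{\beta/2}\xi\|$ and the commutator contribution producing $nN^{-1}\|\check a_{x}(\cN_{+}+1)^{(\beta+1)/2}\xi\|$. The most delicate bookkeeping throughout is the parity analysis of $\ell_{1}$ and the careful accounting of the factors $N^{-1}$ generated by the CCR when removing a pair of internal operators.
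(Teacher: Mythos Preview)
Your approach is correct and is exactly the one the paper intends: the paper does not give a proof but states that it ``follows closely the proof of Lemma 5.2 in \cite{BS}'', which is precisely the position-space translation of the argument for Lemma \ref{lm:prel} that you outline (strip the $\Lambda$- and $\Lambda'$-factors via Lemma \ref{lm:Pi-bds}, then in the cases $\ell_1=0$ or $\ell_2=0$ commute the corresponding $\check a_x$ or $\check a_y$ to the right, tracking the commutator with the final field of the other $\Pi^{(1)}$-operator to produce the $\mu_{\ell_1}\,|\check\eta^{\ell_1}(x-y)|$ term).

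One small correction to your write-up: when you commute $\check a_y$ through an internal pair associated with a kernel $\check\eta^{j_l}$, the resulting smeared operator $a(\check\eta^{j_l}_y)$ is controlled by the $\ell^2$-norm $\|\eta^{j_l}\|_2$, not the $\ell^1$-norm, so you do not need $j_l\ge 2$ and your appeal to Lemma \ref{lm:indu}(vi) is unnecessary here; $j_l\ge 1$ together with the liberated $N^{-1}$ already suffices, exactly as in the momentum-space argument of Lemma \ref{lm:prel}.
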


We are now ready to proceed with the proof of Prop. \ref{prop:G3}. 
\begin{proof}[Proof of Prop. \ref{prop:G3}] 
We start by writing 
\[ \begin{split} e^{-B(\eta)} a_{-p}^* a_q e^{B(\eta)} &= a_{-p}^* a_q + \int_0^1 ds \, e^{-sB(\eta)} [a_{-p}^* a_q , B(\eta)] e^{sB(\eta)} \\ &= a_{-p}^* a_q + \int_0^1 e^{-sB(\eta)} (\eta_q b_{-p}^* b^*_{-q} + \eta_p b_q b_p) e^{s B(\eta)} \end{split} \]
With Lemma \ref{lm:conv-series}, we obtain
\[ \begin{split} e^{-B(\eta)} &a_{-p}^* a_q e^{B(\eta)} \\ = \; &a_{-p}^* a_q \\ &+ \sum_{n,k \geq 0} \frac{(-1)^{n+k}}{n!k!(n+k+1)} \left[ \eta_q \text{ad}^{(n)}_{B(\eta)} (b_{-p}^*) \text{ad}^{(k)}_{B(\eta)} (b^*_{-q}) + \eta_p \text{ad}^{(n)}_{B(\eta)} (b_q) \text{ad}^{(k)}_{B(\eta)} (b_p) \right]
\end{split} \]
{F}rom (\ref{eq:G3N2}), we find
\begin{equation}\label{eq:GN3-in} \begin{split} \cG_N^{(3)} = \; &\sum_{r \geq 0} \frac{(-1)^r}{r!} \frac{\kappa}{\sqrt{N}} \sum_{p,q \in \Lambda^*_+ : p+q \not = 0} \widehat{V} (p/N) \text{ad}^{(r)}_{B(\eta)} (b^*_{p+q}) a_{-p}^* a_q \\
&+ \sum_{n,k,r \geq 0} \frac{(-1)^{n+k+r}}{n!k!r!(n_k+1)} \\ &\hspace{1cm} \times \frac{\kappa}{\sqrt{N}} \sum_{p,q \in \Lambda_+^* , p+q \not = 0} \widehat{V} (p/N) \eta_q \, \text{ad}^{(r)}_{B(\eta)} (b^*_{p+q}) \text{ad}^{(n)}_{B(\eta)} (b_{-p}^*) \text{ad}^{(k)}_{B(\eta)} (b^*_{-q}) \\
&+ \sum_{n,k,r \geq 0} \frac{(-1)^{n+k+r}}{n!k!r!(n+k+1)} \\ &\hspace{1cm} \times \frac{\kappa}{\sqrt{N}} \sum_{p,q \in \Lambda_+^* , p+q \not = 0} \widehat{V} (p/N) \eta_p \, \text{ad}^{(r)}_{B(\eta)} (b^*_{p+q}) \text{ad}^{(n)}_{B(\eta)} (b_{p}) \text{ad}^{(k)}_{B(\eta)} (b_{q})
\\ &+\text{h.c.} 
\end{split} \end{equation}

We start by analyzing the last sum on the r.h.s. of (\ref{eq:GN3-in}). From Lemma \ref{lm:indu}, each operator
\begin{equation}\label{eq:L-in} \frac{\kappa}{\sqrt{N}} \sum_{p,q \in \Lambda_+^* , p+q \not = 0} \widehat{V} (p/N) \eta_p \, \text{ad}^{(r)}_{B(\eta)} (b^*_{p+q}) \text{ad}^{(n)}_{B(\eta)} (b_{p}) \text{ad}^{(k)}_{B(\eta)} (b_{q}) \end{equation}
can be expanded in the sum of $2^{n+k+r} n!k!r!$ terms having the form 
\begin{equation}\label{eq:typG}
\begin{split} \text{L} = \; &\frac{\kappa}{\sqrt{N}} \sum_{p,q \in \Lambda_+^* , p+q \not = 0} \widehat{V} (p/N) \eta_p \, \Pi^{(1)}_{\sharp , \flat} (\eta^{j_1} , \dots , \eta^{j_{k_1}}; \eta^{\ell_1}_{p+q} \ph_{\alpha_1 (p+q)})^*  \Lambda^*_{i_1} \dots \Lambda^*_{i_1} \\ &\hspace{3cm} \times  \Lambda'_1 \dots \Lambda'_{i_2} N^{-k_2} \Pi^{(1)}_{\sharp' , \flat'} (\eta^{m_1} , \dots , \eta^{m_{k_2}}; \eta^{\ell_2}_p \ph_{\alpha_2 p})  \\ &\hspace{3cm} \times \Lambda''_1 \dots \Lambda''_{i_3} N^{-k_3} \Pi^{(1)}_{\sharp'' , \flat''} (\eta^{s_1} , \dots , \eta^{s_{k_3}}; \eta^{\ell_3} \ph_{\alpha_3 q}) \end{split} \end{equation}
where $i_1, i_2, i_3, k_1, k_2, k_3, \ell_1, \ell_2 , \ell_3 \in \bN$, $j_1, \dots , j_{k_1}, m_1, \dots, m_{k_2}, s_1 , \dots , s_{k_3} \in \bN \backslash \{ 0 \}$, $\alpha_i = (-1)^{\ell_i}$ for $i=1,2$ and where each operator $\Lambda_i, \Lambda'_i, \Lambda''_i$ is either a factor $(N-\cN_+ )/N$, a factor $(N+1 - \cN_+ )/N$ or a $\Pi^{(2)}$-operator of the form
\[ N^{-h} \Pi^{(2)}_{\underline{\sharp}, \underline{\flat}} (\eta^{z_1} , \dots , \eta^{z_s}) \]
for some $h, z_1 \dots, z_h \in \bN\backslash \{ 0 \}$. The expectation of (\ref{eq:typG}) can be bounded by 
\[ \begin{split} 
|\langle \xi , \text{L} \xi \rangle | \leq \; &\frac{\kappa}{\sqrt{N}} \sum_{\substack{p,q \in \Lambda^*_+ : \\ p \not =-q}} |\widehat{V} (p/N)| |\eta_p| \| \Lambda_1 \dots \Lambda_{i_1} N^{-{k_1}} \Pi^{(1)}_{\sharp, \flat} (\eta^{j_1}, \dots , \eta^{j_{k_1}} ; \eta^{\ell_1}_{p+q} \ph_{\alpha_1 (p+q)} ) \xi \| \\ &\hspace{2cm} \times \| \Lambda'_1 \dots \Lambda'_{i_2} N^{-k_2} \Pi^{(1)}_{\sharp' , \flat'} (\eta^{m_1} , \dots , \eta^{m_{k_2}}; \eta^{\ell_2}_p \ph_{\alpha_2 p})  \\ &\hspace{3cm} \times \Lambda''_1 \dots \Lambda''_{i_3} N^{-k_3} \Pi^{(1)}_{\sharp'' , \flat''} (\eta^{s_1} , \dots , \eta^{s_{k_3}}; \eta^{\ell_3} \ph_{\alpha_3 q}) \xi \| \end{split} \]
Combining the bounds (\ref{eq:D1}) (if $\ell_1 \geq 1$) and (\ref{eq:D0}) (if $\ell_1 = 0$) on the one hand, and the bounds (\ref{eq:E11}) (if $\ell_2,\ell_3 \geq 1$), (\ref{eq:E10}) (if $\ell_2 \geq 1$ and $\ell_3 = 0$), (\ref{eq:E01-b}) (if $\ell_2 = 0$ and $\ell_3 \geq 1$) and (\ref{eq:Edeco}) (if $\ell_1 = \ell_2 = 0$) on the other hand, we conclude that 
\[  \begin{split} 
|\langle \xi , \text{L} \xi \rangle | \leq \; &C^{n+k+r} \kappa^{n+k+r+2} \\ &\times \frac{1}{\sqrt{N}} \sum_{\substack{p,q \in \Lambda^*_+ : \\ p \not = -q}} \frac{1}{p^{2}} \Big\{ \frac{1}{(p+q)^{2}} \| (\cN_+ +1)^{1/2} \xi \| + \| a_{p+q} \xi \| \Big\} \\ & \hspace{1cm} \times \Big\{ \frac{(1+ r/N)}{p^{2} q^{2}} \| (\cN_+ +1) \xi \| + \frac{(1+r/N)}{p^{2}} \| a_q (\cN_+ +1)^{1/2} \xi \| \\ &\hspace{2cm}  + \frac{1}{q^{2}} \| a_p (\cN_+ +1)^{1/2} \xi \| + \| a_p a_q \xi \| \Big\} \\
\leq \; &C^{n+k+r} (1+r/N) \kappa^{n+k+r+2} \| (\cN_+ +1)^{1/2} \xi \|^2 \end{split} \] 
{F}rom (\ref{eq:L-in}), we obtain that the expectation of the last sum on the r.h.s. of (\ref{eq:GN3-in}) is bounded by   
\begin{equation}\label{eq:fir-3}\begin{split} \Big| \sum_{n,k,r \geq 0} &\frac{(-1)^{n+k+r}}{n!k!r!(n+k+1)}  \\ &\times \frac{\kappa}{\sqrt{N}} \sum_{p,q \in \Lambda_+^* , p+q \not = 0} \widehat{V} (p/N) \eta_p \, \langle \xi, \text{ad}^{(r)}_{B(\eta)} (b^*_{p+q}) \text{ad}^{(n)}_{B(\eta)} (b_{p}) \text{ad}^{(k)}_{B(\eta)} (b_{q}) \xi \rangle \Big| \\ &\hspace{8cm}  \leq C \kappa^2  \| (\cN_+ +1)^{1/2} \xi \|^2 \end{split} \end{equation}

Next, we consider the second sum on the r.h.s. of (\ref{eq:GN3-in}) (we take the hermitian conjugated operator). To bound the expectation of this term, we will need to use the potential energy operator. For this reason, it is convenient 
to switch to position space. We find 
\begin{equation}\label{eq:adbadbadb}  \begin{split} 
\frac{\kappa}{\sqrt{N}} &\sum_{p,q \in \Lambda_+^* , p+q \not = 0} \widehat{V} (p/N) \eta_q \, \text{ad}^{(r)}_{B(\eta)} (b_{-q}) \text{ad}^{(n)}_{B(\eta)} (b_{-p}) \text{ad}^{(k)}_{B(\eta)} (b_{p+q})  \\ &=\kappa \int_{\Lambda \times \Lambda} dx dy \, N^{5/2} V(N(x-y)) \text{ad}^{(r)}_{B(\eta)} (b (\check{\eta}^{1+\ell_1}_x) \text{ad}^{(n)}_{B(\eta)} (\check{b}_y) \text{ad}^{(k)}_{B(\eta)} (\check{b}_x) 
\end{split} \end{equation}
where we used the notation $\check{\eta}^{s}$ to indicate the Fourier transform of the sequence $\Lambda^* \ni p \to \eta^s_p$, and $\check{\eta}^s_x$ denotes the function (or the distribution, if $s =0$) $z \to \check{\eta}^s_x (z) = \check{\eta}^s (z-x)$.  
With Lemma \ref{lm:indu}, the r.h.s. of (\ref{eq:adbadbadb}) can be written as the sum of $2^{n+k+r} n!k!r!$ terms, all having the form 
\begin{equation}\label{eq:Hop} \begin{split} 
\text{M}  = \; & \kappa \int_{\Lambda \times \Lambda} dx dy \, N^{5/2} V(N(x-y)) \, \Lambda_1 \dots \Lambda_{i_1} N^{-k_1} 
\Pi^{(1)}_{\sharp , \flat} (\eta^{j_1} , \dots , \eta^{j_{k_1}}; \check{\eta}^{1+\ell_1}_x )  \\ & \hspace{3cm} \times \Lambda'_1 \dots \Lambda'_{i_2}  N^{-k_2} \Pi^{(1)}_{\sharp' , \flat'} (\eta^{m_1} , \dots , \eta^{m_{k_2}}; \check{\eta}^{\ell_2}_y)  \\ & \hspace{3cm} \times \Lambda^{''}_1 \dots \Lambda^{''}_{i_3} N^{-k_3} \Pi^{(1)}_{\sharp^{''} , \flat^{''}} (\eta^{s_1} , \dots , \eta^{s_{k_3}}; \check{\eta}^{\ell_3}_{x}) \end{split} 
\end{equation}
where $i_1, i_2, i_3, k_1, k_2, k_3, \ell_1, \ell_2 , \ell_3 \in \bN$, $j_1, \dots , j_{k_1}, m_1, \dots, m_{k_2}, s_1 , \dots , s_{k_3} \in \bN \backslash \{ 0 \}$ and where each operator $\Lambda_i, \Lambda'_i, \Lambda''_i$ is either a factor $(N-\cN_+ )/N$, a factor $(N+1 - \cN_+ )/N$ or a $\Pi^{(2)}$-operator of the form
\begin{equation}\label{eq:Pi2-M} N^{-h} \Pi^{(2)}_{\underline{\sharp}, \underline{\flat}} (\eta^{z_1} , \dots , \eta^{z_h}) \end{equation}
for some $h, z_1, \dots , z_h \in \bN \backslash \{ 0 \}$. To bound the expectation of (\ref{eq:Hop}), we first assume that $(n,k) \not = (0,1)$. Under this condition, we bound
\begin{equation}\label{eq:MM} \begin{split} 
|\langle \xi , \text{M} \xi \rangle | \leq \; &\kappa \int_{\Lambda \times \Lambda} dx dy \, N^{5/2} V(N(x-y)) \| N^{-k_1} \Pi^{(1)}_{\sharp,\flat} (\eta^{j_1}, \dots, \eta^{j_{k_1}} ; \check{\eta}^{\ell_1+ 1}_x )^* \Lambda_{i_1}^* \dots \Lambda^*_1 \xi \| \\ &\hspace{1cm} \times \Big\| \Lambda'_1 \dots \Lambda'_{i_2}  N^{-k_2} \Pi^{(1)}_{\sharp' , \flat'} (\eta^{m_1} , \dots , \eta^{m_{k_2}}; \check{\eta}^{\ell_2}_y)  \\ & \hspace{2cm} \times \Lambda^{''}_1 \dots \Lambda^{''}_{i_3} N^{-k_3} \Pi^{(1)}_{\sharp^{''} , \flat^{''}} (\eta^{s_1} , \dots , \eta^{s_{k_3}}; \check{\eta}^{\ell_3}_{x}) \xi \Big\| \end{split} \end{equation}
With Lemma \ref{lm:prel4}, we estimate
\begin{equation}\label{eq:MM1} \| N^{-k_1} \Pi^{(1)}_{\sharp,\flat} (\eta^{j_1}, \dots, \eta^{j_{k_1}} ; \check{\eta}^{\ell_1+ 1}_x )^* \Lambda_{i_1}^* \dots \Lambda^*_1 \xi \| \leq C^r \kappa^{r+1} 
\| (\cN_+ +1)^{1/2} \xi \| \end{equation}
Considering separately all possible choices for the parameters $\ell_2, \ell_3$, Lemma \ref{lm:prel4} also implies that 
\begin{equation}\label{eq:MM2} \begin{split}  &\Big\| \Lambda'_1 \dots \Lambda'_{i_2}  N^{-k_2} \Pi^{(1)}_{\sharp' , \flat'} (\eta^{m_1} , \dots , \eta^{m_{k_2}}; \check{\eta}^{\ell_2}_y)   \Lambda^{''}_1 \dots \Lambda^{''}_{i_3} N^{-k_3} \Pi^{(1)}_{\sharp^{''} , \flat^{''}} (\eta^{s_1} , \dots , \eta^{s_{k_3}}; \check{\eta}^{\ell_3}_{x}) \xi \Big\| \\ &\hspace{1cm} \leq C^{n+k} \kappa^{n+k} \Big\{ (1+k/N) \| (\cN_+ +1) \xi \| + (1+k/N) \| \check{a}_x (\cN_+ +1)^{1/2} \xi \| \\ &\hspace{8cm} + \|\check{a}_y (\cN_+  +1)^{1/2} \xi \| + \| \check{a}_x \check{a}_y \xi \|  \Big\} \end{split} \end{equation}
When dealing with the choice $(\ell_2, \ell_3) = (0,1)$, we used here the exclusion of the pair $(n,k) = (0,1)$, which implies that $n+k \geq 1$ (because $n \geq \ell_2, k \geq \ell_3$) and therefore that either $k_2 > 0$ or $k_3 > 0$ or that at least one of the $\Lambda'$- or of the $\Lambda''$-operators is a $\Pi^{(2)}$-operator of the form (\ref{eq:Pi2-M}); this observation allowed us to use the bound (\ref{eq:S2-imp}), which together with $|\check{\eta} (x-y)| \leq C N \| V \|_1$, led us to (\ref{eq:MM2}). 
Inserting (\ref{eq:MM1}) and (\ref{eq:MM2}) in (\ref{eq:MM}), we arrive at
\begin{equation}\label{eq:Mbd} \begin{split} |\langle \xi , \text{M} \xi \rangle | \leq \; &C^{n+k+r} (1+k/N)
\kappa^{n+k+r+2} \| (\cN_+ +1)^{1/2} \xi \|  \int_{\Lambda \times \Lambda} dx dy \, N^{5/2} V(N(x-y)) \\ &\hspace{1cm} \times \Big\{ \| (\cN_+ +1) \xi \| + \| \check{a}_x (\cN_+ +1)^{1/2} \| + \| \check{a}_y (\cN_+ +1)^{1/2} \xi \| + \| \check{a}_x \check{a}_y \xi \| \Big\}  \\ \leq \; &C^{n+k+r} (1+k/N) \kappa^{n+k+r+2}  \| (\cN_+ +1)^{1/2} \xi \|^2 \\ &+ C^{n+k+r} (1+k/N)
\kappa^{n+k+r+3/2}  \| (\cN_+ + 1)^{1/2} \xi \| \| \cV_N^{1/2} \xi \| 
\end{split} \end{equation}

Finally, let us consider the expectation of (\ref{eq:Hop}) in the case $(n,k) = (0,1)$. In fact, we can further restrict our attention to the choice $(\ell_2, \ell_3) = (0,1)$, because for all other choices of $(\ell_2, \ell_3)$, the bound (\ref{eq:MM2}) remains true even if $(n,k) = (0,1)$. If $(\ell_2, \ell_3) = (n,k) = (0,1)$, by Lemma \ref{lm:indu}, part iii) and iv), the operator (\ref{eq:Hop}) has the form 
\begin{equation}\label{eq:MM1M2} \begin{split} \text{M} = \;& \kappa \int_{\Lambda \times \Lambda} dx dy \, N^{5/2} V(N(x-y)) \\ &\hspace{1cm} \times \Lambda_1 \dots \Lambda_{i_1} N^{-k_1} \Pi^{(1)}_{\sharp,\flat} (\eta^{j_1} , \dots , \eta^{j_{k_1}} , \check{\eta}_x^{1+\ell_1} ) \check{b}_y  \frac{(N-\cN_+ )}{N} b^* (\check{\eta}_x) \\  = \; & \kappa 
\int_{\Lambda \times \Lambda} dx dy \, N^{5/2} V(N(x-y)) \\ &\hspace{1cm} \times \Lambda_1 \dots \Lambda_{i_1} N^{-k_1} \Pi^{(1)}_{\sharp,\flat} (\eta^{j_1} , \dots , \eta^{j_{k_1}} , \check{\eta}_x^{1+\ell_1} ) a^* (\check{\eta}_x) \frac{(N+1-\cN_+)}{N}  \check{a}_y \\ &+\kappa  \int_{\Lambda \times \Lambda} dx dy \, N^{5/2} V(N(x-y)) \\ &\hspace{1cm} \times \Lambda_1 \dots \Lambda_{i_1} N^{-k_1} \Pi^{(1)}_{\sharp,\flat} (\eta^{j_1} , \dots , \eta^{j_{k_1}} , \check{\eta}_x^{1+\ell_1} ) \frac{(N+1-\cN_+)}{N} \, \frac{(N-\cN_+)}{N} \check{\eta} (x-y) \\
=: \; & \text{M}_1 + \text{M}_2 
\end{split} \end{equation}
The expectation of the first term can be bounded by 
\begin{equation}\label{eq:MM1M22} \begin{split} |\langle \xi , \text{M}_1 \xi \rangle | &\leq C^r \kappa^{r+2} \int_{\Lambda \times \Lambda} dx dy \, N^{5/2} V(N(x-y)) \| (\cN_+ +1)^{1/2} \xi \| \| \check{a}_y (\cN_+ +1)^{1/2} \xi \| \\ &\leq C^r \kappa^{r+2} \| (\cN_+ +1)^{1/2} \xi \|^2 \end{split}  \end{equation}
As for the second term on the r.h.s. of (\ref{eq:MM1M2}), its expectation vanishes on vectors $\xi \in \cF_+^{\leq N}$ (because of the orthogonality to the constant orbital $\ph_0$). 

Combining (\ref{eq:Mbd}) with (\ref{eq:MM1M2}) and (\ref{eq:MM1M22}), and summing over all $n,k,r \in \bN$, we conclude that, if $\| V \|_1$ is small enough, the expectation of the second sum on the r.h.s. of (\ref{eq:GN3-in}) is bounded by
\begin{equation}\label{eq:sec-3} \begin{split} 
\Big| \sum_{n,k,r \geq 0} &\frac{(-1)^{n+k+r}}{n!k!r!(n+k+1)}  \\ \frac{\kappa}{\sqrt{N}} & \times \sum_{p,q \in \Lambda_+^* , p+q \not = 0} \widehat{V} (p/N) \eta_q \, \langle \xi, \text{ad}^{(r)}_{B(\eta)} (b^*_{p+q}) \text{ad}^{(n)}_{B(\eta)} (b_{-p}^*) \text{ad}^{(k)}_{B(\eta)} (b^*_{-q}) \xi \rangle \Big| \\ &\hspace{2cm} \leq C \kappa^2 \| (\cN_+ +1)^{1/2} \xi \|^2 + C \kappa^{3/2} \| (\cN_+ +1)^{1/2} \xi \| \| \cV_N^{1/2} \xi \| \end{split} 
\end{equation}

Finally, we consider the first sum on the r.h.s. of (\ref{eq:GN3-in}). {F}rom Lemma \ref{lm:indu}, each operator
\begin{equation}\label{eq:first-3} \frac{\kappa}{\sqrt{N}} \sum_{p,q \in \Lambda^*_+ : p+q \not = 0} \widehat{V} (p/N) \text{ad}^{(r)}_{B(\eta)} (b^*_{p+q}) a^*_{-p} a_q \end{equation}
can be written as the sum of $2^r r!$ terms having the form
\begin{equation}\label{eq:rsum} \text{P} = \frac{\kappa}{\sqrt{N}} \sum_{p,q \in \Lambda^*_+ : p+q \not = 0} \widehat{V} (p/N) N^{-k_1} \Pi^{(1)}_{\sharp, \flat} (\eta^{j_1}, \dots , \eta^{j_{k_1}} ; \eta_{p+q}^{\ell_1}  \ph_{\alpha_1 (p+q)} )^* \Lambda_{i_1}^* \dots \Lambda_1^* a_{-p}^* a_q \end{equation}
for $i_1, k_1, \ell_1 \in \bN$, $j_1, \dots , j_{k_1} \in \bN \backslash \{ 0 \}$, $\alpha_1 = 1$ if $\ell_1$ is even, $\alpha_1 = -1$ if $\ell_1$ is odd. To bound the expectation of $\text{P}$ we distinguish three cases.

If $\ell_1 \geq 2$, we bound (proceeding as in Lemma \ref{lm:prel})  
\[ \begin{split} |\langle \xi , \text{P} \xi \rangle | &\leq \frac{C \kappa}{\sqrt{N}} \sum_{p,q \in \Lambda^*_+ : p \not = -q} |\eta_{p+q}|^{\ell_1} \, \| a_{-p} \Lambda_1 .. \Lambda_{i_1} N^{-k_1} \Pi^{(1)} (\eta^{j_1}, \dots , \eta^{j_{k_1}} ;  \ph_{\alpha_1 (p+q)}) \xi \| \| a_q \xi \| \\ &\leq C^r \kappa^{r+1} \sum_{p,q \in \Lambda^*_+ : p \not = -q} \frac{1}{(p+q)^{4}} \left\{ \| a_{-p} \xi \| + \frac{r}{N p^2} \| (\cN_+ +1)^{1/2} \xi \| \right\} \| a_q \xi \| 
\\ &\leq C^r (1+r/N) \kappa^{r+1} \| (\cN_+ +1)^{1/2} \xi \|^2 \end{split} \]

If $\ell_1 = 1$, we commute the operator $a_{-(p+q)}$ (or the $b_{-(p+q)}$ operator) appearing in the $\Pi^{(1)}$-operator in (\ref{eq:rsum}) to the right, and the operator $a_{-p}^*$ to the left (it is important to note that $[a_{-(p+q)}, a_{-p}^*] = 0$ since $q \not = 0$). We find
\[ \begin{split} 
|\langle \xi , \text{P} \xi \rangle | &\leq \frac{C^r \kappa^{r+1}}{\sqrt{N}} \sum_{p,q \in \Lambda^*_+: p \not = -q} |\widehat{V} (p/N)| \frac{1}{(p+q)^2} \, \Big\{ \frac{r}{N p^2} \| (\cN_+ +1) \xi \| \| a_q \xi \|  \\ & \hspace{3cm} + \frac{1}{N (p+q)^2} \| a_{-p} (\cN_+ +1)^{1/2} \xi \| \| a_q \xi \| + \| a_{-p} \xi \| \| a_{-(p+q)} a_q \xi \| \Big\}  \\ &\leq C^r \kappa^{r+1} \| (\cN_+ +1)^{1/2} \xi \|^2 \end{split} \]

Finally, if $\ell_1 = 0$ we only commute $a^*_{-p}$ to the left. We find (similarly as in Lemma~\ref{lm:prel})
\begin{equation}
\begin{split}  
|\langle \xi , \text{P} \xi \rangle | \leq \; & \left| \frac{\kappa}{\sqrt{N}} \sum_{p,q \in \Lambda^*_+ : p+q \not = 0} \widehat{V} (p/N) \langle \text{R} \, a_{p+q} a_{-p} \xi,  a_q \xi \rangle \right| \\ &+ \frac{C^r r \kappa^{r+1}}{N} \sum_{p,q \in \Lambda^*_+ : p \not = -q} \frac{|\widehat{V} (p/N)|}{p^2}  \| a_{p+q} \xi \| \| a_q \xi \| 
\end{split} 
\end{equation}
for an operator $\text{R}$ with $\| R \xi \| \leq C^r \kappa^r$. To bound the first term, we switch to position space. We find, similarly to (\ref{eq:riemann}), 
\[ \begin{split}  
|\langle \xi , \text{P} \xi \rangle | \leq \; &\kappa \int_{\Lambda \times \Lambda} dx dy \, N^{5/2} V(N(x-y)) \| \text{R} \, \check{a}_x \check{a}_y \xi \| \|\check{a}_x \xi \| \\ &+ \frac{C^r r \kappa^{r+1}}{N} \sum_{p,q \in \Lambda^*_+ : p \not = -q} \frac{|\widehat{V} (p/N)|}{p^2}  \| a_{p+q} \xi \| \| a_q \xi \| 
\\ \leq \; &C^r \kappa^{r+1/2} \| (\cN_+ +1)^{1/2} \xi \| \| \cV_N^{1/2} \xi \| + C^r r \kappa^{r+1} \| (\cN_+ +1)^{1/2} \xi \|^2   
\end{split} \]
{F}rom (\ref{eq:first-3}), summing over all $r \in \bN$, we conclude that the expectation of the first sum on the r.h.s. of (\ref{eq:GN3-in}) is bounded, if $\| V \|_1$ is small enough, by 
\[ \begin{split} \Big| \sum_{r \geq 0} \frac{(-1)^r}{r!} &\frac{\kappa}{\sqrt{N}} \sum_{p,q \in \Lambda^*_+ : p+q \not = 0} \widehat{V} (p/n) \langle \xi, \text{ad}^{(r)}_{B(\eta)} (b^*_{p+q}) a_{-p}^* a_q \xi \rangle \Big| \\ &\leq C \kappa \| (\cN_+ +1)^{1/2} \xi \|^2 + C \kappa^{1/2} \| (\cN_+ +1)^{1/2} \xi \| \| \cV_N^{1/2} \xi \| \end{split} \]
{F}rom (\ref{eq:GN3-in}), (\ref{eq:fir-3}), (\ref{eq:sec-3}) and the last equation, it follows that for every $\delta > 0$ there exists $C > 0$ such that 
\[ \pm \cG_N^{(3)} \leq \delta \cV_N + C \kappa (\cN_+ +1)\]
\end{proof}

\subsection{Analysis of $\cG_N^{(4)}$}
\label{sebsec:L4}

With $\cL^{(4)}_N$ as defined in (\ref{eq:cLNj}), we write
\begin{equation}\label{eq:def-E4} \begin{split} \cG_N^{(4)} = \; &e^{-B(\eta)} \cL^{(4)}_N e^{B(\eta)} \\ = \; &\cV_N + \frac{\kappa}{2N} \sum_{q\in \Lambda^*_+, r\in \Lambda^*: r\not = -q} \widehat{V} (r/N) \eta_{q+r} \eta_q \\ &+ \frac{\kappa}{2N} \sum_{q,r \in \Lambda^*_+ : r \not = -q} \widehat{V} (r/N) \, \eta_{q+r} \left(  b_q b_{-q} + b^*_q b^*_{-q} \right)  + \cE^{(4)}_N \end{split} \end{equation}
In the next proposition, we estimate the error term $\cE^{(4)}_N$. 
\begin{prop}\label{prop:E4}
Let the assumptions of Proposition \ref{prop:gene} be satisfied (in particular, suppose $\kappa \geq 0$ is small enough). Then, for every $\delta > 0$ there exists $C > 0$ such that, on $\cF_+^*$,  
\[ \pm \cE_N^{(4)} \leq \delta \cV_N + C \kappa  (\cN_+ +1) \]
\end{prop}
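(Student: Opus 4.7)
The proof of Proposition \ref{prop:E4} follows the same blueprint used in Propositions \ref{prop:K}, \ref{prop:cL2}, and \ref{prop:G3}. The plan is to expand $e^{-B(\eta)}\cV_N e^{B(\eta)}$ via the fundamental theorem of calculus,
\[ e^{-B(\eta)}\cV_N e^{B(\eta)} = \cV_N + \int_0^1 ds\, e^{-sB(\eta)}[\cV_N, B(\eta)]e^{sB(\eta)}, \]
and then iterate, extracting the explicit constant and quadratic terms in (\ref{eq:def-E4}) while controlling the remainder.

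I would work in position space, writing
\[ \cV_N = \frac{\kappa}{2}\int_{\Lambda\times\Lambda} N^2 V(N(x-y))\, \check{a}_x^*\check{a}_y^*\check{a}_y\check{a}_x\, dx\, dy. \]
Computing $[\cV_N, B(\eta)]$ using the commutation relations (\ref{eq:comm-bp})--(\ref{eq:comm-b2}), the leading double contractions of the pair $\check{a}_y\check{a}_x$ with $\check{b}_u^*\check{b}_v^*$ coming from $B(\eta)$ (and the mirror contraction for the $\check{b}_u\check{b}_v$ summand) yield, after changing variables, the quadratic contribution $\frac{\kappa}{N}\sum_{q,r:r\ne-q}\widehat{V}(r/N)\eta_{q+r}\,b_q^*b_{-q}^* + \mathrm{h.c.}$; evaluated at $s=0$ this reproduces the explicit quadratic in (\ref{eq:def-E4}). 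Iterating the FTC once more on this quadratic and tracking the non-normally-ordered pairs of the form $\sum_q \eta_q b_q b_q^* \sim \mathrm{const}$ produces the explicit constant $\frac{\kappa}{2N}\sum_{q,r:r\ne -q}\widehat{V}(r/N)\eta_{q+r}\eta_q$ appearing in (\ref{eq:def-E4}). Everything else constitutes $\cE_N^{(4)}$.

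To bound $\cE_N^{(4)}$, each surviving operator $e^{-sB(\eta)}(\cdot)e^{sB(\eta)}$ is expanded using Lemma \ref{lm:conv-series}, so that every remaining $\check{b}$ and $\check{b}^*$ becomes an absolutely convergent series of nested commutators $\mathrm{ad}_{B(\eta)}^{(n)}$. By Lemma \ref{lm:indu}, each term in the expansion is a product of $\Lambda$- and $\Pi^{(1)}$-operators. The matrix element $\langle \xi, \cdot\, \xi\rangle$ of each such term can then be estimated using the position-space bounds of Lemma \ref{lm:prel4}: the two surviving descendants of $\check{b}_y^*$ and $\check{b}_x$ acting on $\xi$ are controlled by $\|\check{a}_x\check{a}_y\xi\|$, $\|\check{a}_x(\cN_+{+}1)^{1/2}\xi\|$, or $\|(\cN_+{+}1)\xi\|$, depending on how many $\eta$-fields have been absorbed. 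Cauchy--Schwarz in the $(x,y)$ integration, exploiting the factor $N^2V(N(x-y))$ inherited from $\cV_N$, then yields
\[ |\langle \xi, \cE_N^{(4)}\xi\rangle| \leq C\kappa\,\|(\cN_+{+}1)^{1/2}\xi\|^2 + C\kappa^{1/2}\,\|(\cN_+{+}1)^{1/2}\xi\|\,\|\cV_N^{1/2}\xi\|, \]
exactly as at the end of the proofs of Propositions \ref{prop:K} and \ref{prop:cL2}. A final Cauchy--Schwarz, $\kappa^{1/2}\|(\cN_+{+}1)^{1/2}\xi\|\|\cV_N^{1/2}\xi\|\le \delta\|\cV_N^{1/2}\xi\|^2 + C_\delta\kappa\|(\cN_+{+}1)^{1/2}\xi\|^2$, gives the claimed operator inequality.

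The main obstacle is combinatorial: since $\cV_N$ is quartic, $[\cV_N, B(\eta)]$ generates a significantly larger family of terms than the quadratic commutators handled in Propositions \ref{prop:K} and \ref{prop:cL2}. Separating the contractions that reproduce the explicit constant and quadratic in (\ref{eq:def-E4}) from those that must be absorbed into $\cE_N^{(4)}$ requires careful bookkeeping, and so do the sub-leading corrections arising from non-canonical $[\check{a}, \check{b}]$ commutators through the $\sqrt{(N-\cN_+)/N}$ prefactors. As in the other subsections, the smallness of $\kappa$ is used only to ensure convergence of the infinite series produced by Lemma \ref{lm:conv-series}.
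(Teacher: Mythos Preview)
Your proposal is correct and follows essentially the same approach as the paper. The paper's proof organizes the computation by decomposing $e^{-B(\eta)}\cL_N^{(4)}e^{B(\eta)} - \cV_N$ into four explicit families $\text{W}_1,\dots,\text{W}_4$ (the single-contraction terms $b^*b^*a^*a$ that you flag as ``a significantly larger family'' are handled by applying the fundamental theorem of calculus once more to the surviving $a^*a$ factor before expanding via Lemma~\ref{lm:conv-series}), but the overall strategy, the extraction of the explicit constant and quadratic from the $(n,k)=(0,0),(0,1)$ sector of $\text{W}_1$, and the use of Lemmas~\ref{lm:indu} and~\ref{lm:prel4} together with Cauchy--Schwarz against $\cV_N^{1/2}$ are exactly as you describe.
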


\begin{proof}
We have
\begin{equation}\label{eq:quartic1} \begin{split} 
e^{-B(\eta)} &\cL^{(4)}_N e^{B(\eta)} \\ =\;& \frac{\kappa}{2N} \sum_{p,q \in \Lambda_+^*, r \in \Lambda^* : r \not = -p,q} \widehat{V} (r/N) e^{-B(\eta)} a_p^* a_q^* a_{q-r} a_{p+r}  e^{B(\eta)} \\ =\; & \cV_N + \frac{\kappa}{2N} \sum_{p,q \in \Lambda_+^*, r \in \Lambda^* : r \not = -p,q} \widehat{V} (r/N) \int_0^1 ds \,  e^{-sB(\eta)} \left[ a_p^* a_q^* a_{q-r} a_{p+r} , B(\eta) \right] e^{sB(\eta)}
 \\ = \; &\cV_N + \frac{\kappa}{2N} \sum_{q \in \Lambda_+^*, r \in \Lambda^* : r \not = -q} \widehat{V} (r/N) \eta_{q+r} \int_0^1 ds \, \left( e^{-sB(\eta)} b_q^* b_{-q}^* e^{sB(\eta)} + \text{h.c.} \right)  \\
& +\frac{\kappa}{N} \sum_{p,q \in \Lambda_+^* , r \in \Lambda^* : r \not = p,-q} \widehat{V} (r/N) \eta_{q+r} \int_0^1 ds \left( e^{-s B(\eta)} b_{p+r}^* b_q^* a^*_{-q-r} a_p e^{sB(\eta)} + \text{h.c.} \right) \end{split} \end{equation}
Now we observe that
\[ \begin{split}  e^{-sB(\eta)} a^*_{-q-r} a_p e^{s B(\eta)}  &=
a^*_{-q-r} a_p + \int_0^s d\tau \, e^{-\tau B(\eta)} \left[ a^*_{-q-r} a_p , B(\eta) \right] e^{-\tau B(\eta)} \\ &= a^*_{-q-r} a_p + \int_0^s d\tau \, e^{-\tau B(\eta)} \left( \eta_p b^*_{-p} b^*_{-q-r} + \eta_{q+r} b_p b_{q+r} \right) e^{-\tau B(\eta)} \end{split}  \]
Inserting in (\ref{eq:quartic1}) and using Lemma \ref{lm:conv-series}, we obtain 
\[  e^{-B(\eta)} \cL^{(4)}_N e^{B(\eta)} -\cV_N = \text{W}_1 + \text{W}_2 + \text{W}_3 + \text{W}_4 \]
where we defined 
\begin{equation}\label{eq:defW} \begin{split} \text{W}_1 = \; &
\sum_{n,k =0}^\infty \frac{(-1)^{n+k}}{n!k!(n+k+1)} 
 \\ &\hspace{.5cm} \times \frac{\kappa}{2N} \sum_{q \in \Lambda_+^*, r \in \Lambda^* : r \not = -q} \widehat{V} (r/N) \eta_{q+r} \left( \text{ad}^{(n)}_{B(\eta)} (b_q) \text{ad}^{(k)}_{B(\eta)} ( b_{-q}) + \text{h.c.} \right) \\ \text{W}_2 = \; & \sum_{n,k =0}^\infty \frac{(-1)^{n+k}}{n!k!(n+k+1)}  \\ &\hspace{.5cm} \times  \frac{\kappa}{N} \sum_{p,q \in \Lambda_+^* , r \in \Lambda^* : r \not = p,-q} \widehat{V} (r/N) \eta_{q+r} \, \left( \text{ad}^{(n)}_{B(\eta)} (b_{p+r}^*) \text{ad}^{(k)}_{B(\eta)} (b_q^*)  a^*_{-q-r} a_p + \text{h.c.} \right)  
\end{split} \end{equation} 
and 
\begin{equation}\label{eq:defW2}
\begin{split}
\text{W}_3 = \; &  \sum_{n,k,i,j =0}^\infty \frac{(-1)^{n+k+i+j}}{n!k!i!j!(i+j+1)(n+k+i+j+2)} \\ &\hspace{1cm}  \frac{\kappa}{N} \sum_{p,q\in \Lambda^*_+, r \in \Lambda^* : r \not = -p -q} \widehat{V} (r/N) \eta_{q+r} \eta_p  \\ &\hspace{2cm} \times \left( \text{ad}^{(n)}_{B(\eta)} (b^*_{p+r}) \text{ad}^{(k)}_{B(\eta)} (b^*_q) \text{ad}^{(i)}_{B(\eta)} (b_{-p}^*) \text{ad}^{(j)}_{B(\eta)} (b_{-q-r}^*) + \text{h.c.} \right) \\ \text{W}_4 = \; & \sum_{n,k,i,j =0}^\infty \frac{(-1)^{n+k+i+j}}{n!k!i!j!(i+j+1)(n+k+i+j+2)} \\ &\hspace{1cm} \times  \frac{\kappa}{N} \sum_{p,q\in \Lambda^*_+, r \in \Lambda^* : r \not = -p -q} \widehat{V} (r/N) \eta^2_{q+r}  \\ & \hspace{2cm} \times  \left( \text{ad}^{(n)}_{B(\eta)} (b^*_{p+r}) \text{ad}^{(k)}_{B(\eta)} (b^*_q) \text{ad}^{(i)}_{B(\eta)} (b_{p}) \text{ad}^{(j)}_{B(\eta)} (b_{q+r}) + \text{h.c.} \right) 
  \end{split} \end{equation}
We consider, first of all, the expectation of the term $\text{W}_2$. Since we will need the potential energy operator to bound this term, it is convenient to switch to position space. On $\cF^{\leq N}_+$, we find
\begin{equation} \begin{split} \text{W}_2 = \; &\sum_{n,k= 0}^\infty \frac{(-1)^{n+k}}{n!k!(n+k+1)} \\ & \times \kappa \int_{\Lambda \times \Lambda} dx dy N^2 V(N(x-y)) \left( \text{ad}^{(n)}_{B(\eta)} (\check{b}^*_x) \text{ad}_{B(\eta)}^{(k)} (\check{b}^*_y) a^* (\check{\eta}_x) \check{a}_y  + \text{h.c.} \right) \end{split} 
\end{equation}
with the notation $\check{\eta}_x (z) = \check{\eta} (x-z)$. With  Cauchy-Schwarz, we find
\[ \begin{split} 
\Big| \kappa \int_{\Lambda \times \Lambda} dx dy \, &N^2 V(N(x-y)) \langle \xi , \text{ad}^{(n)}_{B(\eta)} (\check{b}_x^*) \text{ad}^{(k)}_{B(\eta)} (\check{b}_y^*) a^* (\check{\eta}_x) \check{a}_y \xi \rangle \Big| \\ 
&\leq \kappa  \int_{\Lambda \times \Lambda} dx dy \, N^2 V(N(x-y)) \\ &\hspace{1cm} \times  \| (\cN_+ +1)^{1/2} \text{ad}^{(k)}_{B(\eta)} (\check{b}_y) \text{ad}^{(n)}_{B(\eta)} (\check{b}_x) \xi \| \| (\cN_+ +1)^{-1/2} a^* (\check{\eta}_x) \check{a}_y \xi \| 
\end{split} \]
We bound
\[ \| (\cN_+ +1)^{-1/2} a^* (\check{\eta}_x) \check{a}_y \xi \| \leq C \kappa \| \check{a}_y \xi \|  \]
With Lemma \ref{lm:indu}, we estimate $\| (\cN_+ +1)^{1/2} \text{ad}^{(k)}_{B(\eta)} (\check{b}_y) \text{ad}^{(n)}_{B(\eta)} (\check{b}_x) \xi \|$ by the sum of $2^{n+k} n!k!$ terms of the form 
\begin{equation}\label{eq:norm-T} \begin{split} 
\text{T} = &\left\| (\cN_+ +1)^{1/2} \Lambda_1 \dots \Lambda_{i_1} N^{-k_1} \Pi^{(1)}_{\sharp, \flat} (\eta^{j_1}, \dots , \eta^{j_{k_1}} ; \check{\eta}_{y}^{\ell_1}) \right. \\ &\hspace{3cm} \left. \times  \Lambda'_1 \dots \Lambda'_{i_2} N^{-k_2} \Pi^{(1)}_{\sharp,\flat} (\eta^{m_1} , \dots, \eta^{m_{k_2}} ; \check{\eta}^{\ell_2}_{x}) \xi \right\| \end{split} \end{equation}
with $i_1, i_2, k_1, k_2, \ell_1, \ell_2 \geq 0$, $j_1, \dots , j_{k_1}, m_1, \dots , m_{k_2} \geq 0$ and where each $\Lambda_i$ and $\Lambda'_i$ operator is either a factor $(N-\cN_+ )/N$, $(N-\cN_+ +1)/N$ or a $\Pi^{(2)}$-operator (here $\check{\eta}^{\ell_1}$ indicates the function with Fourier coefficients given by $\eta^{\ell_1}_p$, for all $p \in \Lambda^*_+$).  

With Lemma \ref{lm:prel4}, we find 
\begin{equation}\label{eq:Tf} \begin{split} \text{T} &\leq (n+1) C^{k+n} \kappa^{k+n} \Big\{ \| (\cN_+ +1)^{3/2} \xi \| + \| \check{a}_y (\cN_+ +1) \xi \| + \| \check{a}_x (\cN_+ +1) \xi \| \\ &\hspace{5cm} + N \| (\cN_+ +1)^{1/2} \xi \| + \sqrt{N} \| \check{a}_x \check{a}_y \xi \| \Big\}  \end{split} \end{equation}
For $\xi \in \cF^{\leq N}_+$, we obtain 
\[ \begin{split} 
&\left| \kappa \int_{\Lambda \times \Lambda} dx dy N^2 V(N(x-y)) \langle \xi,  \text{ad}^{(n)}_{B(\eta)} (\check{b}_x^*) \text{ad}^{(k)}_{B(\eta)} (\check{b}^*_y)  a^* (\check{\eta}_x) \check{a}_y \xi \rangle \right| \\ &\hspace{3cm} \leq  (n+1)!k! \, C^{n+k} \kappa^{n+k+2}  \int_{\Lambda \times \Lambda}  dx dy \, N^2 V(N(x-y))  \| a_y \xi \| \\ &\hspace{4cm} \times \Big\{ N \| (\cN_+ +1)^{1/2} \xi \| + N  \| \check{a}_x \xi \| + N \| \check{a}_y \xi \| + N^{1/2}  \| \check{a}_x \check{a}_y \xi \| \Big\} 
\\ &\hspace{3cm} \leq (n+1)! k! \, C^{n+k} \kappa^{n+k+3/2} \| (\cN_+ +1)^{1/2} \xi \| \| (\cV_N + \cN_+  + 1 )^{1/2} \xi \| \end{split} \] 
and therefore, if $\kappa$ is small enough, 
\begin{equation}\label{eq:W2-end}|\langle \xi , \text{W}_2 \xi \rangle | \leq  C \kappa^2 \| (\cN_+ +1)^{1/2} \xi \|^2 + C \kappa^{3/2} \| (\cN_+ +1)^{1/2} \xi \| \| \cV_N^{1/2} \xi \| \, .  \end{equation}

Next, let us consider the term $\text{W}_3$, defined in (\ref{eq:defW2}). As above, we switch to position space. We find
\begin{equation}\label{eq:W3}
\begin{split}  \text{W}_3 = &\; \sum_{n,k,i,j = 0}^\infty \frac{(-1)^{n+k+i+j}}{n!k!i!j! (i+j+1)(n+k+i+j+2)} \\ & \times \kappa \int dx dy \, N^2 V(N(x-y)) \\ &\hspace{1cm} \times  \left( \text{ad}^{(n)} (\check{b}^*_x) \text{ad}^{(k)}_{B(\eta)} (\check{b}^*_y) \text{ad}_{B(\eta)}^{(i)} (b^* (\check{\eta}_x)) \text{ad}^{(j)}_{B(\eta)} (b^* (\check{\eta}_y)) + \text{h.c.} \right) \end{split} \end{equation}
With Cauchy-Schwarz, we have
\[ \begin{split} & \left| \kappa \int dx dy N^2 V(N(x-y)) \langle \xi , \text{ad}^{(n)}_{B(\eta)} (\check{b}_x^*) \text{ad}^{(k)}_{B(\eta)} (\check{b}_y^*) \text{ad}^{(i)}_{B(\eta)} (\check{b}^* (\check{\eta}_x)) \text{ad}^{(j)}_{B(\eta)}  (\check{b} (\check{\eta}_y)) \xi \rangle \right| \\
&\hspace{2cm} \leq \kappa \int dx dy \, N^2 V(N(x-y)) \, \| (\cN_+ +1)^{1/2} \text{ad}^{(k)}_{B(\eta)} (\check{b}_y)  \text{ad}^{(n)}_{B(\eta)} (\check{b}_x) \xi \| \\ &\hspace{5cm} \times  \|  (\cN_+ +1)^{-1/2} 
\text{ad}^{(i)}_{B(\eta)} (b (\check{\eta}_x)) \text{ad}^{(j)} (b (\check{\eta}_y)) \xi \| 
\end{split} \]
Expanding $\text{ad}^{(i)}_{B(\eta_t)} (b (\check{\eta}_x)) \text{ad}^{(j)} (b (\check{\eta}_y))$ as in Lemma \ref{lm:indu} and using Lemma \ref{lm:prel4} (with $\ell_1$ and $\ell_2$ replaced by $\ell_1 + 1$ and $\ell_2 +1$, so that we can always use the inequality (\ref{eq:S1})), we obtain  
\begin{equation}\label{eq:W3-1} \begin{split} 
\| (\cN_+ +1)^{-1/2} &\text{ad}^{(i)}_{B(\eta)} (b (\check{\eta}_x)) \text{ad}^{(j)} (b (\check{\eta}_y)) \xi \| \leq i!j! \, C^{i+j} \kappa^{i+j+2}   \| (\cN_+ +1)^{1/2} \xi \|  \end{split} \end{equation}
As for the norm $\| (\cN_+ +1)^{1/2} \text{ad}^{(k)}_{B(\eta)} (\check{b}_y)  \text{ad}^{(n)}_{B(\eta)} (\check{b}_x) \xi \|$, we can estimate by the sum of $2^{n+k} n!k!$ contributions of the form (\ref{eq:norm-T}). With (\ref{eq:Tf}), we conclude that, if $\kappa$ is small enough,  
\begin{equation}\label{eq:W3end} |\langle \xi , \text{W}_3 \xi \rangle | \leq C \kappa^2 \| (\cN_+ +1)^{1/2} \xi \|^2 + C \kappa^{3/2} \| (\cN_+ +1)^{1/2} \xi \| \| \cV_N^{1/2} \xi \| \end{equation}

The term $\text{W}_4$ in (\ref{eq:defW2}) can be bounded similarly. First, we switch to position space. We find
\begin{equation}\label{eq:W4} \begin{split} \text{W}_4 = \; &\sum_{n,k,i,j =0}^\infty \frac{(-1)^{n+k+i+j}}{n!k!i!j! (i+j+1) (n+k+i+j+2)} \\ &\times \kappa \int dxdy \, N^2 V(N(x-y)) \, \left( \text{ad}^{(n)} (\check{b}_x) \text{ad}^{(k)} (\check{b}_y) \text{ad}^{(i)} (b (\check{\eta}^2_x)) \text{ad}^{(j)} (\check{b}_y) + \text{h.c.} \right) \end{split} \end{equation}
The expectation of the operators on the r.h.s. of (\ref{eq:W4}) can be bounded similarly as we did for the operators on the r.h.s. of (\ref{eq:W3}). The only difference is the fact that now we have to replace the estimate (\ref{eq:W3-1}) with 
\[ \| (\cN_+ +1)^{-1/2} \text{ad}^{(i)} (b (\check{\eta}^2_x)) \text{ad}^{(j)} (\check{b}_y) \xi \| \leq i! j! C^{i+j} \kappa^{i+j+2}  \left[ \| (\cN_+ +1)^{1/2} \xi \| + \| a_y \xi \| \right] \] 
We arrive at
\begin{equation}\label{eq:W4end} |\langle \xi , \text{W}_4 \xi \rangle | \leq C \kappa^2 \| (\cN_+ +1)^{1/2} \xi \|^2 + C \kappa^{3/2} \| (\cN_+ +1)^{1/2} \xi \| \| \cV_N^{1/2} \xi \| \end{equation}

Finally, we consider the term $\text{W}_1$ in (\ref{eq:defW}). 
Here, we separate contributions with $(n,k) = (0,0), (0,1)$ by writing: 
\begin{equation}\label{eq:W1-sep} \begin{split} 
\text{W}_1 = \; &\frac{\kappa}{2N} \sum_{q\Lambda_+^*, r\in \Lambda^* : r \not = -q} \widehat{V} (r/N) \eta_{r+q} ( b_q b_{-q} + \text{h.c.}) \\ &- \frac{\kappa}{4N} \sum_{q\in \Lambda^*_+, r\in \Lambda^*: r \not = -q} \widehat{V} (r/N) \eta_{q+r} \left( b_q \left[ B(\eta) , b_{-q} \right] + \text{h.c.} \right)  + \wt{W}_1 \end{split} \end{equation}
where 
\begin{equation}\label{eq:wtW1} 
\wt{W}_1 = \sum_{n,k}^* \frac{(-1)^{n+k}}{n!k!(n+k+1)} \frac{\kappa}{2N} \sum_{q\in \Lambda^*_+, r\in \Lambda^* : r \not = -q} \widehat{V} (r/N) \eta_{q+r} \left( \text{ad}^{(n)}_{B(\eta)} (b_q) \text{ad}^{(k)}_{B(\eta)} (b_{-q}) + \text{h.c.} \right) \end{equation}
and where the sum $\sum^*_{n,k}$ runs over all pairs $(n,k) \not = (0,0), (0,1)$. 

The first term on the r.h.s. of (\ref{eq:W1-sep}) does not enter the definition (\ref{eq:def-E4}) of the error term $\cE_N^{(4)}$. We do not have to estimate it. As for the second term on the r.h.s. of (\ref{eq:W1-sep}), we compute the commutator
\[ [B(\eta), b_{-q} ] = -\eta_q(1-\cN_+ /N) b_q^*   + \frac{1}{N} \sum_{m \in \Lambda^*_+}  \eta_m b^*_m a^*_{-m} a_{-q} \]
Hence
\[ \begin{split} \frac{\kappa}{N} \sum_{q\in \Lambda^*_+, r\in \Lambda^* : r \not = -q} \widehat{V} (r/N) \eta_{r+q} &b_q \left[ B(\eta), b_{-q} \right] \\ = \; &-\frac{\kappa}{N} \sum_{q\in \Lambda^*_+ , r\in \Lambda^*: r \not = -q} \widehat{V} (r/N) \eta_{r+q} \eta_q b_q b_q^* \left( 1 - \frac{\cN_+ +1}{N} \right) \\ &+ \frac{\kappa}{N^2} \sum_{q,m \in \Lambda^*_+, r\in \Lambda^* :r \not = -q} \widehat{V} (r/N) \eta_{r+q} \eta_m  b_q b_m^* a^*_{-m} a_{-q} \end{split} \]
and therefore
\[ \begin{split} \frac{\kappa}{N} \sum_{q \in \Lambda^*_+ , r\in \Lambda^*: r \not = -q} \widehat{V} (r/N)& \eta_{r+q} b_q \left[ B(\eta), b_{-q} \right] \\ = \; &-\frac{\kappa}{N} \sum_{q \in \Lambda^*_+ , r\in \Lambda^*: r \not = -q} \widehat{V} (r/N) \eta_{r+q} \eta_q \left( 1-\frac{\cN_+ }{N} \right) \left( 1 - \frac{\cN_+ +1}{N} \right) \\ &+\frac{2\kappa}{N^2} \sum_{q \in \Lambda^*_+, r\in \Lambda^* : r \not = -q} \widehat{V} (r/N) \eta_{r+q} \eta_q a_q^* a_q \left( 1 - \frac{\cN_+ +1}{N} \right) \\ &+ \frac{\kappa}{N^3} \sum_{q,m \in \Lambda^*_+, r\in \Lambda^* :r \not = -q} \widehat{V} (r/N) \eta_{r+q} \eta_m  a_m^* a^*_{-m} a_q a_{-q}  
\end{split} \]
We conclude that
\[ \begin{split} \frac{\kappa}{N} \sum_{q \in \Lambda^*_+, r\in \Lambda^* : r \not = -q} \widehat{V} (r/N) \eta_{r+q} b_q \left[ B(\eta), b_{-q} \right] + &\frac{\kappa}{N} \sum_{q \in \Lambda^*_+, r\in \Lambda^* : r \not = -q} \widehat{V} (r/N) \eta_{r+q} \eta_q \\ &\hspace{3cm} = \text{T}_1 + \text{T}_2 + \text{T}_3 \end{split} \]
with 
\[ \begin{split}  \text{T}_1 &= \frac{\kappa}{N^2} \sum_{q \in \Lambda^*_+, r\in \Lambda^* : r \not = -q} \widehat{V} (r/N) \eta_{r+q} \eta_q (2\cN_+ +1 + \cN_+ /N + \cN_+ ^2/N )  \\
\text{T}_2 &= \frac{2\kappa}{N^2} \sum_{q\in \Lambda^*_+, r\in \Lambda^* : r \not = -q} \widehat{V} (r/N) \eta_{r+q} \eta_q a_q^* a_q \left( 1 - \frac{\cN_+ +1}{N} \right) \\
\text{T}_3 &= \frac{\kappa}{N^3} \sum_{q,m \in \Lambda^*_+, r\in \Lambda^* :r \not = -q} \widehat{V} (r/N) \eta_{r+q} \eta_m  a_m^* a^*_{-m} a_q a_{-q} \end{split} \]
Since 
\begin{equation}\label{eq:bd1} \frac{\kappa}{N^2} \sum_{q \in \Lambda^*_+, r\in \Lambda^* : r \not = -q} \widehat{V} (r/N) \frac{1}{(r+q)^2 q^2} \leq C < \infty \end{equation}
uniformly in $N$, we easily find
\[ |\langle \xi , \text{T}_1 \xi \rangle | \leq C \kappa \| (\cN_+ +1)^{1/2} \xi \|^2 \]
Furthermore,
\[ | \langle \xi , \text{T}_2 \xi \rangle | \leq \frac{2\kappa^3}{N^2} \sum_{q \in \Lambda^*_+, r\in \Lambda^* : r \not = -q} |\widehat{V} (r/N)| \frac{1}{(r+q)^2 q^2}  \| a_q \xi \|^2 \leq C N^{-1} \kappa^3 \| (\cN_+ +1)^{1/2} \xi \|^2 \]
Finally, we consider the term $\text{T}_3$. To this end, we switch to position space. We find
\[ \begin{split} \text{T}_3 &= \frac{\kappa}{N^3} \sum_{q,m \in \Lambda^*_+, r\in \Lambda^* :r \not = -q} \widehat{V} (r/N) \eta_{r+q} \eta_m  a_m^* a^*_{-m} a_q a_{-q} \\ &=  \kappa \int_{\Lambda \times \Lambda} dx dy \, V(N(x-y)) \check{\eta} (x-y) B \check{a}_x \check{a}_y \end{split} \]
where $B = \sum_{m\in \Lambda_+^*} \eta_m a_m^* a_{-m}^*$. Since $\| B^* \xi \| \leq C \kappa \| (\cN_+ +1) \xi \|$, we obtain
\[ \begin{split}  | \langle \xi , \text{T}_3 \xi \rangle | &\leq C \kappa^2  \| (\cN_+ +1)^{1/2} \xi \| \int_{\Lambda \times \Lambda} dxdy \, N^{1/2} V(N(x-y)) |\check{\eta} (x-y)| \| \check{a}_x \check{a}_y \xi \| \\ 
&\leq C \kappa^2 \| (\cN_+ +1)^{1/2} \xi \| \int_{\Lambda \times \Lambda} N^{3/2} V(N(x-y))  \| a_x a_y \xi \| \\ &\leq C N^{-1} \kappa^{3/2} \| (\cN_+ +1)^{1/2} \xi \|  \| \cV_N^{1/2} \xi \| \end{split} \]

Let us now focus on the expectation of (\ref{eq:wtW1}). 
According to Lemma \ref{lm:indu}, the operator
\[ \frac{\kappa}{N} \sum_{q \in \Lambda^*_+, r\in \Lambda^* : r \not = -q} \widehat{V} (r/N) \eta_{q+r} \text{ad}^{(n)} (b_q) \text{ad}^{(k)} (b_{-q})  \]
can be written as the sum of $2^{n+k} n!k!$ terms having the form
\[ \begin{split} \text{X} = \frac{\kappa}{N} \sum_{q\in \Lambda^*_+, r\in \Lambda^* : r \not = -q} \widehat{V} (r/N) \eta_{q+r} \Lambda_1 &\dots \Lambda_{i_1} N^{-k_1} \Pi^{(1)}_{\sharp,\flat} (\eta^{j_1} , \dots , \eta^{j_{k_1}} ; \eta^{\ell_1}_q \ph_{\alpha_1 q} ) \\ &\times \Lambda'_1 \dots \Lambda'_{i_2} N^{-k_2} \Pi^{(1)}_{\sharp', \flat'} (\eta^{m_1} , \dots , \eta^{m_{k_2}} ; \eta^{\ell_2}_q \ph_{-\alpha_2 q} ) \end{split} \]
where $i_1, i_2, k_1, k_2, \ell_1, \ell_2 \in \bN$, $j_1, \dots , j_{k_1}, m_1, \dots , m_{k_2} \in \bN \backslash \{0 \}$, $\alpha_i = 1$ if $\ell_i$ is even and $\alpha_i = -1$ if $\ell_i$ is odd. To bound the expectation of the operator $\text{X}$, we distinguish two cases.

If $\ell_1 + \ell_2 \geq 1$, we use Lemma \ref{lm:prel} to estimate
\[ \begin{split} |\langle \xi , \text{X} \xi  \rangle | \leq \; & C^{n+k} \kappa^{n+k+2} \| (\cN_+ +1)^{1/2} \xi \| N^{-1} \sum_{q,r \in \Lambda^*_+ : r \not = -q} \frac{|\widehat{V} (r/N)|}{(q+r)^2} \\ &\times \left\{ \frac{1}{q^4} (1+k/N) \| (\cN_+ +1)^{1/2} \xi \| + \frac{1}{q^2} \| a_q \xi \| + \frac{1}{Nq^2} \| (\cN_+ +1)^{1/2} \xi \| \right\} \end{split} \]
Here we used the fact that we excluded the pairs $(n,k) = (0,0), (0,1)$ to make sure that, if $\ell_1 =0$ and $\ell_2 = 1$, then either $k_1 > 0$ or $k_2 > 0$ or at least one of the operators $\Lambda$ or $\Lambda'$ has to be a $\Pi^{(2)}$-operator. {F}rom (\ref{eq:bd1}) and from the similar bound 
\[ \sup_{q} \frac{1}{N} \sum_r |\widehat{V} (r/N)| \frac{1}{(q+r)^2} \leq C < \infty \]
uniformly in $N$, we conclude that, for $\ell_1 + \ell_2 \geq 1$, 
\begin{equation}\label{eq:X1} |\langle \xi, \text{X} \xi \rangle | \leq C \kappa^{2} \| (\cN_+ +1)^{1/2} \xi \|^2 \end{equation}

For $\ell_1 = \ell_2 = 0$, we use Lemma \ref{lm:prel} to write 
\[ \text{X} = \frac{\kappa}{N} \sum_{q\in \Lambda^*_+, r\in \Lambda^*} \widehat{V}(r/N) \eta_{q+r} \left[ A_{q} + B a_q a_{-q} \right] =: \text{X}_1 + \text{X}_2 \]
where 
\[ |\langle \xi , \text{A}_q \xi \rangle | \leq C^{n+k} \kappa^{n+k} \frac{k}{N q^2} \| (\cN_+ +1)^{1/2} \xi \|^2 \]
and (since we excluded the term with $(n,k) = (0,0)$) 
\[ \| B^* \xi \| \leq C^{n+k} N^{-1}  \kappa^{n+k} \| (\cN_+ +1) \xi \| \] 
We immediately obtain that
\[ \begin{split} 
|\langle \xi , \text{X}_1 \xi \rangle | &\leq \frac{C^{n+k} \kappa^{n+k+2}}{N^2} \sum_{q\in \Lambda^*_+, r\in \Lambda^*} \widehat{V} (r/N) \frac{1}{(q+r)^2 q^2} \| (\cN_+ +1)^{1/2} \xi \|^2 \\ &\leq C^{n+k} \kappa^{n+k+2} \| (\cN_+ +1)^{1/2} \xi \|^2 \end{split} \]
and, switching to position space,  
\[ \begin{split} 
| \langle \xi , \text{X}_2 \xi \rangle |  &=  \Big| \kappa \int_{\Lambda\times \Lambda} dx dy \, N^2 V(N(x-y)) \eta (x-y) \langle B^* \xi ,  \check{a}_x \check{a}_y \xi \rangle  \Big|\\ &\leq \kappa \int_{\Lambda \times \Lambda} dx dy N^2 V(N(x-y)) |\eta (x-y)| \| \check{a}_x \check{a}_y \xi \| \| B^* \xi \| \\ &\leq C \kappa^{n+k+2} \| (\cN_+ +1)^{1/2} \xi \| \int_{\Lambda \times \Lambda}  dx dy N^{5/2} V(N(x-y)) \| \check{a}_x \check{a}_y \xi \| 
\\ &\leq C \kappa^{n+k+3/2}  \| (\cN_+ +1)^{1/2} \xi \| \| \cV_N^{1/2} \xi \| \end{split} \] 
Combining the last two bounds with (\ref{eq:X1}), and then summing over all $n,k$, we find 
\[ |\langle \xi , \wt{W}_1 \xi \rangle | \leq C \kappa^2 \| (\cN_+ +1)^{1/2} \xi \|^2 + C \kappa^{3/2} \| (\cN_+ +1)^{1/2} \xi \| \| \cV_N^{1/2}\xi \|  \]
With (\ref{eq:defW}), (\ref{eq:defW2}), (\ref{eq:W2-end}), (\ref{eq:W3end}), (\ref{eq:W4end}), we conclude the proof of the proposition. 
\end{proof}

\subsection{Proof of Proposition \ref{prop:gene}}
\label{sub:proof}

Combining the results of Prop. \ref{prop:E0}, Prop. \ref{prop:K}, Prop. \ref{prop:cL2}, Prop. \ref{prop:G3} and Prop. \ref{prop:E4}, we conclude that the excitation Hamiltonian $\cG_N$ defined in (\ref{eq:GN}) is such that
\[ \begin{split} 
\cG_N = \; &\frac{(N-1)}{2} \kappa \widehat{V} (0) + \sum_{p\in \Lambda^*_+} \eta_p \left[ p^2 \eta_p + \kappa \widehat{V} (p/N)  + \frac{\kappa}{2N} \sum_{r \in \Lambda^* :r \not = -p} \widehat{V} (r/N) \eta_{p+r} \right]  \\ &+  \sum_{p \in \Lambda^*_+} \left[ p^2 \eta_p + \kappa \frac{\widehat{V} (p/N)}{2} + \frac{\kappa}{2N} \sum_{r \in \Lambda^* : r \not = -q} \widehat{V} (r/N) \eta_{p+r} \right] \left[ b_p b_{-p} + b^*_p b^*_{-p} \right] \\ &+\cK + \cV_N + \cE_N 
\end{split} \]
where the operator $\cE_N$ is such that, for all $\delta > 0$ there exists $C > 0$ with 
\[ \pm \cE_N \leq \delta (\cK + \cV_N) + C \kappa (\cN_+ +1) \]
With (\ref{eq:eta-scat}), we obtain 
\begin{equation}\label{eq:GNaf} \begin{split} \cG_N = \; &\frac{(N-1)}{2} \kappa \widehat{V} (0) + \frac{\kappa}{2} \sum_{p \in \Lambda^*_+} \widehat{V} (p/N) \eta_p + \cK + \cV_N + \cE_N \\ &+ \sum_{p\in \Lambda^*_+} \eta_p \left[  - \kappa \frac{\widehat{V} (p/N) \eta_0}{2N} + \lambda_\ell N^3 \widehat{\chi}_\ell (p) + \lambda_\ell N^2 \sum_{q \in \Lambda^*} \widehat{\chi}_\ell (p-q) \eta_q \right]  \\ 
&+ \sum_{p \in \Lambda^*_+} \left[ N^3 \lambda_\ell \widehat{\chi}_\ell (p) + N^2 \lambda_\ell \sum_{q \in \Lambda^*} \widehat{\chi}_\ell (p-q) \eta_q - \frac{\kappa}{2N} \widehat{V} (p/N) \eta_0 \right] (b_p b_{-p} + b_p^* b_{-p}^*)  \end{split} \end{equation}
With the definition (\ref{eq:ktdef}) and with the estimate (\ref{eq:wteta0}) we find that
\[ \begin{split} \left| \frac{(N-1)}{2} \kappa \widehat{V} (0) + \frac{\kappa}{2} \sum_{p \in \Lambda_+^*} \widehat{V} (p/N) \eta_p - \frac{N\kappa}{2} \int \, N^3 V(Nx) f_\ell (Nx) dx \right| \leq C \kappa \end{split} \]
With the approximate identity (\ref{eq:Vfa0}), we conclude that
\[  \left| \frac{(N-1)}{2} \kappa \widehat{V} (0) + \frac{\kappa}{2} \sum_{p \in \Lambda_+^*} \widehat{V} (p/N) \eta_p - 4 \pi a_0 N \right| \leq C \kappa \, . \]

As for the terms on the second line on the r.h.s. of (\ref{eq:GNaf}), they are all at most of order one. The first term can be estimated with (\ref{eq:wteta0}) by 
\[ \Big| \frac{\kappa}{N} \sum_{p \in \Lambda^*_+} \widehat{V} (p/N) \eta_p \eta_0 \Big| \leq \frac{C \kappa^3}{N} \sum_{p \in \Lambda^*_+} \frac{\widehat{V} (p/N)}{p^2} \leq C \kappa^3  \]
similarly to (\ref{eq:riemann}). The second term can be controlled 
using Lemma \ref{3.0.sceqlemma}, part i), which implies that $\lambda_\ell N^3 \leq C \kappa$. We find 
\[ N^3 \lambda_\ell \sum_{p \in \Lambda_+^*} \widehat{\chi}_\ell (p) \eta_p \leq C \kappa \| \chi_\ell \| \| \eta \| \leq C \kappa^2 \]
As for the third term, we use again the bound $N^3 \lambda_\ell \leq C \kappa$ to estimate
\[ \Big| \lambda_\ell N^2 \sum_{p \in \Lambda^*_+, q \in \Lambda^*} \widehat{\chi}_\ell (p-q) \eta_q \eta_p \Big|  \leq C N^{-1} \kappa \| \eta \|^2 \leq C N^{-1} \kappa^3 \]

Next, we bound the expectation of the operator on the last line on the r.h.s. of (\ref{eq:GNaf}). The first contribution can be estimated by 
\[ \begin{split} \Big| N^3 \lambda_\ell \sum_{p \in \Lambda^*_+} \widehat{\chi}_\ell (p) \langle \xi , b_p b_{-p} \xi \rangle \Big| &\leq C \kappa \| (\cN_+ +1)^{1/2} \xi \| \sum_{p \in \Lambda^*_+} |\widehat{\chi}_\ell (p)  \| a_{-p} \xi \| \\ &\leq C \kappa \| \chi_\ell \| \| (\cN_+ +1)^{1/2} \xi \|^2 \leq C \kappa \| (\cN_+ +1)^{1/2} \xi \|^2 \end{split} \]
Similarly, 
\[ \begin{split} \Big| N^2 \lambda_\ell \sum_{p \in \Lambda^*_+, q \in \Lambda^*} \widehat{\chi}_\ell (p-q) \eta_q \langle \xi , b_p b_{-p} \xi \rangle \Big| &\leq C N^{-1} \kappa \| \widehat{\chi}_\ell * \eta \| \| (\cN_+ +1)^{1/2} \xi \|^2 \\ &\leq C N^{-1} \kappa \| (\cN_+ +1)^{1/2} \xi \|^2 \end{split}  \]
Finally, to estimate the contribution of the last term on the last line on the r.h.s. of (\ref{eq:GNaf}), we switch to position space. We find
\[ \begin{split}  \Big| \frac{\kappa}{2N} \sum_{p \in \Lambda_+^*} \widehat{V} (p/N) \eta_0 \langle \xi , b_p b_{-p} \xi \rangle \Big| &\leq C \kappa \int dx dy \, N^2 V (N(x-y)) \| \check{a}_x \check{a}_y \xi \| \| \xi \| \\ &\leq C N^{-1/2} \kappa^{3/2} \| \cV_N^{1/2} \xi \| \| \xi \| \end{split} \]

We conclude that  
\[ \cG_N = 4 \pi a_0 N + \cK + \cV_N + \wt{\cE}_N \]
where the error term $\wt{\cE}_N$ is such that, for all $\delta > 0$ there exists a constant $C > 0$ such that 
\begin{equation}\label{eq:wtEN} \pm \wt{\cE}_N \leq \delta (\cK + \cV_N) + C \kappa (\cN_+  + 1) \end{equation}
The statement of Prop. \ref{prop:gene} now follows by the remark that, on $\cF_+^{\leq N}$, $\cN_+  \leq (2\pi)^{-2} \cK$ (i.e. the kinetic energy operator on $\cF^{\leq N}_+$ is gapped). Taking for example $\delta = 1$ in (\ref{eq:wtEN}), we find
\[ \cG_N \leq 4\pi a_0 N + 2(\cK+\cV_N) + C (\cN_+ +1) \leq 4 \pi a_0 N + C (\cK+\cV_N + 1) \]
Taking instead $\delta = 1/3$, we find the lower bound
\[ \cG_N \geq 4\pi a_0 N + \frac{2}{3} (\cK+\cV_N) - C \kappa (\cN_+ +1) \geq 4\pi a_0 N + \left[ \frac{2}{3} - \frac{C\kappa}{(2\pi)^2} \right] (\cK + \cV_N) - C  \]
Now, if $\kappa \geq 0$ is small enough, we obtain that 
\[ \cG_N \geq 4\pi a_0 N + \frac{1}{2} (\cK +\cV_N) - C \geq 4\pi a_0 N + 2 \pi^2 \cN_+ - C \]
which completes the proof of Prop. \ref{prop:gene}.

\medskip

{\it Acknowledgement.} B.S. gratefully acknowledge support from the NCCR SwissMAP and from the Swiss National Foundation of Science through the SNF Grants ``Effective equations from quantum dynamics'' and ``Dynamical and energetic properties of Bose-Einstein condensates''.

\end{document}